\newtheorem{theorem}{Theorem}
\newtheorem{definition}[theorem]{Definition}
\newtheorem{lemma}[theorem]{Lemma}
\theoremstyle{remark}
\newcommand*{\pron}[1]{{{\cal P}_{n=#1}}}
\newcommand*{\proP}[1]{{{\cal P}_{P=#1}}}
\newcommand*{\Lleft}{{L_{l}}}
\newcommand*{\Lright}{{L_{r}}}
\newcommand*{\Lleader}{{L_{se}}} 
\newcommand*{\Lsleft}{{L^t_{l}}}
\newcommand*{\Lsright}{{L^t_{r}}}
\newcommand*{\Lsleader}{{L_{se'}}}
\newcommand*{\Leleader}{{L^t_{se}}}
\newcommand*{\Lesleader}{{L^t_{se'}}}
\newcommand*{\C}[2]{{C^{#1}_{#2}}}
\newcommand*{\Ch}[2]{{\hat{C}^{#1}_{#2}}}
\newcommand*{\tre}{{tre}}
\newcommand*{\F}{{\phi}}
\title{Population Protocols for Graph Class Identification Problems} 
\author[1]{Hiroto Yasumi}
\author[1]{Fukuhito Ooshita}
\author[1]{Michiko Inoue}
\affil[1]{Nara Institute of Science and Technology}
\date{}
\begin{document}

\maketitle

\begin{abstract}
In this paper, we focus on graph class identification problems in the population protocol model.
A graph class identification problem aims to decide whether a given communication graph is in the desired class (e.g. whether the given communication graph is a ring graph). Angluin et al. proposed graph class identification protocols with directed graphs and designated initial states under global fairness [Angluin et al., DCOSS2005]. We consider graph class identification problems for undirected graphs on various assumptions such as initial states of agents, fairness of the execution, and initial knowledge of agents. In particular, we focus on lines, rings, $k$-regular graphs, stars, trees, and bipartite graphs. 
With designated initial states, we propose graph class identification protocols for $k$-regular graphs, and trees under global fairness, and propose a graph class identification protocol for stars under weak fairness. Moreover, we show that, even if agents know the number of agents $n$, there is no graph class identification protocol for lines, rings, $k$-regular graphs, trees, or bipartite graphs under weak fairness. 
On the other hand, with arbitrary initial states, we show that there is no graph class identification protocol for lines, rings, $k$-regular graphs, stars, trees, or bipartite graphs. 
\end{abstract}

\section{Introduction}

\paragraph*{Background and Motivation}
The population protocol model is an abstract model for low-performance devices, introduced by Angluin et al.~\cite{angluin2006computation}. In this model, a network, called population, consists of multiple devices called agents. Those agents are anonymous (i.e., they do not have identifiers), and move unpredictably (i.e., they cannot control their movements). When two agents approach, they are able to communicate and update their states (this communication is called an interaction). By a sequence of interactions, the system proceeds a computation. 
In this model, there are various applications such as sensor networks used to monitor wild birds and molecular robot networks~\cite{murata2013molecular}. 

In this paper, we study the computability of graph properties of communication graphs in the population protocol model. Concretely, we focus on graph class identification problems that aim to decide whether the communication graph is in the desired graph class. In most distributed systems, it is essential to understand properties of the communication graph in order to design efficient algorithms. Actually, in the population protocol model, efficient protocols are proposed with limited communication graphs (e.g., ring graphs and regular graphs)~\cite{alistarh2021fast,angluin2005self,chen2019self,chen2020self}. 
 In the population protocol model, the computability of the graph property was first considered in \cite{angluin2005stably}. In \cite{angluin2005stably}, Angluin et al. proposed various graph class identification protocols with directed graphs and designated initial states under global fairness. Concretely, Angluin et al. proposed graph class identification protocols for directed lines, directed rings, directed stars, and directed trees. Moreover, they proposed graph class identification protocols for other graphs such as 1) graphs having degree bounded by a constant $k$, 2) graphs containing a fixed subgraph, 3) graphs containing a directed cycle, and 4) graphs containing a directed cycle of odd length. 
 However, there are still some open questions such as ``What is the computability for undirected graphs?'' and ``How do other assumptions (e.g., initial states, fairness, etc.) affect the computability?'' In this paper, we answer those questions. That is, we clarify the computability of graph class identification problems for undirected graphs under various assumptions such as initial states of agents, fairness of the execution, and an initial knowledge of agents. 

We remark that some protocols in \cite{angluin2005stably} for directed graphs can be easily extended to undirected graphs with designated initial states under global fairness (see Table \ref{tab:graph}).
Concretely, graph class identification protocols for directed lines, directed rings, and directed stars can be easily extended to protocols for undirected lines, undirected rings, and undirected stars, respectively.
In addition, the graph class identification protocol for bipartite graphs can be deduced from the protocol that decides whether a given graph contains a directed cycle of odd length. 
This is because, if we replace each edge of an undirected non-bipartite graph with two opposite directed edges, the directed non-bipartite graph always contains a directed cycle of odd length.
On the other hand, the graph class identification protocol for directed trees cannot work for undirected trees because the protocol uses a property of directed trees such that in-degree (resp., out-degree) of each agent is at most one on an out-directed tree (resp., an in-directed tree). 
Note that agents can identify trees if they understand the graph contains no cycle.
However, the graph class identification protocol for graphs containing a directed cycle in directed graphs cannot be used to identify a (simple) cycle in undirected graphs.
This is because, if we replace an undirected edge with two opposite directed edges, the two directed edges compose a directed cycle.

\paragraph*{Our Contributions}
In this paper, we clarify the computability of graph class identification problems for undirected graphs under various assumptions.
A summary of our results is given in Table \ref{tab:graph}.
We propose a graph class identification protocol for trees with designated initial states under global fairness. This protocol works with constant number of states even if no initial knowledge is given. Moreover, under global fairness, we also propose a graph class identification protocol for $k$-regular graphs with designated initial states. 
On the other hand, under weak fairness, we show that there exists no graph class identification protocol for lines, rings, $k$-regular graphs, stars, trees, or bipartite graphs even if the upper bound $P$ of the number of agents is given. Moreover, in the case where the number of agents $n$ is given, we propose a graph class identification protocol for stars and prove that there exists no graph class identification protocol for lines, rings, $k$-regular graphs, trees, or bipartite graphs. 
With arbitrary initial states, we prove that there is no protocol for lines, rings, $k$-regular graphs, stars, trees, or bipartite graphs. 

In this paper, because of space constraints, we omit the details of protocols (see the full version in the appendix).

\paragraph*{Related Works}
In the population protocol model, researchers studied various fundamental problems such as leader election problems~\cite{alistarh2015polylogarithmic,berenbrink2020optimal,doty2018stable,gkasieniec2019almost}, counting problems~\cite{aspnes2017time,beauquier2015space,beauquier2007self}, majority problems~\cite{angluin2008simple,berenbrink2020time,gasieniec2017deterministic}, etc. 
In \cite{alistarh2021fast,angluin2005self,chen2019self,chen2020self}, researchers proposed efficient protocols for such fundamental problems with limited communication graphs. 
More concretely, Angluin et al. proposed a protocol that constructs a spanning tree with regular graphs~\cite{angluin2005self}. Chen et al. proposed self-stabilizing leader election protocols with ring graphs~\cite{chen2019self} and regular graphs~\cite{chen2020self}. Alistarh et~al. showed that protocols for complete graphs (including the leader election protocol, the majority protocol, etc.) can be simulated efficiently in regular graphs~\cite{alistarh2021fast}. 

For the graph class identification problem, Chatzigiannakis et al. studied solvabilities for directed graphs with some properties on the mediated population protocol model~\cite{chatzigiannakis2010stably}, where the mediated population protocol model is an extension of the population protocol model. In this model, a communication link (on which agents interact) has a state. Agents can read and update the state of the communication link when agents interact on the communication link. 
In \cite{chatzigiannakis2010stably}, they proposed graph class identification protocol for some graphs such as 1) graphs having degree bounded by a constant $k$, 2) graphs in which the degree of each agent is at least $k$, 3) graphs containing an agent such that in-degree of the agent is greater than out-degree of the agent, 4) graphs containing a directed path of at least $k$ edges, etc. 
Since Chatzigiannakis et~al. proposed protocols for the mediated population protocol model, the protocols cannot work in the population protocol model. 
As impossibility results, they showed that there is no graph class identification protocol that decides whether the given directed graph has two edges $(u,v)$ and $(v,u)$ for two agents $u$ and $v$, or whether the given directed graph is weakly connected. 

As another perspective of communication graphs, Michail and Spirakis proposed a network constructors model that is an extension of the mediated population protocol~\cite{michail2016simple}. 
The network constructors model aims to construct a desired graph on the complete communication graph by using communication links with two states. Each communication link only has \emph{active} or \emph{inactive} state. Initially, all communication links have inactive state. By activating/deactivating communication links, the protocol of this model constructs a desired communication graph that consists of agents and activated communication links. In \cite{michail2016simple}, they proposed protocols that construct spanning lines, spanning rings, spanning stars, and regular graphs. Moreover, by relaxing the number of states, they proposed a protocol that constructs a large class of graphs. 

\begin{table}[t!]
\caption{The number of states to solve the graph class identification problems. $n$ is the number of agents and $P$ is an upper bound of the number of agents}
\begin{center}
\label{tab:graph}
\scalebox{0.85}{
\begin{tabular}{|c|c|c|c|c|c|c|c|c|c|}
\hline
\multicolumn{3}{|c|}{Model} & \multicolumn{6}{c|}{Graph Properties} \\
\hline
Initial states & Fairness & Initial knowledge & \emph{Line} & \emph{Ring} & \emph{Bipartite} & \emph{Tree} & \emph{$k$-regular} & \emph{Star} \\
\hline
\multirow{5}{*}{Designated} & \multirow{3}{*}{Global} & $n$ & $O(1)$\textdagger & $O(1)$\textdagger & $O(1)$\textdagger & $O(1)${\bf *} & $O(k \log n)${\bf *} & $O(1)$\textdagger \\
\cline{3-9}
 &  & $P$ & $O(1)$\textdagger & $O(1)$\textdagger &  $O(1)$\textdagger & $O(1)${\bf *} & $O(k \log P)${\bf *} & $O(1)$\textdagger \\
\cline{3-9} 
 &  & None & $O(1)$\textdagger & $O(1)$\textdagger & $O(1)$\textdagger & $O(1)${\bf *} & - & $O(1)$\textdagger \\
\cline{2-9} 
 & \multirow{2}{*}{Weak} & $n$ & \multicolumn{5}{c|}{Unsolvable{\bf *}} & $O(n)${\bf *} \\ 
\cline{3-9}
 &  & $P$/None & \multicolumn{6}{c|}{Unsolvable{\bf *}} \\
\hline
Arbitrary & Global/Weak & $n$/$P$/None & \multicolumn{6}{c|}{Unsolvable{\bf *}} \\
\hline
\multicolumn{9}{r}{\textbf{*} Contributions of this paper~~~\textdagger Deduced from Angluin et al.~\cite{angluin2005stably}}\\ 
\end{tabular}
}
\end{center}
\end{table}

\section{Definitions}
 \subsection{Population Protocol Model} 
A communication graph of a population is represented by a simple undirected connected graph $G=(V,E)$, where $V$ represents a set of agents, and $E \subseteq V \times V$ is a set of edges (containing neither multi-edges nor self-loops) that represent the possibility of an interaction between two agents (i.e., only if $(a,b) \in E$ holds, two agents $a \in V$ and $b \in V$ can interact).

A protocol ${\cal P}=(Q, Y,  \gamma, \delta)$ consists of a set $Q$ of possible states of agents, a finite set of output symbols $Y$, an output function $\gamma: Q \rightarrow Y$, and a set of transitions $\delta$ from $Q\times Q$ to $Q\times Q$. 
 Output symbols in $Y$ represent outputs as the results according to the purpose of the protocol. Output function $\gamma$ maps a state of an agent to an output symbol in $Y$. 
Each transition in $\delta$ is denoted by $(p, q) \rightarrow (p', q')$. This means that, when an agent $a$ in state $ p $ interacts with an agent $b$ in state $ q $, their states become $ p'$ and $ q' $, respectively. We say that such $a$ is an initiator and such $b$ is a responder. When $a$ and $b$ interact as an initiator and a responder, respectively, we simply say that $a$ interacts with $b$.
Transition $(p,q) \rightarrow (p',q')$ is null if both $p=p'$ and $q=q'$ hold. We omit null transitions in the  descriptions of protocols.
Protocol ${\cal P}=(Q, Y,  \gamma, \delta)$ is deterministic if, for any pair of states $(p,q)\in Q\times Q$, exactly one transition $(p,q)\rightarrow (p', q') $ exists in $\delta$. 
We consider only deterministic protocols in this paper. 

A configuration represents a global state of a population, defined as a vector of states of all agents.
 A state of agent $a$ in configuration $C$ is denoted by $s(a, C)$. 
Moreover, when $C$ is clear from the context, we simply use $s(a)$ to denote the state of agent $a$. 
A transition from configuration $C$ to configuration $C'$ is denoted by $C \rightarrow C'$, and means that, by a single interaction between two agents, configuration $C'$ is obtained from configuration $C$. 
 For two configurations $C$ and $C'$, if there exists a sequence of configurations $C = C_0$, $C_1$, $\ldots$, $C_m = C'$ such that $C_i \rightarrow C_{i+1}$ holds for every $i$ ($0 \le i < m$), we say 
 $C'$ is reachable from $C$, denoted by $C \xrightarrow{*} C'$.
 
 An execution of a protocol is an infinite sequence of configurations $\Xi =C_0, C_1, C_2, \ldots$ where $C_i \rightarrow C_{i+1}$ holds for every $i$ ($i \ge 0$).
An execution $\Xi$ is weakly-fair if, for any adjacent agents $a$ and $a'$, $a$ interacts with $a'$ and $a'$ interacts with $a$ infinitely often\footnote{We use this definition only for the lower bound under weak fairness. For the upper bound, we use a slightly weaker assumption. We show that our proposed protocol for weak fairness works if, for any adjacent agents $a$ and $a'$, $a$ and $a'$ interact infinitely often (i.e., it is possible that, for any interaction between some adjacent agents $a$ and $a'$, $a$ becomes an initiator and $a'$ never becomes an initiator). Note that, in the protocol, if a transition $(p,q) \rightarrow (p',q')$ exists for $p \neq q$, a transition $(q,p) \rightarrow (q',p')$ also exists. }. 
An execution $\Xi$ is globally-fair if, for each pair of configurations $C$ and $C'$ such that $C \rightarrow C'$, $C'$ occurs infinitely often when $C$ occurs infinitely often. 
Intuitively, global fairness guarantees that, if configuration $C$ occurs infinitely often, then
any possible interaction in $C$ also occurs infinitely often. Then, if $C$ occurs infinitely often, $C'$ satisfying $C\rightarrow C'$ occurs infinitely often, and we can deduce that $C''$ satisfying $C'\rightarrow C''$ also occurs infinitely often. Overall, with global fairness, if a configuration $C$ occurs infinitely often, then every configuration $C^*$ reachable from $C$ also occurs infinitely often.


In this paper, we consider three possibilities for an initial knowledge of agents: the number of agents $n$, the upper bound $P$ of the number of agents, and no knowledge. 
Note that the protocol depends on this initial knowledge. 
When we explicitly state that an integer $x$ is given as the number of agents, we write the protocol as $\pron{x}$. 
Similarly, when we explicitly state that an integer $x$ is given as the upper bound of the number of agents, the protocol is denoted by $\proP{x}$. 

\subsection{Graph Properties and Graph Class Identification Problems}
We define graph properties treated in this paper as follows:
\begin{itemize}
\item A graph $G$ satisfies property \emph{tree} if there is no cycle on graph $G$. 
\item A graph $G=(V,E)$ satisfies property \emph{$k$-regular} if the degree of every agent in $V$ is $k$. 
\item A graph $G$ satisfies property \emph{star} if $G$ is a tree with one internal agent and $n-1$ leaves. 
\item A graph $G=(V,E)$ satisfies property \emph{bipartite} if $V$ can be divided into two disjoint and independent sets $U$ and $W$. 
\item A graph $G=(V, E)$ satisfies property \emph{line} if $E=\{(v_0,v_1)$, $(v_1,v_2)$, $(v_2,v_3)$, $\ldots$, $(v_{n-1}, v_n)\}$ for $V = \{v_1$, $v_2$, $\ldots$ $v_n\}$. 
\item A graph $G=(V,E)$ satisfies property \emph{ring} if the degree of every agent in $V$ is $2$. 
\end{itemize} 
Let $gp$ be an arbitrary graph property. The $gp$ identification problem aims to decide whether a given communication graph $G$ satisfies property $gp$. 
In the $gp$ identification problem, the output set is $Y = \{yes, no\}$. Recall that the output function $\gamma$ maps a state of an agent to an output symbol in $Y$ (i.e., $yes$ or $no$). 
A configuration $C$ is stable if $C$ satisfies the following conditions: There exists $yn \in \{yes,no\}$ such that 1) $\forall a \in V:\gamma(s(a,C))=yn$ holds, and
2) for every configuration $C'$ such that $C \xrightarrow{*} C'$, $\forall a \in V:\gamma(s(a,C'))=yn$ holds. 

  An execution $\Xi = C_0$, $C_1$, $C_2$, $\ldots$ solves the $gp$ identification problem if $\Xi$ includes a stable configuration $C_t$ that satisfies the following conditions.
\begin{enumerate}
	\item If a given graph $G=(V,E)$ satisfies graph property $gp$, $\forall a \in V:\gamma(s(a,C_{t}))=yes$ holds. 
	\item If a given graph $G=(V,E)$ does not satisfy graph property $gp$, $\forall a \in V:\gamma(s(a,C_{t}))=no$ holds. 
\end{enumerate}

  A protocol $ {\cal P} $ solves the $gp$ identification problem if every possible execution $ \Xi $ of protocol $ {\cal P} $ solves the $gp$ identification problem. 


\section{Graph Class Identification Protocols}

\subsection{Tree Identification Protocol with No Initial Knowledge under Global Fairness} 
In this section, we give a tree identification protocol (hereinafter referred to as ``TI protocol'') with 18 states and designated initial states under global fairness.

The basic strategy of the protocol is as follows.
First, agents elect one leader token, one right token, and one left token. Agents carry these tokens on a graph by interactions as if each token moves freely on the graph.
After the election, agents repeatedly execute a trial to detect a cycle by using the tokens. 
The trial starts when two adjacent agents $x$ and $y$ have the right token and the left token, respectively. 
During the trial, $x$ and $y$ hold the right token and the left token, respectively. 
To detect a cycle, agents use the right token and the left token as a single landmark. 
The right token and the left token correspond to a right side and a left side of the landmark, respectively.
If agents can carry the leader token from the right side of the landmark to the left side of the landmark without passing through the landmark, the trial succeeds.
Clearly, when the trial succeeds, there is a cycle.
In this case, an agent with the leader token recognizes the success of the trial and decides that there is a cycle and thus the given graph is not a tree.
Then, the decision is conveyed to all agents by the leader token and thus all agents decide that the given graph is not a tree.
Initially, all agents think that the given graph is a tree. Hence, unless the trial succeeds, all agents continue to think that the given graph is a tree. 
Therefore, the protocol solves the problem.

Before we explain the details of the protocol, first we introduce variables at agent $a$. 
\begin{itemize}
\item $LF_a \in \{\Lleader$, $\Lleft$, $\Lright$, $\Leleader$, $\Lesleader$, $\Lsleader$, $\Lsleft$, $\Lsright$, $\F \}$: Variable $LF_a$, initialized to $\Lright$, represents a token held by agent $a$. 
If $LF_a$ is not $\F$, agent $a$ has $LF_a$ token.
There are three types of tokens: a leader token ($\Lleader$, $\Leleader$, $\Lesleader$, and $\Lsleader$), a left token ($\Lleft$ and $\Lsleft$), and a right token ($\Lright$ and $\Lsright$).
We show the details of them later. 
$\F$ represents no token.
\item $\tre_a \in \{yes$, $no \}$: Variable $\tre_a$, initialized to $yes$, represents a decision of the tree. 
If $\tre_a = yes$ holds for agent $a$, then $\gamma(s_a) = yes$ holds (i.e., $a$ decides that the given graph is a tree).
If $\tre_a = no$ holds, then $\gamma(s_a) = no$ holds (i.e., $a$ decides that the given graph is not a tree).
\end{itemize}
The protocol uses 18 states because the number of values taken by variable $LF_a$ is 9 and the number of values taken by variable $\tre_a$ is 2.

\begin{algorithm}[t!]
	\caption{A TI protocol (1/2)}         
	\label{alg:cyc1}
	\algblockdefx{when}{End}{\textbf{when}}{\textbf{end}}
	\begin{algorithmic}[1]
		\algrenewcommand{\algorithmicrequire}{\textbf{Variables at an agent $a$:}}
		\Require
		\Statex $LF_a \in \{\Lleader$, $\Lleft$, $\Lright$, $\Leleader$, $\Lesleader$, $\Lsleader$, $\Lsleft$, $\Lsright$, $\F \}$: Token held by the agent, initialized to $\Lright$.
       \Statex $\tre_a \in \{yes$, $no \}$: Decision of the tree, initialized to $yes$. 
		\renewcommand{\algorithmicwhile}{\textbf{when}} 
		\While {agent $a$ interacts with agent $b$}
		\Statex \{ The election of tokens \}
		\If{$LF_a, LF_b \in \{ \Lsright$, $\Lright\}$}
		\State $LF_b \leftarrow \Lleft$
		\ElsIf{$LF_a, LF_b \in \{ \Lsleft$, $\Lleft\}$}
		\State $LF_b \leftarrow \Lleader$
		\ElsIf{$LF_a, LF_b \in \{ \Lleader$, $\Leleader$, $\Lesleader$, $\Lsleader\}$}
		\State $LF_a \leftarrow \Lleader$, $LF_b \leftarrow \F$
		\State $\tre_a \leftarrow yes$ 
		\Statex \{ Movement of tokens \}
		\ElsIf{$LF_a \neq \F$ $\wedge$ $LF_b = \F$}
		\If{$LF_a \in \{ \Lleader$, $\Leleader$, $\Lesleader$, $\Lsleader\}$}
		\State $\tre_b \leftarrow \tre_a$
		\EndIf
		\If{$LF_a = L^t_{\kappa}$ for $\kappa \in \{l, r\}$}
		\State $LF_a \leftarrow L_\kappa$
		\ElsIf{$LF_a = \Lsleader$ $\vee$ $LF_a = \Lesleader$}
		\State $LF_a \leftarrow \Lleader$
		\EndIf
		\State $LF_a \leftrightarrow LF_b$ {\bf *}
		\Statex \{ Decision \}
		\ElsIf{$LF_a = \Lleader$ $\wedge$ $LF_b = \Lleft$}
		\State $LF_a \leftarrow \Lsleft$, $LF_b \leftarrow \Lsleader$
		\State $\tre_b \leftarrow \tre_a$
		\ElsIf{$LF_a = \Lsleader$ $\wedge$ $LF_b = \Lright$}
		\State $LF_a \leftarrow \Lsright$, $LF_b \leftarrow \Leleader$
		\State $\tre_b \leftarrow \tre_a$
		\ElsIf{$LF_a = \Leleader$ $\wedge$ $LF_b = \Lsleft$}
		\State $LF_a \leftarrow \Lleft$, $LF_b \leftarrow \Lesleader$ 
		\State $\tre_b \leftarrow \tre_a$
		\ElsIf{$LF_a = \Lesleader$ $\wedge$ $LF_b = \Lsright$}
		\State $LF_a \leftarrow \Lright$, $LF_b \leftarrow \Lleader$
		\State $\tre_b \leftarrow no$
		\Statex
		\Statex {\bf *} $p \leftrightarrow q$ means that $p$ and $q$ exchange values.
		\Statex \Comment Continued on the next page
		\algstore{break}
	\end{algorithmic}
\end{algorithm}

\setcounter{algorithm}{0} 

\begin{algorithm}[t!]
	\caption{A TI protocol (2/2)}         
	\label{alg:cyc2}
	\algblock{when}{End}
	\begin{algorithmic}[1]
		\renewcommand{\algorithmicwhile}{\textbf{when}}
		\algrestore{break}
		\ElsIf{$LF_a \neq \F$ $\wedge$ $LF_b \neq \F$}
		\If{$LF_{ab} \in \{ \Lleader$, $\Leleader$, $\Lesleader$, $\Lsleader\}$ for $ab \in \{a,b\}$}
		\State $\tre_a \leftarrow \tre_{ab}$, $\tre_b \leftarrow \tre_{ab}$ 
		\EndIf
		\If{$LF_{ab} = L^t_{\kappa}$ for $ab \in \{a,b\}$ and $\kappa \in \{l, r\}$}
		\State $LF_{ab} \leftarrow L_\kappa$
		\EndIf
		\If{$LF_{ab} = \Lsleader$ $\vee$ $LF_{ab} = \Leleader$ $\vee$ $LF_{ab} = \Lesleader$ for $ab \in \{a,b\}$}
		\State $LF_{ab} \leftarrow \Lleader$
		\EndIf
		\State $LF_a \leftrightarrow LF_b$
		\EndIf
		\renewcommand{\algorithmicwhile}{}
		\EndWhile		
	\end{algorithmic}
\end{algorithm}

\setcounter{algorithm}{1} 

From now, we explain the details of the protocol. The protocol is given in Algorithms~\ref{alg:cyc1} and \ref{alg:cyc2}. 
\paragraph*{Election of three tokens (lines 2--8)}
Initially, each agent has a right token. When two agents with right tokens interact, the agents change one of the tokens to a left token (lines 2--3).
When two agents with left tokens interact, the agents change one of the tokens to a leader token (lines 4--5).
When two agents with leader tokens interact, the agents delete one of the tokens (lines 6--7).
As we explain later, agents carry a token on a graph by interactions as if a token moves freely on the graph.
Thus, by the above behaviors, eventually agents elect one right token, one left token, and one leader token.

In the cycle detection part, we will just show behaviors after agents complete the token election (i.e., agents elect one right token, one left token, and one leader token). 
However, in this protocol, agents may make a wrong decision before agents complete the token election. 
Agents overcome this problem by the following behaviors.
\begin{itemize} 
\item Agents behave as if the leader token has the decision, and agents follow the decision. Concretely, when agent $a$ moves the leader token to agent $b$ by an interaction, agent $b$ copies $\tre_a$ to $\tre_b$.
Since the leader token moves freely on the graph, finally all agents follow the decision of the leader token.
\item 
When two agents with leader tokens interact and agents delete one of them, the agents reset $\tre$ of the remaining leader token. That is, if agent $a$ has the remaining leader token, it assigns $yes$ to $\tre_a$ (line 8).
\end{itemize}
Note that the last token is elected by an interaction between agents with the leader tokens (i.e., the last interaction in this election part occurs between agents with the leader tokens). 
By this interaction, the elected leader token resets its $\tre$ to $yes$. 
Hence, $\tre$ of the leader token is $yes$ just after agents complete the token election, and all agents follow $\tre$ of the leader token. 
Thus, because agents correctly detect a cycle after the token election (we will show this later), agents are not affected by the wrong decision. 

\paragraph*{Movement of tokens (lines 9--18)} 
When an agent having a token interacts with an agent having no token, the agents move the token (lines 9--18).
Concretely, the token moves by a behavior of line 18. In lines 10--12, $\tre$ of the leader token is conveyed.
We will explain the behavior of lines 13--17 after the explanation of the trial of the cycle detection.

\paragraph*{The trial of the cycle detection (lines 19--43)} 
In this paragraph, we show that, by a trial of the cycle detection, agents correctly detect a cycle after agents complete the token election. 
To begin with, we explain the start of the trial. 
To start the trial, agents place the left token and the right token next to each other. 
To distinguish between a moving token and a placed token, we use a trial mode. 
Agents regard right and left tokens in a trial mode as placed tokens. 
Thus, when agents place the right token and the left token, agents make the right token and the left token transition to the trial mode. 
An $\Lsright$ token (resp., an $\Lsleft$ token) represents the right token (resp., the left token) in the trial mode. 
An $\Lright$ token (resp., an $\Lleft$ token) represents the right token (resp., the left token) in a non-trial mode. 

An image of the start of the trial is shown in Figure \ref{fig:cycimg}. 
Figures \ref{fig:cycimg}(a) and \ref{fig:cycimg}(b) show the behavior such that agents make the left token and the right token transition to the trial mode. 
First, an agent with an $\Lleader$ token changes an $\Lleft$ token to an $\Lsleft$ token by an interaction (Figure \ref{fig:cycimg}(a)), where the $\Lleader$ token represents the default leader token. 
By the interaction, the agents exchange their tokens and the $\Lleader$ token transitions to an $\Lsleader$ token, where the $\Lsleader$ token represents the leader token next to the $\Lsleft$ token. 
This behavior appears in lines 19--21. 
Then, an agent with the $\Lsleader$ token changes the right token to a trial mode by an interaction (Figure \ref{fig:cycimg}(b)). 
By the interaction, the agents exchange their tokens. Thus, since the $\Lsleader$ token represents the leader token next to the $\Lsleft$ token, agents place an $\Lsright$ token next to the $\Lsleft$ token by the interaction. 
Hence, by the interaction, agents place the tokens in the following order: the $\Lsleft$ token, the $\Lsright$ token, the leader token (Figure \ref{fig:cycimg}(c)). 
Moreover, by the interaction, the $\Lsleader$ token transitions to an $\Leleader$ token, where the $\Leleader$ token represents the leader token trying to detect a cycle. 
This behavior appears in lines 22--24. 
When agents place all tokens as shown in Figure \ref{fig:cycimg}(c), a trial of the cycle detection starts. 

From now, we explain the main behavior of the cycle detection (Figure \ref{fig:cycsuc} and \ref{fig:cycfai}). 
Let $x$ (resp., $y$) be an agent having the $\Lsright$ token (resp., the $\Lsleft$ token). 
Let $\mathcal{X}$ (resp., $\mathcal{Y}$) be a set of agents adjacent to $x$ (resp., $y$). 
Let $\mathcal{X'} = \mathcal{X} \backslash \{y\}$ and $\mathcal{Y'} = \mathcal{Y} \backslash \{x\}$. 
In a trial, agents try to carry the leader token from an agent in $\mathcal{X'}$ to an agent in $\mathcal{Y'}$ without using the edge between $x$ and $y$. 

First, we explain the case where a trial succeeds (Figure \ref{fig:cycsuc}). 
In the trial, agents carry the $\Leleader$ token while the $\Lsright$ token and the $\Lsleft$ token are placed at $x$ and $y$, respectively. 
Concretely, if the following procedure occurs, the trial succeeds. 
\begin{enumerate}
\item Agents carry the $\Leleader$ token from an agent in $\mathcal{X'}$ to an agent in $\mathcal{Y'}$ without using the edge between $x$ and $y$ (Figure \ref{fig:cycsuc}(c)). 
\item An agent having the $\Leleader$ token interacts with agent $y$ having the $\Lsleft$ token. By the interaction, agents exchange their tokens and the $\Leleader$ token transitions to an $\Lesleader$ token (Figure \ref{fig:cycsuc}(d)). 
In addition, by the interaction, agents confirm that the $\Lsleft$ token was placed at $y$ while agents move the leader token to an agent in $\mathcal{Y'}$. 
The $\Lesleader$ token represents the leader token that confirmed it. 
This behavior appears in lines 25--27. 
\item Agent $y$ having the $\Lesleader$ token interacts with agent $x$ having the $\Lsright$ token (Figure \ref{fig:cycsuc}(e)). 
By the interaction, agents confirm that the $\Lsright$ token was placed at $x$ while agents move the leader token to an agent in $\mathcal{Y'}$. 
This behavior appears in lines 28--30. 
\end{enumerate} 
Clearly, if there is no cycle, agents do not perform this procedure. 
Thus, if agents perform this procedure, an agent with the leader token decides that there is a cycle and thus the given graph is not a tree (Figure \ref{fig:cycsuc}(f)). 
Concretely, the agent with the leader token changes its $\tre$ to $no$ (line 30). 

Next, we explain the case where a trial fails (Figure \ref{fig:cycfai}). 
There are three cases where the trial fails: (1) An agent having the $\Lsleader$ or $\Lesleader$ token fails to interact with the right token, (2) an agent having the $\Lsleft$ or $\Lsright$ token fails to wait for the leader token, and (3) an agent having the $\Leleader$ token fails to interact with an agent having the $\Lsleft$ token. 
Case (1) is that an agent having the $\Lsleader$ token (resp., the $\Lesleader$ token) interacts with an agent that does not have the $\Lright$ token (resp., the $\Lsright$ token). Figure \ref{fig:cycfai}(A-1) and (B-1) shows an example of case (1). 
By the interaction, agents make the token transition to the $\Lleader$ token (lines 9--17 and 31--43).
If agents make the $\Lsleader$ token transition to the $\Lleader$ token, the condition in line 22 is never satisfied in the trial. 
If agents make the $\Lesleader$ token transition to the $\Lleader$ token, the condition in line 28 is never satisfied in the trial. 
Case (2) is that an agent having an $\Lsleft$ token (resp., an $\Lsright$ token) interacts with an agent that does not have the $\Leleader$ token (resp., the $\Lesleader$ token). 
Figure \ref{fig:cycfai}(A-2) and (B-2) shows an example of case (2). 
By the interaction, agents make the $\Lsleft$ token (resp., the $\Lsright$ token) transition to an $\Lleft$ token (resp., an $\Lright$ token) by the behavior of lines 13--14 or 31--37, and thus the condition in line 25 or 28 is never satisfied in the trial.
Case (3) is that an agent having the $\Leleader$ token interacts with an agent having a token that is not the $\Lsleft$ token. Figure \ref{fig:cycfai}(A-3) and (B-3) shows an example of case (3). 
By the interaction, agents make the $\Leleader$ token transition to the $\Lleader$ token (lines 31--43). 
If agents make the $\Leleader$ token transition to the $\Lleader$ token, the condition in line 25 is never satisfied in the trial. 

Agents have an infinite number of chances of the trial.
This is because agents can make the leader token, the left token, and the right token transition to the $\Lleader$ token, the $\Lleft$ token, and the $\Lright$ token, respectively, from any configuration (lines 9--18 and 31--43).
Hence, from global fairness, eventually agents make the left and right tokens transition to the trial mode on the cycle and then agents find the cycle by the leader token.
Thus, eventually a trial succeeds if there is a cycle.

By the behaviors of the trial, since $\tre$ of the leader token is $yes$ just after agents complete the token election, $\tre$ of the leader token converges to a correct value. 
Since eventually all agents follow the decision of the leader token, all agents correctly decide whether the given graph is a tree or not. 

\begin{figure}[t!]
\begin{center}
\includegraphics[scale=0.35]{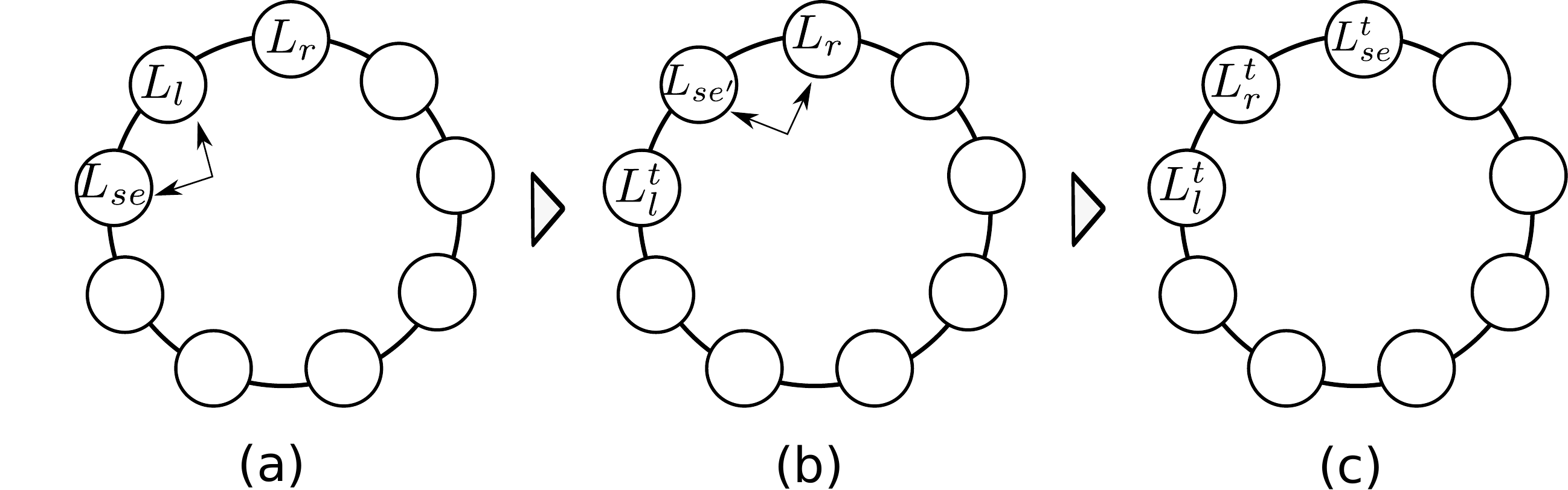}
\caption{An image of the start of the trial}
\label{fig:cycimg}
\end{center}
\end{figure}

\begin{figure}[t!]
\begin{center}
\includegraphics[scale=0.35]{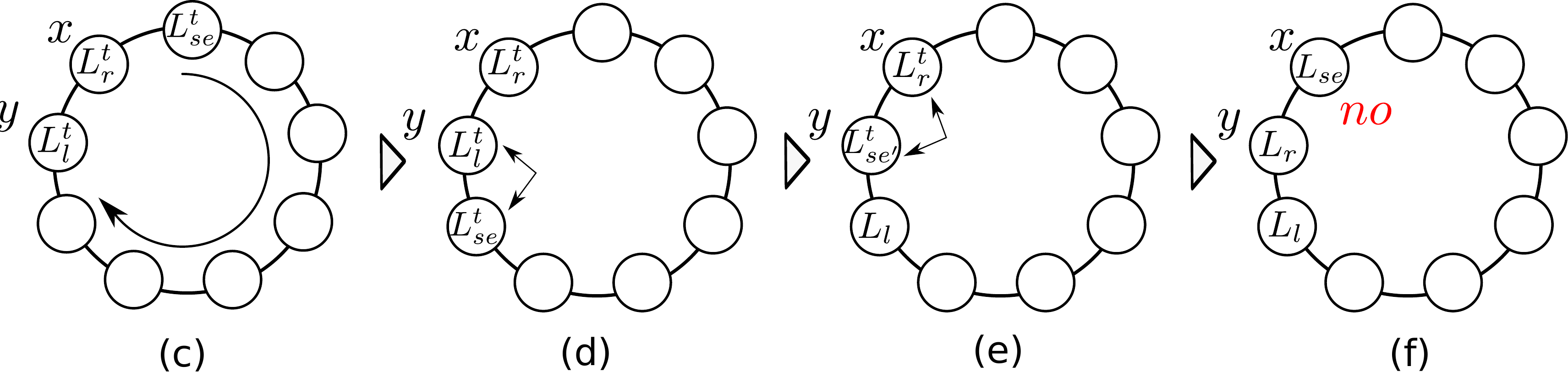}
\caption{An image of the success of the trial}
\label{fig:cycsuc}
\end{center}
\end{figure}

\begin{figure}[t!]
\begin{center}
\includegraphics[scale=0.35]{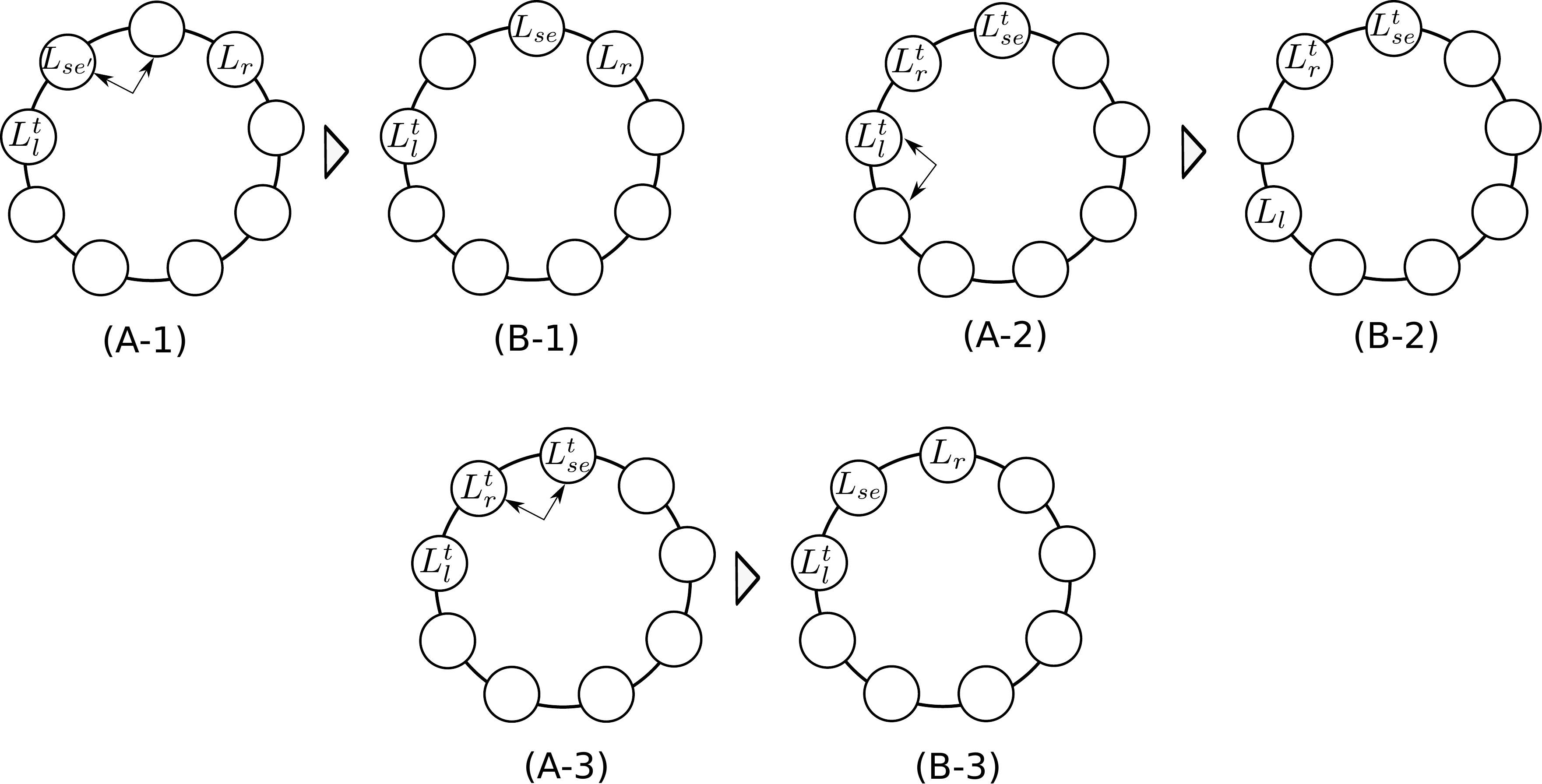}
\caption{Images of the fail of the trial}
\label{fig:cycfai}
\end{center}
\end{figure}

\subsection*{Correctness} 
First of all, if the number of agents $n$ is less than 3, clearly a leader token is not generated in Algorithm \ref{alg:cyc1}. Hence, in the case, $\tre_a$ of each agent $a$ converges to $yes$. Thus, since the given graph with $n < 3$ is a tree, each agent make a correct decision in this case. From now on, we consider the case where the number of agents $n$ is at least 3. 

To begin with, we define some notions for the numbers of the leader, left, and right tokens as follows:
\begin{definition}
The number of agents with $\Lright$ or $\Lsright$ tokens is denoted by $\# \Lright$.
The number of agents with $\Lleft$ or $\Lsleft$ tokens is denoted by $\# \Lleft$.
The number of agents with $\Lleader$, $\Leleader$, $\Lesleader$, or $\Lsleader$ tokens is denoted by $\# \Lleader$.
\end{definition}

Next, we define a configuration where agents complete the token election.
\begin{definition}
For an execution $\Xi = C_0$, $C_1$, $\ldots$, we say that agents complete the token election at $C_i$ if $\# \Lright > 1$, $\# \Lleft > 1$, or $\# \Lleader > 1$ holds in $C_{i-1}$, and $\# \Lright = 1$, $\# \Lleft=1$, and $\# \Lleader=1$ hold in $C_i$.
\end{definition}

From now, we show that agents eventually complete the token election, and, for agent $a$ with the leader token, $\tre_a = yes$ hold just after the election.
\begin{lemma}
\label{lem:onetoken}
For any globally-fair execution $\Xi = C_0$, $C_1$, $\ldots$, there is a configuration $C_i$ at which agents complete the token election. 

In $C_i$, there exists an agent $a$ that has an $\Lleader$ token and $\tre_a$ is $yes$. 
Moreover, in any configuration after $C_i$, $\# \Lright = 1$, $\# \Lleft = 1$, and $\# \Lleader = 1$ hold. 
\end{lemma}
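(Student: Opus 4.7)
The plan is to monitor three non-negative counts in parallel, $R = \# \Lright$, $L = \# \Lleft$ and $\Lambda = \# \Lleader$, and close the argument via global fairness on the finite configuration space. First I would go through every rule in Algorithms~\ref{alg:cyc1} and~\ref{alg:cyc2} to verify that only lines~2--3, 4--5 and 6--8 change these counts, respectively performing $(R, L) \mapsto (R-1, L+1)$, $(L, \Lambda) \mapsto (L-1, \Lambda+1)$ and $\Lambda \mapsto \Lambda-1$; every movement rule (lines~9--18), every trial rule (lines~19--30), and every both-token rule (lines~31--43 of Algorithm~\ref{alg:cyc2}) merely moves a token to the neighbouring vertex or swaps the two token types at the endpoints, and hence preserves $(R, L, \Lambda)$. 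In particular $R + L + \Lambda$ is non-increasing, and strictly decreases exactly at an application of lines~6--8, which also executes $\tre_a \leftarrow yes$ on the surviving leader. Because each reducing rule requires at least two tokens of the matching type to be adjacent, starting from $R = n \geq 3$ we have $R \geq 1$ throughout the execution; similarly $L$ remains positive once the first interaction fires lines~2--3, and $\Lambda$ remains positive once lines~4--5 first fire. Consequently, the unique value of $(R, L, \Lambda)$ from which no count-altering rule is enabled is $(1,1,1)$, and this state is absorbing.

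The core step is to show that the execution actually reaches $(R, L, \Lambda) = (1, 1, 1)$. Since the configuration space is finite, some configuration $C^\infty$ occurs infinitely often; by global fairness every configuration reachable from $C^\infty$ also occurs infinitely often. Suppose for contradiction that $R + L + \Lambda \geq 4$ at $C^\infty$. I would construct a finite sequence of interactions executable from $C^\infty$ that eventually fires lines~6--8 and hence produces a reachable configuration $D$ with strictly smaller sum. The recipe is a short case analysis: if $\Lambda \geq 2$, use graph connectivity to route one leader token along a shortest path to a neighbour of another leader by repeatedly interacting the current holder with the next vertex on the path; in each such step the applicable rule either moves the token onto an empty vertex, exchanges it with the neighbour's token, or performs a trial-related swap that likewise exchanges the types of the two endpoints, so the leader advances one step and $\Lambda$ is preserved, and once two leader-type tokens are adjacent lines~6--8 fire. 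If $\Lambda \leq 1$ but $L \geq 2$, perform the analogous routing for two left tokens, fire lines~4--5, and recurse on the resulting $\Lambda+1$. Otherwise $R + L + \Lambda \geq 4$, $\Lambda \leq 1$ and $L \leq 1$ force $R \geq 2$, so analogous routing on rights plus lines~2--3 boosts $L$ and we recurse. The resulting $D$ would then occur infinitely often by global fairness, yet the sum along the execution is non-increasing, so once $D$ has occurred the sum can never return to the larger value at $C^\infty$, contradicting the infinite recurrence of $C^\infty$. Hence $(R, L, \Lambda) = (1, 1, 1)$ at $C^\infty$.

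Finally I would identify the completion moment. Let $C_i$ be the first configuration at which $(R, L, \Lambda) = (1, 1, 1)$; by the invariants above the sum was strictly greater at $C_{i-1}$, so the definition of ``complete the token election'' is satisfied at $C_i$. Because lines~6--8 are the only rule that can decrease the sum, the transition $C_{i-1} \to C_i$ is an application of lines~6--8, which sets $LF_a \leftarrow \Lleader$ and $\tre_a \leftarrow yes$ on the surviving agent $a$; this proves the first claim. The ``moreover'' clause follows from the absorbing property of $(1, 1, 1)$ established in the first paragraph. The main obstacle in the plan is the routing argument inside the second paragraph: one must verify that, regardless of which token types currently occupy the vertices along a chosen path, the rules of Algorithms~\ref{alg:cyc1} and~\ref{alg:cyc2} always allow us to push a designated token one step further, possibly after a short preparatory sequence that cancels pending trial states on the neighbour.
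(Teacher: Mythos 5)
Your overall strategy---tracking the triple $(\# \Lright, \# \Lleft, \# \Lleader)$, checking that only lines 2--3, 4--5, and 6--8 alter it, noting that the total is non-increasing and drops exactly at lines 6--8, and then combining a recurrent configuration with a token-routing argument to force the total down to $3$---is a valid alternative to the paper's proof, which instead argues sequentially that $\# \Lright$ stabilizes at $1$ first, then $\# \Lleft$, then $\# \Lleader$. Your routing claim (a designated token can always be advanced one edge, whatever the neighbour holds, because every non-election rule either moves a token onto an empty agent or exchanges the token types of the two endpoints) does check out against all the rules, and the ``a recurrent configuration cannot reach a configuration of strictly smaller potential'' step is a clean use of global fairness.

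The gap is in your final paragraph. It is not true that the completing transition $C_{i-1}\to C_i$ is always an application of lines 6--8, nor that the sum is strictly larger at $C_{i-1}$. For $n=3$ the counts evolve as $(3,0,0)\to(2,1,0)\to(1,2,0)\to(1,1,1)$: the completing transition is lines 4--5, which creates the first and only leader token and preserves the sum at $3$. (In fact the completing transition is lines 6--8 precisely when $n>3$: it cannot be lines 2--3, since $C_{i-1}=(2,0,1)$ would violate your own invariant that $\# \Lleft$ stays positive once a leader has been created; and if it is lines 4--5 then $\# \Lleader=0$ in $C_{i-1}$, so lines 6--8 never fired, the sum still equals $n$, and $C_{i-1}=(1,2,0)$ forces $n=3$.) Consequently your argument does not establish $\tre_a=yes$ for the surviving leader when $n=3$, since line 8 is never executed in that case. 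The conclusion still holds, but for a different reason: every rule that writes to a $\tre$ variable requires one of the interacting agents to hold a leader-type token (line 8, lines 10--12, the trial rules, and the copy rule in Algorithm 2 all do), so as long as no leader token has ever existed, every $\tre$ retains its initial value $yes$, and the first leader is therefore created with $\tre=yes$. The paper closes this by an explicit case split on $n=3$ versus $n>3$; you need the same split, or the observation above, to finish the second assertion of the lemma.
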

\begin{proof}
Consider a globally-fair execution $\Xi = C_0$, $C_1$, $C_2$, $\ldots$.
From the pseudocode, when an agent having a leader token interacts with an agent having no leader token, agents move the leader token. 
Similarly, when an agent having a left token (resp., a right token) interacts with an agent having no left token (resp., no right token), agents move the token. 
Only if an agent having a leader token interacts with an agent having a leader token, the number of leader tokens decreases. 
Similarly, only if an agent having a left token (resp., a right token) interacts with an agent having a left token (resp., a right token), the number of the tokens decreases. 
These imply that, from global fairness, if there are two or more tokens of the same type (leader, left, or right), eventually adjacent agents have the tokens and then they interact. 

Hence, from global fairness, because there is no behavior to increase $\# \Lright$, $\# \Lright$ continues to decrease as long as $\# \Lright \ge 2$ holds, by the behavior of lines 2--3. 
Thus, after some configuration, $\# \Lright = 1$ holds and the behavior of lines 2--3 does not occur. 
After that, because there is no behavior to increase $\# \Lleft$ except for the behavior of lines 2--3, $\# \Lleft$ continues to decrease as long as $\# \Lleft \ge 2$ holds, by the behavior of lines 4--5.
Thus, after some configuration, $\# \Lleft = 1$ holds and the behavior of lines 4--5 does not occur. 
After that, because there is no behavior to increase $\# \Lleft$ except for the behavior of lines 4--5, $\# \Lleader$ continues to decrease as long as $\# \Lleader \ge 2$ holds, by the behavior of lines 6--7. 
Thus, after some configuration, $\# \Lleader = 1$ holds. 
Hence, there exists a configuration $C_i$ such that $\# \Lright = 1$, $\# \Lleft = 1$, and $\# \Lleader = 1$ hold after $C_i$ and $\# \Lright > 1$, $\# \Lleft > 1$, or $\# \Lleader > 1$ holds in $C_{i-1}$. 

If $n>3$ holds, agents execute lines 6--8 of the pseudocode at transition $C_{i-1} \rightarrow C_{i}$ because only the behavior of lines 6--8 decreases the number of leader tokens. 
For an agent $a$ with the leader token, the leader token transitions to an $\Lleader$ token and $\tre_a$ transitions to $yes$ when agents execute lines 6--8. 
If $n=3$ holds, agents execute lines 4--5 of the pseudocode at transition $C_{i-1} \rightarrow C_{i}$, and the first leader token is generated by this transition (and hence the leader token is the $\Lleader$ token and $\tre_a=yes$ for an agent $a$ with the $\Lleader$ token). These imply that, in $C_i$, there exists an agent $a$ that has the $\Lleader$ token and $\tre_a$ is $yes$. 
Therefore, the lemma holds.
\end{proof}

From Lemma \ref{lem:onetoken}, after agents complete the token election, only one leader token remains. 
From now on, we define $\tre$ of the leader token as $\tre_a$ such that agent $a$ has the leader token. 

From the pseudocode, $\tre$ of the leader token is conveyed to all agents. 
This implies that, after $\tre$ of the leader token converges, $\tre$ of each agent also converges to the same value as $\tre$ of the leader token.
Hence, from now, we show that $\tre$ of the leader token converges to $no$ (resp., $yes$) if there is a cycle (resp., no cycle) on the graph.

First, we show that $\tre$ of the leader token converges to $yes$ if there is no cycle on the graph. 
\begin{lemma}
\label{lem:nocyc}
For any globally-fair execution $\Xi$, if there is no cycle on a given communication graph, $\tre$ of the leader token converges to $yes$.
\end{lemma}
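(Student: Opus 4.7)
The plan is to leverage Lemma~\ref{lem:onetoken} to reduce the claim to a statement about what happens after the token election completes: by that lemma, there is a configuration $C_i$ after which exactly one leader-type, one left-type, and one right-type token exist, and at $C_i$ the leader token is $\Lleader$ with $\tre = yes$. Since Algorithm~\ref{alg:cyc2} copies the leader's $\tre$ into every agent that subsequently receives the leader token, it suffices to prove that after $C_i$ the leader's own $\tre$ can never be set to $no$.

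Next I would do a pure bookkeeping step on the pseudocode: scan all writes to $\tre$ (lines 8, 11, 21, 24, 27, 30, and 34) and observe that only line 30 writes the value $no$; every other write either assigns $yes$ or copies a leader's existing $\tre$ to the new leader-holder, and therefore preserves the invariant ``leader's $\tre = yes$''. So the remaining task is to rule out line 30 ever firing after $C_i$. Line 30 fires only when an $\Lesleader$ meets an $\Lsright$, and tracing back through the only creation rules (line 27, line 24, and line 21) its precondition amounts to a single geometric event during a trial: the unique $\Lsleft$ sits at some agent $y$, the unique $\Lsright$ at an adjacent agent $x$, and the $\Leleader$ token (born adjacent to $x$ at an agent $C \in \mathcal{X'}$) reaches some $D \in \mathcal{Y'}$ while $x$ still holds $\Lsright$ and $y$ still holds $\Lsleft$.

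The hard part of the proof will be controlling how $\Leleader$ is allowed to move. I would read lines 9--18 and 31--43 to extract the small lemma that $\Leleader$ survives exactly one kind of interaction --- being swapped with an $\F$-state agent --- and on any other interaction reverts to $\Lleader$; by the same inspection, the trial also collapses as soon as $x$ or $y$ interacts with anything other than the partner the trial is waiting for. Thus any walk of $\Leleader$ from $C$ to a candidate $D$ must lie entirely inside $V \backslash \{x,y\}$. The remaining step is then a short graph-theoretic observation: when $G$ is a tree, deleting the edge $(x,y)$ already disconnects $G$ into the subtrees $T_x \ni x$ and $T_y \ni y$, so after further deleting $x$ and $y$ themselves, $\mathcal{X'} \subseteq T_x \backslash \{x\}$ and $\mathcal{Y'} \subseteq T_y \backslash \{y\}$ sit in disjoint components of $G \backslash \{x,y\}$ (and incidentally $\mathcal{X'} \cap \mathcal{Y'} = \emptyset$, since a common neighbour of $x$ and $y$ would close a triangle). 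Hence no such walk exists, line 25 never fires after $C_i$, $\Lesleader$ is never produced, line 30 never fires, and the leader's $\tre$ stays $yes$ forever, yielding convergence to $yes$.
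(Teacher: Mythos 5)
Your proposal is correct and follows essentially the same route as the paper's proof: reduce via Lemma~\ref{lem:onetoken} to showing lines 28--30 never fire after the election, trace the required chain of token creations (lines 19--21, 22--24, 25--27), use the fact that the placed $\Lsleft$/$\Lsright$ tokens and the travelling $\Leleader$ token are destroyed by any ``wrong'' interaction, and conclude from the graph structure. The only cosmetic difference is the last step: the paper derives a cycle from the existence of two distinct paths between the $\Leleader$'s birthplace and the $\Lsleft$-holder, while you argue contrapositively that in a tree the sets $\mathcal{X'}$ and $\mathcal{Y'}$ lie in different components of $G \setminus \{x,y\}$, so the required walk cannot exist --- these are the same observation.
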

\begin{proof}
Variable $\tre$ of the leader token transitions to $no$ only if agents execute lines 28--30. 
From Lemma \ref{lem:onetoken}, when agents complete the token election, $\tre$ of the leader token transitions to $yes$. 
Thus, for the purpose of contradiction, we assume that, for a globally-fair execution $\Xi$ with a graph $G$ containing no cycle, agents execute lines 28--30 after agents complete the token election.
From now, let us consider the configuration after agents complete the token election. 
We first prove that, to execute lines 28--30, agents execute the following procedure.
\begin{enumerate}
\item By executing lines 19--21, an $\Lsleader$ token and an $\Lsleft$ token are generated. 
\item By executing lines 22--24, an $\Leleader$ token and an $\Lsright$ token are generated. 
\item By executing lines 25--27, an $\Lesleader$ token is generated. 
\item Agents execute lines 28--30. 
\end{enumerate}
From now, we show why agents execute the above procedure to execute lines 28--30. 
To execute lines 28--30, an $\Lesleader$ token is required (line 28). 
Recall that, when agents complete the token election, the leader token is $\Lleader$. 
Hence, to generate an $\Lesleader$ token, agents need to execute lines 25--27 (i.e., the item 3 of the procedure is necessary). 
This is because the behavior of lines 25--27 is the only way to generate an $\Lesleader$ token. 
To execute lines 25--27, an $\Leleader$ token is required (line 25). To generate an $\Leleader$ token, agents need to execute lines 22--24 (i.e., the item 2 of the procedure is necessary) because the behavior of lines 22--24 is the only way to generate an $\Leleader$ token. 
Similarly, to execute lines 22--24, an $\Lsleader$ token is required (line 22), and, to generate an $\Lsleader$ token, agents need to execute lines 19--21 (i.e., the item 1 of the procedure is necessary) because the behavior of lines 19--21 is the only way to generate an $\Lsleader$ token. 

In the procedure, agents may perform the behaviors of some items multiple times by resetting the leader token to a $\Lleader$ token (e.g., agents may perform the behaviors of items 1, 2, 3, and 4 after performing the behaviors of items 1 and 2). 
However, we can observe that agents finally execute a procedure such that agents perform the behavior of each item only once in the procedure. 
From now on, we consider only such a procedure. 

From the pseudocode, to execute lines 28--30, the following three conditions should hold during the procedure. 
Note that, after agents complete the token election, $\# \Lright = 1$, $\# \Lleft = 1$, and $\# \Lleader = 1$ hold. 
\begin{itemize}
\item After executing lines 19--21, an agent having an $\Lsleader$ token does not interact with other agents until the agent interacts with an agent having an $\Lright$ token (i.e., the agent interacts only when agents execute lines 22--24). 
Otherwise, agents make the $\Lsleader$ token transition to an $\Lleader$ token and cannot execute lines 22--24 (i.e., the item 2 of the procedure cannot be executed). 
\item After executing lines 19--21, an agent having an $\Lsleft$ token does not interact with other agents until the agent interacts with an agent having an $\Leleader$ token (i.e., the agent interacts only when agents execute lines 25--27).
Otherwise, agents make the $\Lsleft$ token transition to an $\Lleft$ token and cannot execute lines 25--27 (i.e., the item 3 of the procedure cannot be executed).
\item After executing lines 22--24, an agent having an $\Lsright$ token does not interact with other agents until the agent interacts with an agent having an $\Lesleader$ token (i.e., the agent interacts only when agents execute lines 28--30).
Otherwise, agents make the $\Lsright$ token transition to an $\Lright$ token and cannot execute lines 28--30 (i.e., the item 4 of the procedure cannot be executed).
\end{itemize}

From items 1 and 2, an $\Lsleft$ token exists next to an $\Lsleader$ token when agents execute lines 22--24.
Hence, from the pseudocode, an $\Lsright$ token and an $\Lsleft$ token are next to each other just after agents execute lines 22--24. 
In addition, an $\Leleader$ token and the $\Lsright$ token are also next to each other just after agents execute lines 22--24.
To execute lines 25--27, an agent having the $\Leleader$ token must interact with the agent having the $\Lsleft$ token without meeting the agent having the $\Lsright$ token.
Furthermore, the agent having the $\Lsright$ token must not interact with other agents until agents execute lines 28--30.
By the assumption, since agents execute lines 28--30, there are two paths from the agent having the $\Leleader$ token to the agent having the $\Lsleft$ token just after agents execute lines 22--24.
One of the paths is the path via the agent having the $\Lsright$ token.
The other is the path without passing through the agent having the $\Lsright$ token.
Therefore, there is a cycle in $G$. This is a contradiction.
\end{proof}

Next, we show that $\tre$ of the leader token converges to $no$ if there is a cycle on the graph.
\begin{lemma}
\label{lem:cyc}
For any globally-fair execution $\Xi$, if there is a cycle on a given communication graph, $\tre$ of the leader token converges to $no$. 
\end{lemma}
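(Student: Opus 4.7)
Proof plan for Lemma \ref{lem:cyc}.

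The plan is to establish (i) after the token election, the $\tre$ of the unique leader token can only be modified by lines 28--30 (which set it to $no$), and (ii) lines 28--30 must eventually fire. Since the leader's $\tre$ is propagated to every agent as the leader moves (by the argument already used just before Lemma \ref{lem:nocyc}), these two facts together give convergence to $no$. Throughout, I work in the suffix of $\Xi$ starting from the configuration $C_i$ guaranteed by Lemma \ref{lem:onetoken}, where $\#\Lleader=\#\Lleft=\#\Lright=1$.

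For (i), a direct inspection of the pseudocode shows that the only line, besides line 30, that writes $\tre$ of a leader-type token is line 8. But line 8 requires two leader-type tokens to meet, which is impossible after $C_i$. Every other transition touching the leader either leaves its $\tre$ unchanged, or, when the leader token moves or changes form (lines 10--12, 20, 23, 26, and 33--34), copies the leader's current $\tre$ to the new holder. Hence the leader's $\tre$ is monotone: once $no$, it stays $no$ forever.

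For (ii), fix a cycle $v_0, v_1, \ldots, v_{k-1}, v_0$ in $G$. Because the configuration space is finite, some configuration $D$ reachable from $C_i$ recurs infinitely often in $\Xi$. I will exhibit an explicit finite schedule of interactions, starting from $D$ and ending in a configuration in which lines 28--30 have just fired. First, using the token-meets-$\F$ case of lines 9--18, relocate $\Lleader$ to $v_0$, $\Lleft$ to $v_1$, and $\Lright$ to $v_2$ along free paths in $G$ (feasible because only three agents ever carry tokens and $G$ is connected). Then fire lines 19--21 on edge $(v_0, v_1)$ to obtain $\Lsleft$ at $v_0$ and $\Lsleader$ at $v_1$, and fire lines 22--24 on edge $(v_1, v_2)$ to obtain $\Lsright$ at $v_1$ and $\Leleader$ at $v_2$. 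Next, schedule the cycle interactions $(v_2,v_3), (v_3,v_4), \ldots, (v_{k-2},v_{k-1})$ in order to carry $\Leleader$ to $v_{k-1}$ by repeated token-meets-$\F$ swaps; inspection of lines 9--18 shows that $\Leleader$ does not match any transitioning branch and remains in trial mode during transit. Finally, fire lines 25--27 on edge $(v_{k-1}, v_0)$ to produce $\Lesleader$ at $v_0$, and fire lines 28--30 on edge $(v_0, v_1)$, which sets $\tre$ of the resulting $\Lleader$ at $v_1$ to $no$. Since this configuration is reachable from $D$ and $D$ recurs infinitely often, global fairness forces it to occur; combined with (i), the leader's $\tre$ is $no$ permanently and the lemma follows.

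The main obstacle is ensuring that the trial-mode tokens at $v_0$ and $v_1$ are not disrupted during the transit of $\Leleader$: any interaction of $\Lsleft$ or $\Lsright$ with anything other than the designated partner would demote them to $\Lleft$ or $\Lright$ (lines 13--14, 35--37) and abort the trial. Because we only need the existence of a finite reachable schedule rather than its forced occurrence, the fix is simply to schedule no interaction incident to $v_0$ or $v_1$ until the last two steps above, exploiting the scheduler's freedom in choosing reachable configurations.
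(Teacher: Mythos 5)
Your proposal follows essentially the same route as the paper's proof: both observe that after the token election line 8 can never fire again, so the leader's $\tre$ can only be copied along or set to $no$ by lines 28--30, and both then exhibit an explicit finite schedule from a recurrent configuration that drives the protocol through lines 19--21, 22--24, 25--27, and 28--30 around a cycle, invoking global fairness to conclude that this schedule actually occurs (infinitely often). Your treatment of the transit phase --- freezing all interactions incident to the two trial-mode agents while $\Leleader$ travels the long way around the cycle --- is exactly the paper's argument.

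One detail you gloss over is the normalization of the tokens before the trial starts. You write that the token-meets-$\F$ case of lines 9--18 relocates ``$\Lleader$'' to $v_0$, but if the leader token in the recurrent configuration $D$ happens to be in the $\Leleader$ form, those moves do not demote it: lines 13--17 reset $\Lsleader$ and $\Lesleader$ to $\Lleader$ when moving onto an $\F$-agent, whereas an $\Leleader$ token travels unchanged (by design, since it must cross the graph mid-trial). In that case lines 19--21 cannot fire at $(v_0,v_1)$, because they require the token to be literally $\Lleader$. The fix is one extra scheduled step --- route the $\Leleader$ token to an agent adjacent to the left or right token and let them interact, so that lines 31--43 demote it to $\Lleader$ --- and the paper's proof performs exactly this normalization in its ``Making an $\Lleader$ token'' step. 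With that patch your argument is complete and matches the paper's.
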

\begin{proof}
Consider a globally-fair execution $\Xi$ with a graph $G$ containing a cycle.
In $\Xi$, let $C$ be a configuration such that $C$ occurs infinitely often. 
From Lemma \ref{lem:onetoken}, eventually agents complete the token election and thus $C$ occurs infinitely often after agents complete the token election. 

Clearly, each condition in lines 2--8 is not satisfied after $C$.
Thus, from the pseudocode, a token moves by any interaction (except for null transitions) after $C$.
This implies that tokens can move freely on $G$ after $C$.
Hence, from global fairness, a configuration $C'$ such that all tokens are on a cycle occurs.
Moreover, there exists a configuration $C''$ such that $C''$ is reachable from $C'$ and $\Lleader$, $\Lleft$, and $\Lright$ tokens are on the cycle in $C''$. 
This is because $C''$ occurs if the following behaviors occur from $C'$. 
\begin{enumerate}
\item Making $\Lleft$ and $\Lright$ tokens: If an agent having an $\Lsleft$ (or $\Lleft$) token and an agent having an $\Lsright$ (or $\Lright$) token can interact in $C'$, they interact and then an $\Lleft$ token and an $\Lright$ token are generated by the behavior of lines 31--43.
Otherwise, since the left token and the right token are on a cycle in $C'$ (and hence an agent with the token has at least two edges), each agent having the token can interact with an agent having no token.
In the case, an agent having an $\Lsleft$ token (resp., an $\Lsright$ token) interacts with an agent having no token, and an $\Lleft$ token (resp., an $\Lright$ token) is generated.
\item 
Making an $\Lleader$ token: If the leader token is an $\Lsleader$ token or an $\Lesleader$ token, an agent having the token interacts with an agent having an $\Lleft$ token or no token. As a result, an $\Lleader$ token is generated.
If the leader token is an $\Leleader$ token, the $\Leleader$ token moves to an agent that is on a cycle and is adjacent to an agent with an $\Lleft$ token (or an $\Lright$ token).
Then, an agent having the $\Leleader$ token interacts with an agent having the $\Lleft$ token (or the $\Lright$ token) and then an $\Lleader$ token is generated.
\end{enumerate}

There exists a configuration such that the configuration is reachable from $C''$ and, on a cycle, an agent having an $\Lleader$ token is adjacent to an agent having an $\Lleft$ token in the configuration.
This is because tokens can move freely on a graph.
In the configuration, agents can execute lines 19--21. If agents execute lines 19--21, the configuration transitions to a configuration such that $\Lsleader$, $\Lsleft$, and $\Lright$ tokens exist in a cycle.
From the configuration, the $\Lright$ token can move to an agent next to an agent with the $\Lsleader$ token while an agent with the $\Lsleader$ token and an agent with the $\Lsleft$ token do not interact with any agent.
This is because they are on a cycle and the $\Lright$ token can move along the cycle.
Then, an agent having the $\Lsleader$ token can interact with an agent having the $\Lright$ token and then agents execute lines 22--24.
Such behavior causes a configuration such that $\Leleader$, $\Lsleft$, and $\Lsright$ tokens are on a cycle.
From the configuration, the $\Leleader$ token can move to an agent next to an agent having the $\Lsleft$ token while an agent having the $\Lsright$ token and an agent having the $\Lsleft$ token do not interact with any agent.
This is because they are on a cycle and the $\Leleader$ token can move along the cycle. Then, an agent having the $\Leleader$ token can interact with an agent having the $\Lsleft$ token and agents can execute lines 25--27.
After that, an agent having an $\Lesleader$ token can interact with an agent having the $\Lsright$ token and agents can execute lines 28--30.
Hence, from global fairness, since each of the configurations occurs infinitely often, agents execute lines 28--30 infinitely often and agents assign $no$ to $\tre$ of the leader token infinitely often.
Although $\tre$ of the leader token transitions to $yes$ if agents execute line 8, agents does not execute line 8 after $C$.
Therefore, the lemma holds. 
\end{proof}

From Lemmas \ref{lem:onetoken}, \ref{lem:nocyc}, and \ref{lem:cyc}, we prove the following theorem.
\begin{theorem}
\label{the:cyc}
Algorithms~\ref{alg:cyc1} and \ref{alg:cyc2} solve the tree identification problem.
That is, there exists a protocol with constant states and designated initial states that solves the tree identification problem under global fairness.
\end{theorem}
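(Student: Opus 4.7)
The plan is to assemble Theorem \ref{the:cyc} directly from the three lemmas already proved, with only two things left to verify: that the leader token's $\tre$ value is eventually broadcast to every agent under global fairness, and that the resulting global output configuration is actually stable in the sense required by the problem definition. The edge case $n < 3$ is already handled in the text (no leader token is ever created, and every agent keeps $\tre = yes$, which is correct since any connected graph on at most two agents is a tree), so I may assume $n \ge 3$ throughout.

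First I would fix an arbitrary globally-fair execution $\Xi = C_0, C_1, \ldots$. By Lemma \ref{lem:onetoken}, there is an index $i$ after which the token counts are permanently $\# \Lleader = \# \Lleft = \# \Lright = 1$, and in $C_i$ the unique leader-type token is $\Lleader$ with $\tre = yes$. By Lemma \ref{lem:nocyc} (if $G$ is a tree) or Lemma \ref{lem:cyc} (if $G$ contains a cycle), $\tre$ of the leader token converges to the correct value $yn \in \{yes, no\}$; let $C_j$ (with $j \ge i$) be a configuration after which $\tre$ of the leader token is permanently $yn$. It is then enough to show that after $C_j$ every agent $a$ eventually has $\tre_a = yn$ and keeps this value forever.

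For the broadcast step, I would inspect the pseudocode and observe that $\tre$ of a non-leader agent is modified only by lines 11, 20, 23, 26, 29, and 33, and each of these assignments copies the $\tre$ of the (unique) leader-type token into the other agent; in particular, no rule ever writes a value to $\tre_b$ that differs from the current $\tre$ of the leader token after $C_j$. Hence once $\tre$ of the leader is permanently $yn$, every interaction that touches $\tre_a$ (for any $a$) can only set $\tre_a$ to $yn$. Combined with the fact that after $C_j$ the leader token can move freely on $G$ (lines 9--18 and 31--43 guarantee that every interaction involving the leader token either moves it or keeps it adjacent, never destroying or duplicating it), global fairness together with the connectivity of $G$ implies that the leader token visits every agent, and on its first such visit $a$ receives $\tre_a \gets yn$. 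From then on $\tre_a = yn$ is an invariant.

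The main obstacle I anticipate is the stability requirement: the problem definition demands not just that all $\gamma(s(a, C_t)) = yn$ in some reached configuration $C_t$, but that this holds in every configuration reachable from $C_t$. For this I need to argue that after the common value $yn$ has been installed at every agent, no subsequent transition can flip any $\tre_a$ back. This follows because (i) line 8, the only rule that forces $\tre$ to $yes$ independently of the leader, requires $\# \Lleader \ge 2$, which by Lemma \ref{lem:onetoken} never happens again after $C_i$; (ii) line 30, the only rule that sets $\tre$ to $no$, requires the $\Lesleader$--$\Lsright$ adjacency which by the proof of Lemma \ref{lem:nocyc} cannot be reached from a post-election configuration when $G$ is a tree, and in the cyclic case simply reaffirms $\tre = no$; and (iii) every other $\tre$-write copies the leader's $\tre$, which is $yn$. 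Therefore $C_t$ is stable, $\Xi$ solves the tree identification problem, and since $\Xi$ was arbitrary the theorem follows.
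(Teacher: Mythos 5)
Your proposal is correct and follows essentially the same route as the paper's proof: it combines Lemmas \ref{lem:onetoken}, \ref{lem:nocyc}, and \ref{lem:cyc} to get convergence of the leader token's $\tre$, then uses the free movement of the leader token under global fairness to broadcast that value to every agent. The additional detail you supply on stability (line 8 requires two leader tokens, which Lemma \ref{lem:onetoken} rules out after election; the success rule is unreachable post-election on a tree by Lemma \ref{lem:nocyc}; all other $\tre$-writes copy the leader's value) is exactly the justification the paper leaves implicit in its one-line ``$\tre$ of each agent converges to the same value as $\tre$ of the leader token'' step.
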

\begin{proof}
From Lemma \ref{lem:onetoken}, there is a configuration $C$ such that $\tre$ of the leader token is $yes$ in $C$ and agents complete the token election at $C$.
Hence, from Lemmas \ref{lem:nocyc} and \ref{lem:cyc}, if there is a cycle (resp., no cycle) in a given communication graph, $\tre$ of the leader token converges to $no$ (resp., $yes$).
From the pseudocode, since each token can move freely on the graph, $\tre$ of each agent converges to the same value of $\tre$ of the leader token. 
Thus, if there is a cycle (resp., no cycle) in a given communication graph, $\tre$ of each agent converges to $no$ (resp., $yes$).
Therefore, the theorem holds.
\end{proof}

\subsection{$k$-regular Identification Protocol with knowledge of $P$ under Global Fairness}
In this subsection, we give a $k$-regular identification protocol (hereinafter referred to as ``$k$RI protocol'') with $O(k \log P)$ states and designated initial states under global fairness. 
In this protocol, the upper bound $P$ of the number of agents is given. However, we also show that the protocol solves the problem with $O(k \log n)$ states if the number of agents $n$ is given. 

From now, we explain the basic strategy of the protocol. 
First, agents elect a leader token. In this protocol, agents with leader tokens leave some information in agents. To keep only the information that is left after completion of the election, we introduce \emph{level} of an agent. If an agent at level $i$ has the leader token, we say that the leader token is at level $i$.
Agents with leader tokens leave the information with their levels. Before agents complete the election of leader tokens, agents keep increasing their levels (we explain later how to increase the level), and agents discard the information with smaller levels when agents increase their levels. 
When agents complete the election of leader tokens, the agent with the leader token is the only agent that has the largest level. Then, all agents eventually converge to the level. 
Hence, since agents discard the information with smaller levels, agents virtually discard any information that was left before agents complete the election. 
From now on, we consider configurations after agents elect a leader token and discard any outdated information. 

Now, we explain how the protocol solves the $k$-regular identification problem by using the leader token. 
Concretely, each agent examines whether its degree is at least $k$, and whether its degree is at least $k+1$. If an agent confirms that its degree is at least $k$ but does not confirm that its degree is at least $k+1$, then the agent thinks that its degree is $k$. 
Each agent examines whether its degree is at least $k$ as follows: An agent $a$ with the leader token checks whether $a$ can interact with $k$ different agents. 
To check it, agent $a$ with the leader token marks adjacent agents and counts how many times $a$ has marked. 
Concretely, when agent $a$ having the leader token interacts with an agent $b$, agent $a$ marks agent $b$ by making $b$ change to a marked state. 
Agent $a$ counts how many times $a$ interacts with an agent having a non-marked state (hereinafter referred to as ``a non-marked agent'').
If agent $a$ having the leader token interacts with $k$ non-marked agents successively, $a$ decides that $a$ can interact with $k$ different agents (i.e., its degree is at least $k$).

If an agent confirms that its degree is at least $k$, the agent stores this information locally. 
To do this, we introduce a variable $loc_a$ at agent $a$: 
Variable $loc_a \in \{yes$, $no \}$, initialized to $no$, represents whether the degree of agent $a$ is at least $k$. If $loc_a = yes$ holds, agent $a$ thinks that its degree is at least $k$. 
If an agent $a$ confirms that its degree is at least $k$, agent $a$ stores this information locally by making $loc_a$ transition from $no$ to $yes$. 

Next, we show how agents decide whether the graph is $k$-regular. 
In this protocol, first an agent with the leader token decides whether the graph is $k$-regular, and then the decision is conveyed to all agents by the leader token. 
We use variable $reg_a$ at agent $a$ for the decision: 
Variable $reg_a \in \{yes$, $no \}$, initialized to $no$, represents the decision of the $k$-regular graph. If $reg_a = yes$ holds for agent $a$, then $\gamma(s_a) = yes$ holds.
If $reg_a = no$ holds, then $\gamma(s_a) = no$ holds. 
Whenever an agent $a$ with the leader token makes $loc_a$ transition to $yes$, agent $a$ makes $reg_a$ transition to $yes$. 
If an agent $a$ with the leader token finds an agent $b$ such that $loc_b=no$ or its degree is at least $k+1$, agents reset $reg_a$ to $no$. 
Note that, since all agents follow the decision of the leader token, this behavior practically resets $reg$ of each agent. 
If there is such agent $b$, agent $a$ with the leader token eventually finds agent $b$ since the leader token moves freely on the graph. 
Hence, if the graph is not $k$-regular, $reg$ of the leader token (i.e., $reg_a$ such that agent $a$ has the leader token) transitions to $no$ infinitely often. 
On the other hand, if the graph is $k$-regular, eventually $loc_a$ of each agent $a$ transitions from $no$ to $yes$. Let us consider a configuration where $loc$ of each agent other than an agent $x$ is $yes$ and $loc_x$ is $no$. 
After the configuration, when agent $x$ makes $loc_x$ and $reg_x$ transition to $yes$, agent $x$ has the leader token (i.e., $reg$ of the leader token transitions to $yes$). Hence, since there is no agent such that its $loc$ is $no$ or its degree is at least $k+1$, $reg$ of the leader token never transitions to $no$ afterwards and thus $reg$ of the leader token converges to $yes$. 
Thus, since agents convey the decision of the leader token to all agents, eventually all agents make a correct decision. 

Before we explain the details of the protocol, first we introduce other variables at agent $a$. 
\begin{itemize}
\item $LF_a \in \{L_0$, $L_1$, $\ldots$, $L_{k}$, $\F$, $\F'\}$: 
Variable $LF_a$, initialized to $L_0$, represents states for a leader token and marked agents. 
If $LF_a$ is neither $\F$ nor $\F'$, agent $a$ has a leader token. 
In particular, if $LF_a = L_i (i \in \{0$, $1$, $\ldots$, $k\})$ holds, agent $a$ has an $L_i$ token. Moreover, $LF_a = L_i$ represents that agent $a$ has interacted with $i$ different non-marked agents (i.e., agent $a$ has at least $i$ edges). 
If $LF_a = \F$ holds, agent $a$ has no leader token. If $LF_a = \F'$ holds, agent $a$ has no leader token and $a$ is marked by other agents. 
\item $level_a \in \{0$, $1$, $2$, $\ldots$, $\lfloor \log P \rfloor\}$: Variable $level_a$, initialized to $0$, represents the level of agent $a$. 
\end{itemize}
The protocol uses $O(k \log P)$ states because the number of values taken by variable $LF_a$ is $k+2$, the number of values taken by variable $level_a$ is $\lfloor \log P \rfloor + 1$, and the number of values taken by other variables ($loc_a$ and $reg_a$) is constant.

Now, we explain the details of the protocol. The protocol is given in Algorithm \ref{alg:reg}. 

\begin{algorithm}[t!]
	\caption{A $k$RI protocol}         
	\label{alg:reg}
	\algblockdefx{when}{End}{\textbf{when}}{\textbf{end}}
	\begin{algorithmic}[1]
		\renewcommand{\algorithmicrequire}{\textbf{Variables at an agent $a$:}}
		\Require
		\Statex $LF_a \in \{L_0$, $L_1$, $\ldots$, $L_{k}$, $\F$, $\F'\}$: States for a leader token and marked agents, initialized to $L_0$. 
		\Statex $level_a \in \{0$, $1$, $2$, $\ldots$, $\lfloor \log P \rfloor\}$: States for the level of agent $a$, initialized to $0$. 
      \Statex $loc_a \in \{yes$, $no \}$: States representing whether the degree of agent $a$ is at least $k$, initialized to $no$. 
      \Statex $reg_a \in \{yes$, $no \}$: Decision of the $k$-regular graph, initialized to $no$. 

		\when { agent $a$ interacts with agent $b$} \textbf{do}
		\Statex $\llangle$ The behavior when agents have the same level $\rrangle$ 
		\If{$level_a=level_b$}
		\Statex \{ The election of leader tokens \}
		\If{$LF_a=  L_x$ $\wedge$ $LF_b = L_y$ ($x, y \in \{0$, $1$, $2$, $\ldots$, $k\}$)}
		\State $level_a \leftarrow level_a + 1$
		\State $LF_a \leftarrow L_0$, $LF_b \leftarrow \F$
		\State $reg_a \leftarrow no$
		\State $loc_a \leftarrow no$
		\Statex \{ Decision and movement of the token \}
		\ElsIf{$LF_a = L_x $ $\wedge$ $LF_b = \F$ ($x \in \{0$, $1$, $2$, $\ldots$, $k-2\}$)}
		\State $LF_a \leftarrow L_{x+1}$, $LF_b \leftarrow \F'$
		\ElsIf{$LF_a = L_x $ $\wedge$ $LF_b = \F'$ ($x \in \{0$, $1$, $2$, $\ldots$, $k\}$)}
		\State $LF_a \leftarrow \F$, $LF_b \leftarrow L_0$
		\State $reg_b \leftarrow reg_a$
		\ElsIf{$LF_a = L_{k-1}$ $\wedge$ $LF_b = \F$}
		\State $LF_a \leftarrow L_{k}$, $LF_b \leftarrow \F'$
		\If{$loc_a = no$}
		\State $reg_a \leftarrow yes$
		\State $loc_a \leftarrow yes$
		\EndIf
		\Statex \{ Reset of $reg$ of the leader token (the degree of agent $a$ is at least $k+1$) \}
		\ElsIf{$LF_a = L_{k}$ $\wedge$ $LF_b = \F$}
		\State $LF_a \leftarrow L_0$, $LF_b \leftarrow \F'$
		\State $reg_a \leftarrow no$
		\EndIf
		\Statex \{ Reset of $reg$ of the leader token ($loc_a$ or $loc_b$ is $no$) \}
		\If{$loc_a = no$ $\lor$ $loc_b = no$}
		\State $reg_a \leftarrow no$, $reg_b \leftarrow no$
		\EndIf
		\Statex $\llangle$ The behavior when agents have different levels $\rrangle$ 
		\ElsIf{$level_a>level_b$} 
		\State $level_b \leftarrow level_a$
		\State $loc_b \leftarrow no$
		\State $LF_b \leftarrow \F$
		\EndIf
		\End
	\end{algorithmic}
\end{algorithm}

\paragraph*{The election of leader tokens with levels (lines 2--7 and 26--30 of the pseudocode)} 
Initially, each agent has the leader token and the level of each agent is 0. 
If two agents with leader tokens at the same level interact, agents delete one of the leader tokens and increase the level of the agent with the remaining leader token by one (lines 2--5). 
Moreover, $loc_a$ and $reg_a$ transition to $no$ (lines 6--7), where agent $a$ is the agent with the remaining leader token. 
Next, we consider the case where two agents at different levels interact. 
If an agent $a$ at the larger level interacts with an agent $b$ at the smaller level, agent $b$ update its level to the same level as the larger level (regardless of possession of the leader token). This behavior appears in lines 26--27. Furthermore, at the interaction, agent $b$ resets $loc_b$ to $no$ (line 28), and agent $b$ deletes its leader token if agent $b$ has the leader token (line 29). 
We can observe that there is level $lev\_last$ such that all agents converge to level $lev\_last$, because agents update their levels by only above behaviors and there is no behavior that increases the number of leader tokens. 
Since an agent at the largest level updates its level only if the agent has the leader token, there is an agent with the leader token at the largest level in any configuration. 
Thus, since each agent converges to level $lev\_last$ and the leader token moves freely among agents at the same level (we will show this movement behavior later), eventually agents elect a leader token by above behaviors. 

Agents at the largest level delete the leader token only by the behavior of lines 3--7. 
This implies that, if at least two agents at the largest level have the leader token, eventually agents at the largest level with the leader tokens interact and then the largest level is updated. 
Hence, only one leader token can obtain level $lev\_last$. 
When an agent $a$ with the leader token updates its level to level $lev\_last$ by the behavior of lines 3--7, agent $a$ resets $loc_a$ and $reg_a$ to $no$. 
Since other agents are at levels smaller than level $lev\_last$ just after the interaction, other agents will reset their $loc$ to $no$ by the behavior of line 28. 
From these facts, the interaction causes a configuration such that 1) the number of agents with the leader token at level $lev\_last$ is one, 2) $reg_a=no$ and $loc_a=no$ hold for the agent $a$ with the leader token at level $lev\_last$, and 3) other agents will reset their $loc$ after the configuration. 

Note that, since agents delete one leader token by the behavior of lines 3--7, at most half of leader tokens at level $i$ update their level to $i+1$ for $0 \le i$. 
Thus, since there is no behavior that increases the number of leader tokens, the maximum level is at most $\lfloor \log n \rfloor$. 
In this protocol, since only the upper bound $P$ of the number of agents is given, the maximum level is at most $\lfloor \log P \rfloor$. 

\paragraph*{Search for an agent whose degree is at least $k$ or at least $k+1$ with levels (lines 8--18 of the pseudocode)} 

First of all, this search behavior is performed only on the same level (line 2). 
Recall that eventually all agents converge to the same level (and agents discard the information at other levels).

In this behavior, to examine degrees of agents, agents use the leader token. 
An agent having the leader token confirms whether the agent can interact with $k$ different agents, so that the agent confirms that its degree is at least $k$. 
To confirm it, the agent marks adjacent agents one by one and counts how many times the agent interacts with a non-marked agent. 
Concretely, when an agent $a$ having the $L_i$ token interacts with an agent $b$ having $\F$, agent $a$ marks agent $b$ (i.e., $LF_b$ transitions to $\F'$). 
At the interaction, $a$ makes the $L_i$ token transition to the $L_{i+1}$ token. 
These behaviors appear in lines 8--9. 
If agent $a$ obtains the $L_{j}$ token by such an interaction, agent $a$ has interacted with $j$ different agents because $a$ has marked $j$ non-marked agents. 
Thus, when agent $a$ having the $L_{k-1}$ token interacts with an agent with $\F$, agent $a$ notices that $a$ has at least $k$ edges and thus $a$ updates $loc_a$ and $reg_a$ to $yes$ (lines 13--18). 
Similarly, when agent $a$ having the $L_{k}$ token interacts with an agent with $\F$, agent $a$ notices that $a$ has at least $k+1$ edges and thus $a$ updates $reg_a$ to $no$ (lines 19--22). 

When agent $a$ having the $L_i$ token interacts with a marked agent $b$ (i.e., agent $b$ with $\F'$), agent $a$ resets the $L_i$ token to the $L_0$ token. 
Moreover, at the interaction, $a$ deletes a mark of $b$ and carries the $L_0$ token to $b$ (i.e., $LF_b = \F'$ transitions to $L_0$ and $LF_a$ transitions to $\F$). 
These behaviors appear in lines 10--12. 
By these behaviors, the leader token can move freely on a graph because an agent having the leader token can mark any adjacent agent by an interaction. 
Note that, after the leader token moves, some agents may remain as marked agents. However, even in the case, eventually agents correctly detect an agent whose degree is at least $k$ or $k+1$ because an agent with the leader token can delete marks of adjacent agents freely. 
Concretely, an agent $a$ having the leader token deletes a mark of the adjacent agent $b$ by making interaction between $a$ and $b$ three times. 
Figure \ref{fig:thr} shows the example of the three interactions. 
By the interactions, the agents carry the leader token from $a$ to $b$ and then agent $b$ returns the leader token to $a$.
As a result, agent $a$ has the leader token again and the marked agent $b$ transitions to a non-marked agent. 

By the above behaviors, eventually each agent with degree $k$ makes its $loc$ transition to $yes$. 
If agents find some agent $a$ with $loc_a=no$, agents make $reg$ of the leader token transition to $no$ (lines 23--25). 
Thus, if there is an agent whose degree is not $k$, $reg$ of the leader token converges to $no$. 
On the other hand, if the degree of each agent is $k$, eventually each agent makes its $loc$ transition to $yes$, and there is no behavior that makes $reg$ of the leader token transition to $no$ afterwards. 
Hence, since $loc$ and $reg$ of the leader token transition to $yes$ simultaneously by lines 15--18, $reg$ of the leader token converges to $yes$ in the case. 
When agents move the leader token, agents convey $reg$ of the leader token (line 12). 
Therefore, eventually each agent makes a correct decision. 

\begin{figure}[t!]
\begin{center}
\includegraphics[scale=0.5]{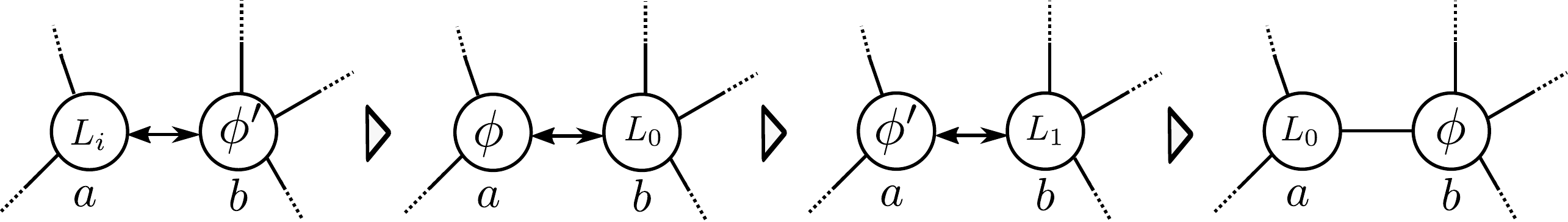}
\caption{The example of deleting a mark}
\label{fig:thr}
\end{center}
\end{figure}

\subsection*{Correctness}
First of all, we define some notations for the level. 
Let $lev(a, C)$ be level of agent $a$ in a configuration $C$. 
For a set of all agents $V=\{v_1$, $v_2$, $\ldots$, $v_n\}$ and a configuration $C$, let $lev\_max(C) = \max_{a \in V} \{ lev(a, C) \}$.  

To begin with, we show that, in any execution of Algorithm \ref{alg:reg}, the behavior of line 4 is not performed by agent $a$ such that $level_a = \lfloor \log n \rfloor$ holds. 
This implies that the domain of variable $level$ is valid. 
Note that each agent $a$ increases its $level_a$ one by one. 

\begin{lemma}
\label{lem:reg:maxlevel}
In any execution, agent $a$ does not increase $level_a$ if $level_a = \lfloor \log n \rfloor$ holds.
\end{lemma}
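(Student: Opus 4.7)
My plan is to establish the claim via a counting argument on the number of leader tokens that ever appear at each level.

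First, I will define the potential. Call an agent a \emph{token-holder at level $i$} whenever $LF_a \in \{L_0, L_1, \ldots, L_k\}$ and $level_a = i$, and let $T_i$ denote the number of distinct occasions at which some agent first becomes a token-holder at level $i$ during the execution. Because $level$ is modified only in lines 4 and 27 of Algorithm \ref{alg:reg} and both rules strictly raise $level$, an agent that leaves level $i$ never returns, so $T_i$ is a well-defined count.

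Next, I will prove the recurrence $T_{i+1} \le T_i/2$ with base case $T_0 = n$. Since every agent starts with $LF_a = L_0$ and $level_a = 0$, the base case is immediate. For the inductive step, I observe that a token-holder at level $i+1$ can be produced only through line 4 applied from level $i$; that rule fires only when two distinct agents at level $i$ both hold leader tokens, destroying both of their level-$i$ holdings and creating exactly one level-$(i{+}1)$ holding. All remaining rules either preserve each agent's leader-token marker (lines 8--22) or delete a token without creating one at a higher level (lines 26--30, which set $LF_b \leftarrow \F$). Charging each new level-$(i{+}1)$ holding to the two level-$i$ holdings consumed in the same interaction gives $T_{i+1} \le T_i/2$, and therefore $T_i \le n/2^i$ by induction.

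Finally, I will derive the lemma. Suppose toward a contradiction that some agent $a$ executes line 4 with $level_a = \lfloor \log n \rfloor$. At that interaction there exist two distinct token-holders at level $\lfloor \log n \rfloor$, so $T_{\lfloor \log n \rfloor} \ge 2$. Combined with the bound $T_{\lfloor \log n \rfloor} \le n / 2^{\lfloor \log n \rfloor}$, this forces $n \ge 2^{\lfloor \log n \rfloor + 1}$, contradicting the definition of $\lfloor \log n \rfloor$.

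The step I expect to require the most care is verifying the recurrence: I need to walk through every rule of Algorithm \ref{alg:reg} and confirm that no rule other than line 4 promotes an agent's level while holding a leader token, and that the destruction of tokens in lines 26--30 never inadvertently introduces a token-holder at a higher level. Once that case analysis is in hand, the remainder of the argument is arithmetic.
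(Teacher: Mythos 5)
Your overall strategy---a halving argument showing that at most $n/2^{i}$ leader tokens can ever appear at level $i$, hence at most one at level $\lfloor \log n \rfloor$, hence line 4 can never fire there---is exactly the paper's argument. But the specific potential you define does not satisfy the recurrence you claim. You count $T_i$ as the number of agents that \emph{first become a token-holder at level $i$}, and you assert that ``all remaining rules either preserve each agent's leader-token marker (lines 8--22) or delete a token.'' This is false for lines 10--12: when an agent $a$ with $L_x$ meets a marked agent $b$ with $\F'$ at the same level, the protocol sets $LF_a \leftarrow \F$ and $LF_b \leftarrow L_0$, i.e., the leader token \emph{moves} to $b$. If $b$ had never held a token at level $i$ before (e.g., $b$ was promoted to level $i$ by lines 26--30, which strip its token), this interaction creates a brand-new ``first becomes a token-holder at level $i$'' event that is not charged to any pair of level-$(i-1)$ holdings. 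Concretely, a single token at level $1$ can visit every agent after raising their levels, driving $T_1$ up to $n$ while $T_0 = n$, so $T_{i+1} \le T_i/2$ fails already at $i=0$ and the bound $T_{\lfloor \log n \rfloor} < 2$ collapses.

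The repair is to count tokens rather than agents: track the number of leader tokens present at level $i$ (a token that migrates between same-level agents via lines 10--12 is the same token, so this count is unchanged by movement), and observe that this count increases at level $i+1$ only via lines 3--7, which simultaneously remove two tokens from level $i$. With that invariant---which is precisely what the paper's proof uses when it says ``the number of leader tokens at some level increases only if the behavior of lines 3--7 occurs''---your base case, recurrence, and final arithmetic all go through unchanged. You correctly identified the token-movement rules as the step requiring the most care; that is indeed where your current formulation breaks.
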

\begin{proof}
First of all, from the pseudocode, there is no behavior that decreases the number of leader tokens, and there is no behavior that decreases the level of an agent. 

From now, we show that, in any execution, at most one leader token at level $\lfloor \log n \rfloor$ can occur. 
From the pseudocode, the number of leader tokens at some level increases only if the behavior of lines 3--7 occurs. 
Hence, by the behavior of lines 3--7, if the number of leader tokens at level $i$ increases by one, agents delete two leader tokens at level $i-1$. 
Thus, since an initial level of each agent is 0 and the initial number of leader tokens is $n$, at most one leader token at level $\lfloor \log n \rfloor$ can occur. 

From the pseudocode, the behavior of line 4 occurs only if two agents with leader tokens at the same level interact. 
Hence, if $level_a = \lfloor \log n \rfloor$ holds, agent $a$ cannot increase $level_a$ and thus the lemma holds. 
\end{proof}

From now, we show that Algorithm \ref{alg:reg} solves the problem. 
First, we prove that, in any configuration $C$ of any execution, there exists an agent with the leader token at level $lev\_max(C)$. 

\begin{lemma}
\label{lem:reg:maxle}
Let us consider a graph $G=(V, E)$, where $V=\{v_1$, $v_2$, $\ldots$, $v_n\}$. 
For any configuration $C$ of any execution $\Xi$ with $G$, there exists an agent $v_m$ with the leader token such that $lev(v_m, C) = lev\_max(C)$ holds. 
\end{lemma}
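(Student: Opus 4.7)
The plan is to prove this by induction on the length of the execution prefix, showing that the property ``there exists an agent with the leader token at $lev\_max(C)$'' is preserved by every single-interaction transition. For the base case, note that in the initial configuration $C_0$ every agent $a$ has $LF_a = L_0$ (a leader token) and $level_a = 0$, so $lev\_max(C_0) = 0$ and every agent simultaneously witnesses the property. The heart of the argument is then a routine case analysis on which branch of the pseudocode fires during the transition $C_i \rightarrow C_{i+1}$; only the states of the two interacting agents $a$ and $b$ can be altered, so any witness $v_m \neq a,b$ from $C_i$ carries over automatically, and I only need to handle the cases where $a$ or $b$ could be a witness.

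The cross-level branch (lines 26--29) fires when $level_a > level_b$. The level $level_a$ already appears in $C_i$, so $lev\_max(C_{i+1}) = lev\_max(C_i)$; agent $a$ is untouched, and $b$ cannot have been a witness in $C_i$ because $lev(b,C_i) < lev(a,C_i) \le lev\_max(C_i)$. Hence the old witness $v_m$ (possibly $v_m = a$) still has the leader token at $lev\_max(C_{i+1})$. The same-level branch in lines 3--7 is the only branch that raises a level: after firing, $level_a$ becomes $\ell+1$ with $LF_a = L_0$ (still a leader token), while $b$ loses its leader token but stays at level $\ell$. If $\ell+1 > lev\_max(C_i)$, then $lev\_max(C_{i+1}) = \ell+1$ and the freshly promoted $a$ witnesses the property; otherwise $lev\_max$ is unchanged, and since $lev\_max(C_i) > \ell = lev(a,C_i) = lev(b,C_i)$, neither $a$ nor $b$ was the previous witness, so the old $v_m$ survives.

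The remaining same-level branches (lines 8--9, 10--12, 13--18, 19--22) neither change any $level$ value nor decrease the total number of leader tokens: lines 8--9, 13--18, and 19--22 simply update $LF_a$ to another $L_j$ symbol (still a leader token) while marking or re-marking $b$, so the witness set at $lev\_max$ is unchanged; lines 10--12 transfer the leader token from $a$ to $b$, but both sit at the same level $\ell$, so if $v_m = a$ in $C_i$ then $v_m := b$ works in $C_{i+1}$, and if $v_m \notin \{a,b\}$ it is unaffected. Finally, null transitions change nothing, so the property is trivially preserved. Combining all cases closes the induction.

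The main obstacle I anticipate is not conceptual but bookkeeping: one must check every branch of Algorithm~\ref{alg:reg} to be sure that no branch can simultaneously (i) remove the unique leader token at level $lev\_max(C_i)$ and (ii) fail to produce a replacement at the new maximum. The crucial structural observation that drives this check is that the \emph{only} level-raising branch (lines 3--7) assigns $LF_a \leftarrow L_0$ rather than $\F$, so the promoted agent is guaranteed to hold a leader token at the new level; once this is noted, the other cases reduce to straightforward verification.
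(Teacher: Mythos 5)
Your proposal is correct and follows essentially the same route as the paper: induction on the configuration index, with a case analysis on which branch of Algorithm~\ref{alg:reg} fires, keyed to the same observations (the level-raising branch of lines 3--7 leaves the promoted agent holding $L_0$, the token-transfer branch of lines 10--12 keeps the token at the same level, and the cross-level branch cannot remove a maximum-level witness). The paper merely groups the remaining branches into a single catch-all case rather than listing them individually.
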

\begin{proof}
For any graph $G=(V,E)$, let us consider an execution $\Xi=C_0$, $C_1$, $C_2$, $\ldots$ of the protocol, where $V=\{v_1$, $v_2$, $\ldots$, $v_n\}$.
We prove the lemma by induction on the index of a configuration. 
In the base case ($C_0$), clearly $level_{v_1} = level_{v_2} = \cdots = level_{v_n} = lev\_max(C_0) =0$ holds.
For the induction step, we assume that, there exists an agent $v_m$ with the leader token such that $lev(v_m, C_k) = lev\_max(C_k)$. 
Let us consider an interaction at $C_k \rightarrow C_{k+1}$ for four cases. 
\begin{itemize}
\item Case where the behavior of lines 3--7, 10--12, or 26--30 does not occur at $C_k \rightarrow C_{k+1}$: From the pseudocode, by the interaction, agents do not update $level_{v_i}$ for $1 \le i \le n$. In addition, by the interaction, agents do not move the leader token, and the number of leader tokens is not changed. 
Hence, in this case, agent $v_m$ with the leader token satisfies that $lev(v_m, C_{k+1}) = lev\_max(C_{k+1})$. 
\item Case where the behavior of lines 3--7 occurs at $C_k \rightarrow C_{k+1}$: From the pseudocode, if an agent increases its level by the interaction, the agent has the leader token after the interaction. 
Hence, if $lev\_max(C_{k+1}) = lev\_max(C_k)$ holds, agent $v_m$ does not join the interaction and thus $lev(v_m, C_{k+1}) = lev\_max(C_{k+1})$ holds and agent $v_m$ has the leader token in $C_{k+1}$. 
On the other hand, if $lev\_max(C_{k+1}) > lev\_max(C_k)$ holds, one of the interacting agents $v'_m$ with the leader token satisfies $lev(v_{m'}, C_{k+1})$ $=$ $lev\_max(C_{k+1})$. 
\item Case where the behavior of lines 10--12 occurs at $C_k \rightarrow C_{k+1}$: From the pseudocode, by the interaction, agents move the leader token from an interacting agent to the other interacting agent. The interacting agents have the same level before the interaction. Moreover, by the interaction, agents do not change their level and thus $lev\_max(C_{k+1}) = lev\_max(C_k)$ holds. 
Hence, if agent $v_m$ does not join the interaction, $lev(v_m, C_{k+1}) = lev\_max(C_{k+1})$ holds and agent $v_m$ has the leader token in $C_{k+1}$. 
If agent $v_m$ joins the interaction, other interacting agent $v_{m'}$ with the leader token satisfies $lev(v_{m'}, C_{k+1})$ $=$ $lev\_max(C_{k+1})$. 
\item Case where the behavior of lines 26--30 occurs at $C_k \rightarrow C_{k+1}$: From the pseudocode, by the interaction, an interacting agent $a$ at the larger level do not change its variables, and the other interacting agent $b$ at the smaller level becomes the same level as the level of agent $a$. 
Hence, in this case, agent $v_m$ with the leader token satisfies that $lev(v_m, C_{k+1}) = lev\_max(C_{k+1})$ (whether $v_m$ joins the interaction or not). 
\end{itemize}
In each case, there exists an agent $v_m$ with the leader token such that $lev(v_m, C_{k+1}) = lev\_max(C_{k+1})$. 
Therefore, the lemma holds. 
\end{proof}

Next, we show that, in any execution, the level of each agent converges to the same value. 
\begin{lemma}
\label{lem:reg:level}
For any execution $\Xi$ with some graph $G$, there exist a configuration $C$ and $lev\_last$ $(0 \le lev\_last \le \lfloor \log n \rfloor)$ such that $level_a= lev\_last$ holds for each agent $a$ after $C$.
\end{lemma}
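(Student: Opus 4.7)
\medskip

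\noindent\textbf{Proof plan.} The plan is to first observe that the sequence of levels in an execution is monotone non-decreasing and bounded, so individual levels and therefore $lev\_max$ must stabilize, and then to use connectedness together with global fairness to argue that every agent is eventually forced up to $lev\_max$.

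First I would check from the pseudocode that $level_a$ never decreases: the only lines that touch $level$ are line~4 (which increments $level_a$ by one) and line~27 (which sets the smaller level equal to the larger one). Combined with Lemma~\ref{lem:reg:maxlevel}, which bounds $level_a$ above by $\lfloor\log n\rfloor$, this means that for each agent $a$ there is a stable value $\ell_a$ and a configuration $C^\#$ after which $level_a=\ell_a$ for every $a\in V$ (just take the latest of the finitely many configurations at which some $level_a$ last changed). Set $lev\_last=\max_{a\in V}\ell_a$; this automatically satisfies $0\le lev\_last\le\lfloor\log n\rfloor$.

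Next I would show that in fact $\ell_a=lev\_last$ for every $a$. Suppose for contradiction that $S=\{a\in V:\ell_a<lev\_last\}$ is non-empty. By Lemma~\ref{lem:reg:maxle}, in every configuration after $C^\#$ some agent still carries a leader token at level $lev\_max$, and by stabilization of levels $lev\_max$ equals $lev\_last$ from $C^\#$ onwards, so $V\setminus S\neq\emptyset$. Because the communication graph is connected, there exists an edge $(a,b)\in E$ with $a\in S$ and $b\in V\setminus S$. Consider any configuration $C'$ occurring after $C^\#$: the interaction between $a$ and $b$ is enabled in $C'$ and, since $level_a<level_b=lev\_last$, the transition at lines~26--30 would set $level_a\leftarrow lev\_last$, contradicting the stability of $level_a$ at $\ell_a<lev\_last$.

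The step that I expect to be slightly delicate is turning this ``would contradict'' into an actual contradiction using global fairness. The cleanest way is to note that the state space is finite, so among the configurations appearing after $C^\#$ some configuration $C^\star$ must appear infinitely often; then global fairness together with the single-step transition $C^\star\to C''$ obtained by executing the interaction $(a,b)$ forces $C''$ to occur, and in $C''$ the value of $level_a$ is $lev\_last$, contradicting the stabilization of $level_a$ at $\ell_a$. Putting these pieces together yields the existence of $C$ (one may take $C=C^\#$) and of $lev\_last$ as required.
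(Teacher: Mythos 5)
Your proof is correct and follows essentially the same approach as the paper: levels never decrease, they are bounded, so they stabilize, and then lines 26--30 together with global fairness force every agent up to the maximum level. The only (harmless) differences are that you derive stabilization from monotonicity plus the bound of Lemma~\ref{lem:reg:maxlevel} rather than from the non-increasing leader-token count, and you spell out the global-fairness step (via an infinitely recurring configuration) more explicitly than the paper does.
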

\begin{proof}
Let us consider a graph $G=(V, E)$ and an execution $\Xi=C_0$, $C_1$, $\ldots$ with $G$, where $V=\{v_1$, $v_2$, $\ldots$, $v_n\}$. 
From the pseudocode, since there is no behavior that increases the number of leader tokens, the number of leader tokens does not change after some configuration $C_i$. 
This implies that, after $C_i$, the behavior of lines 3--7 does not occur. Thus, no agent can obtain the level larger than $lev\_max(C_i)$. 
Let us consider two agents $v_x$ and $v_y$ that interact at $C_j \rightarrow C_{j+1}$ for $j \ge i$. 
If $lev(v_x, C_j)>lev(v_y, C_j)$ holds, $lev(v_y, C_{j+1})=lev(v_x, C_{j+1})=lev(v_x, C_j)$ holds by the behavior of lines 26--30. 
Hence, since there is no behavior that decreases level of an agent, $lev(v_1, C_k) = lev(v_2, C_k) = \cdots = lev(v_n, C_k) = lev\_max(C_i)$ holds for a configuration $C_k$ $(k \ge i)$ from global fairness. Since each agent maintains $lev\_max(C_i)$ after $C_k$, the lemma holds. 
\end{proof}

Let us consider some execution $\Xi^*$ with some graph $G$. 
Let $lev\_last$ be level such that $level_a = lev\_last$ holds for each agent $a$ after some configuration of $\Xi^*$. 
From Lemma \ref{lem:reg:level}, such $lev\_last$ exists. 
From now, we show some properties about $\Xi^*$ and $lev\_last$. 
First, we prove that, in $\Xi^*$, after some of agents obtains level $lev\_last$, there is only one leader token that is at level $lev\_last$.
\begin{lemma}
\label{lem:reg:le1}
In $\Xi^*$, after some of agents obtains level $lev\_last$, there is only one leader token that is at level $lev\_last$.
\end{lemma}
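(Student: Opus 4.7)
The plan is to track how the number of leader tokens at level $lev\_last$ evolves in $\Xi^*$ after some agent first reaches $lev\_last$. Inspecting Algorithm \ref{alg:reg}, only lines 3--7 can change this count: it increases by one when two agents at level $lev\_last-1$ with leader tokens interact, and it decreases by two when two agents at level $lev\_last$ with leader tokens interact (one of them being promoted to $lev\_last+1$). Lines 26--30 and the remaining same-level transitions of lines 8--25 only rearrange $\F$, $\F'$, and $L_i$ states within a level and do not affect the count at $lev\_last$.

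First I would observe that the moment $lev\_last$ is first reached is caused by an instance of lines 3--7 firing at level $lev\_last-1$, which promotes exactly one leader token to $lev\_last$ while deleting the other. Combined with Lemma \ref{lem:reg:maxle}, which guarantees at least one leader token at the maximum level in every configuration, it therefore suffices to show that the count at $lev\_last$ never exceeds one after that moment.

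The main step, and the main obstacle, is to rule out the count growing to two. Suppose for contradiction that at some configuration $C_j$ in the tail of $\Xi^*$ there are at least two leader tokens at $lev\_last$. Because by Lemma \ref{lem:reg:level} no agent in $\Xi^*$ ever obtains level $lev\_last+1$, the ``decrease by two'' behavior of lines 3--7 at $lev\_last$ never fires, so the count is non-decreasing and stays at least two thereafter. I would then combine global fairness with the free-movement property of the leader tokens, using the mark--swap behavior of lines 8--9 and 10--12 together with the fact that Lemma \ref{lem:reg:level} eventually places every agent at $lev\_last$, so each token can traverse the whole connected graph $G$. This implies that from any later configuration one can reach a configuration where two level-$lev\_last$ leader tokens are held by adjacent agents. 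Global fairness, applied via the closure of the infinitely-recurring configurations of $\Xi^*$ under reachability, then forces such an adjacency configuration, and the subsequent lines 3--7 firing at $lev\_last$, to actually occur, producing an agent at $lev\_last+1$ and contradicting Lemma \ref{lem:reg:level}. The delicate part is this reachability step, where I must argue carefully that the two tokens can be routed onto neighboring agents without either one being consumed by a premature lines 3--7 event along the way.
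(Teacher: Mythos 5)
Your proposal is correct and follows essentially the same route as the paper: assume two leader tokens at level $lev\_last$ coexist, note they cannot be deleted without either firing lines 3--7 at level $lev\_last$ or invoking an agent above $lev\_last$ (both contradicting the definition of $lev\_last$), then use free token movement after all agents reach $lev\_last$ plus global fairness to force the two tokens to meet and produce a level-$(lev\_last+1)$ agent. The ``delicate'' routing step you worry about is actually harmless, since any premature lines 3--7 firing between two level-$lev\_last$ tokens is itself the desired contradiction; your only slight imprecision is calling lines 26--30 a same-level rearrangement, though your conclusion that they cannot change the token count at $lev\_last$ is right.
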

\begin{proof}
Let $C$ be a configuration in $\Xi^*$ such that the first agent at level $lev\_last$ appears (i.e., there is an agent at level $lev\_last$ in $C$ and there is no agent that is at level $lev\_last$ before $C$). 
We show that there is only one leader token that is at level $lev\_last$ after $C$. 
First of all, from Lemma \ref{lem:reg:maxle}, after $C$, there is at least one leader token that is at level $lev\_last$. 
Thus, for the purpose of contradiction, we assume that there exists a configuration $C'$ of $\Xi^*$ such that there are two or more leader tokens that are at level $lev\_last$ in $C'$ and $C'$ occurs after $C$. 
Let $\omega_1$ and $\omega_2$ be the leader tokens in $C'$. 
Let us consider a configuration $C''$ such that $level_a = lev\_last$ holds for each agent $a$ after $C''$ and $C''$ occurs after $C'$. If agents delete $\omega_1$ or $\omega_2$, the behavior of lines 3--7 or 26--30 occurs. However, both behaviors must not occur from the definition of $lev\_last$. 
Hence, there also exist $\omega_1$ and $\omega_2$ in $C''$. 

After $C''$, when an agent having a leader token interacts with an agent having no leader token, agents move the leader token, or, by making an additional interaction between them, agents move the leader token. This implies that $\omega_1$ and $\omega_2$ can move to any agent after $C''$. 
Hence, from global fairness, eventually an agent having $\omega_1$ interacts with an agent having $\omega_2$ and then they update their levels to $lev\_last + 1$ by the behavior of lines 3--7.
This contradicts the definition of $lev\_last$. 
\end{proof}

In addition, since agents at level $lev\_last$ do not delete the leader token, Lemma \ref{lem:reg:le1} can be extended as follow. 

\begin{lemma}
\label{lem:reg:le1:2}
In $\Xi^*$, only one leader token can be at level $lev\_last$. 
\end{lemma}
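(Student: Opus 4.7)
The plan is to bootstrap Lemma \ref{lem:reg:le1} into the stronger claim that across the entire execution $\Xi^*$ at most one leader token ever reaches level $lev\_last$. Lemma \ref{lem:reg:le1} already gives that, from the first moment some agent attains level $lev\_last$ onward, every configuration contains exactly one leader token at that level. I would fix $\omega$ to be this unique first-arriving token and show that $\omega$ persists at level $lev\_last$ for the rest of $\Xi^*$; then any hypothetical second leader token later reaching level $lev\_last$ would coexist with $\omega$ there, contradicting Lemma \ref{lem:reg:le1}.

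To show $\omega$ persists at level $lev\_last$, I would do a case analysis on the three ways $\omega$'s level could change or $\omega$ could disappear. Deletion via lines 3--7 requires a same-level interaction between two leader-token holders at level $lev\_last$, which is impossible by Lemma \ref{lem:reg:le1}. Deletion via line 29 requires $\omega$'s holder to be the smaller-level party of a cross-level interaction, and hence requires some agent at a level strictly greater than $lev\_last$; but levels are monotone non-decreasing and every agent eventually settles at exactly $lev\_last$ by Lemma \ref{lem:reg:level}, so no such agent can ever exist in $\Xi^*$. An upward level change of $\omega$'s holder can only occur via the same two rule blocks (lines 3--7 or lines 26--30), and the identical arguments rule it out. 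Finally, movement of $\omega$ via lines 10--12 is performed under the same-level guard on line 2, so $\omega$ keeps level $lev\_last$ even when it migrates between agents.

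The main subtlety is confirming that this case analysis is exhaustive with respect to every transition in Algorithm \ref{alg:reg} that could affect either $\omega$ or the level of the agent carrying it; there is no substantial combinatorial content beyond a careful appeal to Lemma \ref{lem:reg:le1} together with the maximality of $lev\_last$ established in Lemma \ref{lem:reg:level}. With the persistence of $\omega$ at level $lev\_last$ in hand, the uniqueness statement of the lemma follows immediately, since any later token arriving at level $lev\_last$ would simultaneously occupy level $lev\_last$ with $\omega$ and violate Lemma \ref{lem:reg:le1}.
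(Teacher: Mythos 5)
Your proof is correct and follows essentially the same route as the paper: the paper justifies this lemma with the single observation that agents at level $lev\_last$ never delete the leader token, so the token from Lemma \ref{lem:reg:le1} persists and a second arrival at level $lev\_last$ would violate that lemma. Your case analysis (ruling out deletion via lines 3--7, deletion or promotion via lines 26--30 using the maximality of $lev\_last$, and level-preserving movement via lines 10--12) is just a careful expansion of that one-sentence argument.
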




Next, we show properties about an agent at level $lev\_last$ whose degree is less than $k+1$. 
\begin{lemma}
\label{lem:reg:degk-1}
In $\Xi^*$, an agent $a$ at level $lev\_last$ does not perform the behavior of lines 15--18 if the degree of agent $a$ is less than $k$. 
Moreover, in $\Xi^*$, an agent $b$ at level $lev\_last$ does not perform the behavior of lines 19--22 if the degree of agent $b$ is less than $k+1$.
\end{lemma}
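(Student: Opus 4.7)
The plan is to prove the lemma by establishing the following invariant: throughout $\Xi^*$, whenever agent $a$ is at level $lev\_last$ and $LF_a = L_x$ for some $x \in \{0, 1, \ldots, k\}$, at least $x$ distinct neighbors of $a$ that are also at level $lev\_last$ are in state $\F'$. Once this invariant is in hand, the lemma follows immediately: if $a$ is about to execute the transition in lines 13--18 (which contains lines 15--18), then $LF_a = L_{k-1}$ and the partner $b$ satisfies $LF_b = \F$ just before the interaction. By the invariant, $a$ already has at least $k-1$ distinct $\F'$ neighbors; the partner $b$ is itself a neighbor of $a$ and, since $\F \neq \F'$, is distinct from those $\F'$ neighbors, so $\deg(a) \geq k$. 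The case of lines 19--22 is analogous using $LF_a = L_k$, which yields $\deg(a) \geq k + 1$.

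I would prove the invariant by induction on the prefix length of $\Xi^*$. The base case is the first configuration in which $a$ is at level $lev\_last$: $a$ reaches this level either as the responder of a lines 26--30 transition (so $LF_a = \F$ and the invariant is vacuous) or as the initiator of a lines 3--7 transition (so $LF_a = L_0$ and the invariant holds trivially with zero $\F'$ neighbors needed). For the inductive step I would enumerate the interactions involving $a$ at level $lev\_last$: lines 8--9 and lines 13--14 simultaneously increment $LF_a$ from $L_x$ to $L_{x+1}$ and turn the partner from $\F$ to $\F'$, increasing both sides of the invariant by exactly one; lines 10--12 and 19--22 either hand the token away (making the invariant vacuous) or reset $LF_a$ to $L_0$ (making the invariant trivial); and lines 26--30 with $a$ as initiator leave $LF_a$ unchanged while placing the newly-arrived responder in $\F$ (not $\F'$), so the count is preserved.

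The main obstacle, and the reason I defined the invariant relative to neighbors ``at level $lev\_last$'', is to argue that a $\F'$-neighbor of $a$ at level $lev\_last$ cannot silently leave state $\F'$ while $a$ holds the leader token. I would verify case by case that the only rule which can drive a neighbor $c$ at level $lev\_last$ out of $\F'$ is lines 10--12 with $c$ as responder, and that this rule requires the initiator to hold a leader token at level $lev\_last$ (since it demands equal levels and $LF$-value $L_y$ at the initiator). By Lemma~\ref{lem:reg:le1:2}, only one such leader token exists; so whenever $LF_a = L_x$ (meaning $a$ itself holds the unique token at level $lev\_last$), no other agent can trigger this rule against a neighbor of $a$ at level $lev\_last$. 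When $a$ does not hold the leader, $LF_a \in \{\F, \F'\}$ and the invariant is vacuous, so no further checks are needed.
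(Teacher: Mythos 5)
Your proof is correct and follows essentially the same route as the paper's: the paper argues informally that executing lines 15--18 requires the unique level-$lev\_last$ leader token to mark $k$ distinct non-marked neighbors without moving (since marked agents at that level stay marked while the token is held, and agents other than the leader enter level $lev\_last$ in state $\F$), which is exactly the content of your invariant that $LF_a = L_x$ witnesses $x$ distinct $\F'$-neighbors. Your version merely packages this counting argument as an explicit induction over configurations, which is a more rigorous rendering of the same idea.
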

\begin{proof}
In $\Xi^*$, when an agent $c$ with the leader token updates its level from $lev\_last -1$ to $lev\_last$ by the behavior of lines 3--7, $LF_c$ transitions to $L_0$. 
Note that, from Lemma \ref{lem:reg:le1:2}, only agent $c$ with the leader token at level $lev\_last$ updates its level from $lev\_last -1$ to $lev\_last$ by the behavior of lines 3--7 in $\Xi^*$. 
Hence, when other agents update its level to $lev\_last$, the agents reset their $LF$ to $\F$. 

From now, we consider interactions between agents at level $lev\_last$. 
To execute lines 15--18, it needs to occur $k$ times that an agent having the leader token interacts with a non-marked agent without moving the leader token. 
This is because the leader token transitions to $L_0$ when it moves. 
When an agent having the leader token interacts with a non-marked agent, the agent with the leader token marks the non-marked agent. 
From the pseudocode, unless the leader token moves, a marked agent at level $lev\_last$ never transitions to a non-marked agent. 
From these facts, since there is only one leader token, agents never execute lines 15--18 if there is no agent whose degree is at least $k$. 
Similarly, agents never execute lines 19--22 if there is no agent whose degree is at least $k+1$. 
Therefore, the lemma holds. 
\end{proof}

Now, we show properties about an agent whose degree is at least $k$. 
We prove that, if there is an agent whose degree is at least $k$ (resp., $k+1$), the agent performs the behavior of lines 13--18 (resp., lines 19--22) infinitely often. 
\begin{lemma}
\label{lem:reg:degk}
In $\Xi^*$, an agent $a$ performs the behavior of lines 13--18 infinitely often if the degree of agent $a$ is at least $k$. 
Moreover, in $\Xi^*$, an agent $b$ performs the behavior of lines 19--22 infinitely often if the degree of agent $b$ is at least $k+1$. 
\end{lemma}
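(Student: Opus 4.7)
My plan is to show that from any configuration appearing in $\Xi^*$, a sequence of interactions can be chosen that drives agent $a$ (respectively $b$) to execute lines 13--18 (respectively 19--22), and then apply global fairness twice to conclude that the corresponding transitions are taken infinitely often.

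First, I will exploit the unique leader token at level $lev\_last$ guaranteed by Lemma~\ref{lem:reg:le1:2} and catalogue its movement capabilities. A successive application of lines 8--9 and lines 10--12 moves the token from a leader-holder $u$ to an adjacent unmarked agent $v$ in two interactions (leaving $LF_u = \F$ and $LF_v = L_0$). A single application of lines 10--12 moves the token to an adjacent marked neighbor in one interaction. Moreover, the three-interaction sequence of Figure~\ref{fig:thr}, starting from $LF_u = L_x$ and $LF_v = \F'$ for adjacent $u,v$, ends with $LF_u = L_0$ and $LF_v = \F$; this is the unmarking trick. Because $G$ is connected and only one leader token exists in $\Xi^*$, these operations suffice both to transport the token to any chosen agent and to erase any mark on that agent's neighborhood while keeping the token in state $L_0$ at that agent.

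Now fix any agent $a$ with $\deg(a) \ge k$ and a configuration $D^*$ that appears infinitely often in $\Xi^*$ (such $D^*$ exists because the configuration space is finite). I will construct a configuration $C$ reachable from $D^*$ in which $LF_a = L_0$ and $LF_v = \F$ for every neighbor $v$ of $a$. The construction has two phases: route the token to $a$ along some path of $G$ using the movement operations above, and then, for each still-marked neighbor of $a$, apply the three-interaction unmarking while keeping the token at $a$. Since no other agent holds a leader token throughout this process, the marks on already-cleaned neighbors of $a$ cannot reappear.

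From $C$, let $b_1,\ldots,b_k$ be $k$ distinct neighbors of $a$. The interactions $(a,b_1),\ldots,(a,b_{k-1})$ each match the precondition of lines 8--9 and drive $LF_a$ through $L_1,\ldots,L_{k-1}$; then $(a,b_k)$ with $LF_a = L_{k-1}$ and $LF_{b_k} = \F$ triggers lines 13--18. Let $C_{\mathrm{pre}}$ denote the configuration immediately preceding this last interaction. Since $C_{\mathrm{pre}}$ is reachable from $D^*$ and $D^*$ occurs infinitely often, global fairness (as spelled out in the paper's preliminaries) yields that $C_{\mathrm{pre}}$ recurs infinitely often; a second application of global fairness to the specific transition out of $C_{\mathrm{pre}}$ via the $(a,b_k)$ interaction then shows that lines 13--18 are executed infinitely often. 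The claim for lines 19--22 is proved identically: if $\deg(b) \ge k+1$, extend the sequence above by one more interaction $(b,b_{k+1})$ with an additional unmarked neighbor, which from $LF_b = L_k$ and $LF_{b_{k+1}} = \F$ triggers lines 19--22. The main obstacle is the reachability construction of the third paragraph: one must interleave token movement and unmarking carefully so that already-cleaned neighbors of $a$ are not re-marked while other neighbors are still being processed, and this relies crucially on the uniqueness of the leader token at level $lev\_last$.
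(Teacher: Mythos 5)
Your proposal is correct and follows essentially the same route as the paper's proof: both rely on the uniqueness of the leader token at level $lev\_last$ (Lemma~\ref{lem:reg:le1:2}), route the token to the high-degree agent, clear the marks on its neighborhood via the three-interaction unmarking sequence of Figure~\ref{fig:thr}, then have the agent interact with $k$ (resp.\ $k+1$) distinct unmarked neighbors in a row, and conclude by global fairness that this reachable sequence recurs infinitely often. Your write-up is somewhat more explicit about the token-movement operations and the double application of global fairness, but the argument is the same.
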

\begin{proof}
From Lemmas \ref{lem:reg:level} and \ref{lem:reg:le1:2}, there exists a configuration $C$ in $\Xi^*$ such that there is only one leader token and each agent is at level $lev\_last$ after $C$. 
We consider configurations after $C$. 
From the pseudocode, the leader token can move to any agent after $C$. 
Thus, from global fairness, there exists a configuration $C'$ such that $C'$ occurs infinitely often and agent $a$ whose degree is at least $k$ has the leader token in $C'$.

In $C'$, since there exists only one leader token, each agent adjacent to $a$ has $\F$ or $\F'$. 
To make all $\F'$ adjacent to $a$ transition to $\F$, we consider the following procedure.
\begin{enumerate}
\item Agent $a$ having the leader token interacts with an agent $c$ having $\F'$ three times. 
From the pseudocode, after the interactions, agent $a$ has the $L_0$ token and agent $c$ has $\F$ (and other agents are the same states as before the interactions). 
\item Agents repeat the above behavior until no agent adjacent to $a$ has $\F'$. 
\end{enumerate}
From global fairness, eventually agents perform the above procedure. 
Hence, there exists a configuration such that, for agent $a$ with the leader token, each agent adjacent to $a$ has $\F$ and the configuration occurs infinitely often.  
Moreover, from the configuration, eventually agent $a$ interacts with each adjacent agent in a row. 
From the pseudocode, when such interactions occur, agents execute lines 13--18.  
Thus, if the degree of agent $a$ is at least $k$, agent $a$ performs the behavior of lines 13--18 infinitely often. 
Similarly, if the degree of agent $b$ is at least $k+1$, agent $b$ performs the behavior of lines 19--22 infinitely often. 
\end{proof}

From Lemmas \ref{lem:reg:level} and \ref{lem:reg:le1:2}, eventually agents complete the election of leader tokens in $\Xi^*$.
From now on, we define $reg$ of the leader token as $reg_a$ such that agent $a$ has the elected leader token. 
We show that $reg$ of the leader token transitions to $yes$ only a finite number of times in $\Xi^*$. 
\begin{lemma}
\label{lem:reg:finite}
In $\Xi^*$, $reg$ of the leader token transitions to $yes$ only a finite number of times. 
\end{lemma}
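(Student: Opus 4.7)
The plan is to pin down a configuration after which the relevant state variables become monotone, and then count the few remaining ways that $reg$ of the leader token can be assigned $yes$. I would begin by invoking Lemmas~\ref{lem:reg:level} and~\ref{lem:reg:le1:2} to obtain a configuration $C$ in $\Xi^*$ after which every agent has level $lev\_last$ and the elected leader token is the unique leader token in the population. Only finitely many interactions precede $C$, so the number of times $reg$ of the leader token transitions to $yes$ before $C$ is finite; the analysis therefore reduces to bounding the transitions after $C$.

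Next, I would inspect Algorithm~\ref{alg:reg} to locate every assignment $loc_a \leftarrow no$. The only two occurrences are line~7 (inside the block of lines~3--7, triggered when two leader tokens at the same level interact) and line~28 (triggered when two agents at different levels interact). After $C$, neither of these is enabled: there is a single leader token, and all agents share level $lev\_last$. Hence, once $loc_a$ becomes $yes$ after $C$ it remains $yes$ forever, and for each agent $a$ the transition of $loc_a$ from $no$ to $yes$ can occur at most once after $C$.

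Then I would analyze exactly how $reg$ of the leader token can change. Token movement (lines~10--12) executes $reg_b \leftarrow reg_a$, which preserves $reg$ of the leader token across the move. Apart from movement, the only site in Algorithm~\ref{alg:reg} that writes $yes$ into a $reg$ variable is line~16, and it is guarded by $loc_a = no$ and immediately coupled with $loc_a \leftarrow yes$ in line~17. Consequently every transition of $reg$ of the leader token to $yes$ after $C$ is witnessed by a distinct $no \to yes$ transition of $loc$ at the current holder, so there are at most $n$ such transitions after $C$. Combined with the finite count before $C$, the total is finite.

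The subtle point I expect to need care about is the distinction between the per-agent variable $reg_a$ and the time-varying meta-quantity ``$reg$ of the leader token'', whose identity shifts as the token moves. The key observation that legitimizes the localization is that lines~10--12 copy $reg$ across the move and hence cannot, by themselves, cause a $yes$-transition of this meta-quantity; every such transition is attributable to an assignment at the current holder, which must be line~16, and line~16 requires the one-shot resource $loc_a = no$ that is no longer replenished after $C$.
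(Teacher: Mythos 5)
Your proposal is correct and follows essentially the same route as the paper's proof: both invoke Lemmas~\ref{lem:reg:level} and~\ref{lem:reg:le1:2} to fix a configuration after which no $loc_a \leftarrow no$ assignment (lines 3--7 or 26--30) can fire, and then observe that the only $yes$-assignment to $reg$ of the leader token is guarded by $loc_a = no$ and consumes it, so it can occur only finitely often. Your additional remarks---that line 12 merely copies $reg$ across a token move and that the post-convergence count is bounded by $n$---are correct refinements of the same argument.
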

\begin{proof}
From the pseudocode, $reg$ of the leader token transitions to $yes$ only by the behavior of lines 15--18. 
Hence, we prove that this behavior occurs only a finite number of times. 
From Lemmas \ref{lem:reg:level} and \ref{lem:reg:le1:2}, there exists a configuration $C$ in $\Xi^*$ such that there is only one leader token and each agent is at level $lev\_last$. 
From the pseudocode, $loc_a$ of an agent $a$ transitions from $yes$ to $no$ only by the behaviors of lines 3--7 and 26--30. 
The behavior of lines 3--7 occurs only if there are multiple leader tokens, and the behavior of lines 26--30 occurs only if there are agents at different levels. 
From these facts, $loc_a$ of each agent $a$ transitions from $yes$ to $no$ a finite number of times in $\Xi^*$. 
Hence, from the condition of line 15, the behavior of lines 15--18 occurs a finite number of times in $\Xi^*$. 
Therefore, the lemma holds. 
\end{proof}

By using the above lemmas, we show that the protocol solves the $k$-regular identification problem. 
First, we show that, if the given graph is not $k$-regular, $reg_a$ of each agent $a$ converges to $no$. 

\begin{lemma}
\label{lem:reg:no}
If the given communication graph is not $k$-regular, $reg_a$ of each agent $a$ converges to $no$ in any globally-fair execution. 
\end{lemma}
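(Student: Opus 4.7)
The plan is to split into two cases based on how the graph fails to be $k$-regular, and show that in either case $reg$ of the leader token is reset to $no$ infinitely often, while Lemma~\ref{lem:reg:finite} guarantees it transitions to $yes$ only finitely often, so $reg$ of the leader must converge to $no$.

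First I would invoke Lemmas~\ref{lem:reg:level} and \ref{lem:reg:le1:2} to restrict attention to a suffix of a globally-fair execution in which there is exactly one leader token and every agent sits at level $lev\_last$. Inspection of the pseudocode shows that whenever an agent first reaches level $lev\_last$ (via lines 3--7 for the agent acquiring the leader token, and via lines 26--30 for everyone else) its $loc$ variable is reset to $no$. From that point on, the only transition that can turn $loc$ from $no$ to $yes$ is lines 15--18, and the only transition that can turn a non-leader's $reg$ to $yes$ is the copy at line 12.

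Now the two cases. If there is an agent $a$ with degree strictly less than $k$, then Lemma~\ref{lem:reg:degk-1} guarantees that $a$ never executes lines 15--18 while at level $lev\_last$, so $loc_a$ stays $no$ forever. Since the leader token can move freely on the graph (in particular, by using the three-interaction clearing trick illustrated in Figure~\ref{fig:thr} to erase stale $\F'$ marks around itself), global fairness ensures that the agent holding the leader token interacts with $a$ infinitely often; each such interaction triggers lines 23--25 and assigns $no$ to $reg$ of the leader. If instead there is an agent $b$ with degree at least $k+1$, Lemma~\ref{lem:reg:degk} directly supplies infinitely many occurrences of lines 19--22 at $b$, each of which resets $reg$ of the leader to $no$.

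Combining either case with Lemma~\ref{lem:reg:finite}, $reg$ of the leader token transitions to $no$ infinitely often but to $yes$ only finitely often, so it converges to $no$. Finally, since line 12 copies $reg$ of the leader to the recipient every time the leader token moves, and (as observed above) no other line can set a non-leader's $reg$ to $yes$, global fairness lets the leader token reach every agent and propagate the stable $no$ value, so $reg_a$ converges to $no$ for every agent $a$. The main obstacle is purely bookkeeping: carefully auditing the pseudocode to confirm that after stabilization at level $lev\_last$, lines 15--18 are the unique source of $loc = yes$ and line 12 is the unique source of $reg = yes$ for non-leaders, so that the two preceding lemmas really do suffice to rule out any way for $reg$ to stay at $yes$.
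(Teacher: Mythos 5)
Your proof is correct and follows essentially the same route as the paper's: the same two-case split (an agent of degree less than $k$ handled via Lemma~\ref{lem:reg:degk-1} and lines 23--25, an agent of degree at least $k+1$ handled via Lemma~\ref{lem:reg:degk} and lines 19--22), combined with Lemma~\ref{lem:reg:finite} to bound the $yes$ transitions and Lemmas~\ref{lem:reg:level} and \ref{lem:reg:le1:2} plus free movement of the leader token to propagate the converged $no$ to all agents. No gaps.
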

\begin{proof}
Let $\Xi$ be an execution with a non $k$-regular graph. 
Let $lev\_last$ be level such that the level of each agent converges to level $lev\_last$ in $\Xi$. 
From Lemma \ref{lem:reg:level}, such $lev\_last$ exists. 

First, we show that, if there is an agent $a$ whose degree is less than $k$ on the graph, $reg$ of the leader token converges to $no$ in $\Xi$. 
From Lemma \ref{lem:reg:degk-1}, if there is an agent $a$ whose degree is less than $k$ on the graph, agent $a$ never updates its $loc_a$ from $no$ to $yes$ after agent $a$ obtains level $lev\_last$. 
When agent $a$ obtains level $lev\_last$, agent $a$ updates its $loc_a$ to $no$ by the behavior of lines 3--7 or 26--30. 
Hence, since $loc_a$ converges to $no$, $reg$ of the leader token transitions to $no$ infinitely often by the behavior of lines 23--25. 
From Lemma \ref{lem:reg:finite}, $reg$ of the leader token transitions from $no$ to $yes$ only a finite number of times in $\Xi$. 
Thus, if there is an agent whose degree is less than $k$ on the graph, $reg$ of the leader token converges to $no$ in $\Xi$. 

Next, we show that, if there is an agent whose degree is at least $k+1$ on the graph, $reg$ of the leader token converges to $no$ in $\Xi$. 
From Lemma \ref{lem:reg:degk}, if there is an agent whose degree is at least $k+1$ on the graph, $reg$ of the leader token transitions to $no$ infinitely often in $\Xi$. 
Thus, if there is an agent whose degree is at least $k+1$ on the graph, $reg$ of the leader token converges to $no$ in $\Xi$. 

Now, we show that $reg_a$ of each agent $a$ converges to $no$ in $\Xi$. 
Let $C$ be a configuration in $\Xi$ such that there is only one leader token and each agent is at level $lev\_last$ after $C$. From Lemmas \ref{lem:reg:level} and \ref{lem:reg:le1:2}, such $C$ exists in $\Xi$.
From the pseudocode, after $C$, the leader token moves freely on the graph and the decision of the leader token is conveyed to all agents. Thus, since $reg$ of the leader token converges to $no$ in $\Xi$, $reg_a$ of each agent $a$ converges to $no$ in $\Xi$ from global fairness. 
\end{proof}

Next, we show that, if the given graph is $k$-regular, $reg_a$ of each agent $a$ converges to $yes$. 

\begin{lemma}
\label{lem:reg:yes}
If the given communication graph is $k$-regular, $reg_a$ of each agent $a$ converges to $yes$ in any globally-fair execution. 
\end{lemma}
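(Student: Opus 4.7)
The plan is to leverage the structural lemmas already established (Lemmas \ref{lem:reg:level}, \ref{lem:reg:le1:2}, \ref{lem:reg:degk-1}, \ref{lem:reg:degk}, \ref{lem:reg:finite}) together with the fact that in a $k$-regular graph every agent has degree exactly $k$ (never $k+1$, never less than $k$). Let $\Xi$ be an arbitrary globally-fair execution on a $k$-regular graph, and let $lev\_last$ be as in Lemma \ref{lem:reg:level}. First I would pick a configuration $C$ after which every agent is at level $lev\_last$ and exactly one leader token exists (using Lemmas \ref{lem:reg:level} and \ref{lem:reg:le1:2}). After $C$, neither the behavior of lines 3--7 nor that of lines 26--30 can ever occur, so the only remaining way for an agent's $loc$ to change is via the behavior of lines 15--18, which only sets $loc$ from $no$ to $yes$. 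In particular, after $C$, values of $loc$ are monotonically nondecreasing in the order $no \prec yes$.

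Next I would argue that eventually every agent's $loc$ becomes $yes$. By Lemma \ref{lem:reg:degk}, since every agent has degree at least $k$, every agent performs the behavior of lines 13--18 infinitely often in $\Xi$. The very first time an agent $a$ performs this behavior after $C$, its $loc_a$ transitions from $no$ to $yes$, and by monotonicity stays $yes$ forever. Hence there is a configuration $C^*$ after $C$ such that $loc_a = yes$ for every agent $a$ in every configuration after $C^*$.

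Now I would show that $reg$ of the (unique) leader token converges to $yes$. Consider the last agent $x$ whose $loc_x$ transitions from $no$ to $yes$; this transition is an execution of lines 15--18 performed by $x$ while $x$ holds the leader token, so simultaneously $reg_x$ is set to $yes$, meaning $reg$ of the leader token is $yes$ at this moment. From that moment on, every agent $b$ satisfies $loc_b = yes$, so the precondition of lines 23--25 fails and $reg$ of the leader token is never reset via that rule. Moreover, by $k$-regularity no agent has degree at least $k+1$, so by Lemma \ref{lem:reg:degk-1} the behavior of lines 19--22 never occurs at $lev\_last$, eliminating the other way $reg$ of the leader token could transition to $no$. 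Finally, the behaviors of lines 3--7 and 26--30 cannot occur after $C$, so $reg$ of the leader token remains $yes$ forever.

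To conclude, once $reg$ of the leader token has stabilized at $yes$, I would invoke the standard propagation argument already used in Lemma \ref{lem:reg:no}: after $C$, the leader token can move to any agent (by the three-interaction trick of Figure \ref{fig:thr}), and whenever it moves from $a$ to $b$ the rule of lines 10--12 copies $reg_a$ into $reg_b$. By global fairness the leader token visits every agent infinitely often after $reg$ stabilizes, so $reg_a$ converges to $yes$ for every agent $a$. The main obstacle is bookkeeping which transitions can still modify $reg$ after the ``all $loc$ equal $yes$'' configuration, and verifying that $k$-regularity (not merely minimum degree $k$) is essential to rule out the lines 19--22 behavior that would otherwise keep resetting $reg$ to $no$.
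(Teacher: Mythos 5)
Your proposal is correct and follows essentially the same route as the paper's proof: fix a post-election configuration at level $lev\_last$ with a unique leader token, use Lemma \ref{lem:reg:degk} to get every $loc$ to $yes$, take the last lines 15--18 execution to set $reg$ of the leader token to $yes$, and rule out resets via lines 23--25 (all $loc$ are $yes$) and lines 19--22 (Lemma \ref{lem:reg:degk-1} and $k$-regularity). Your explicit monotonicity observation for $loc$ and the spelled-out propagation step are minor presentational additions, not a different argument.
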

\begin{proof}
Let $\Xi$ be an execution with a $k$-regular graph. 
In $\Xi$, let us consider a configuration $C$ such that 1) there is only one leader token, 2) each agent has some level $lev\_last$ after $C$, and 3) $loc_a$ of some agent $a$ is $no$ in $C$. 
From the pseudocode, when an agent $b$ increases its level, agent $b$ updates its $loc_b$ to $no$. 
Thus, from Lemmas \ref{lem:reg:level} and \ref{lem:reg:le1:2}, such $C$ exists in $\Xi$. 

From Lemma \ref{lem:reg:degk}, each agent performs the behavior of lines 13--18 infinitely often after $C$. Hence, eventually each agent $a$, such that $loc_a =no$ in $C$, makes $loc_a$ transition to $yes$ by the behavior of lines 15--18 after $C$. 
Since there is only one leader token and each agent is at level $lev\_last$ after $C$, the behaviors of lines 3--7 and 26--30 do not occur after $C$. 
This implies that, after $C$, $loc_a$ of each agent $a$ keeps $yes$ if $loc_a=yes$. 

Let us consider the last agent that performs the behavior of lines 15--18. From Lemma \ref{lem:reg:finite}, since $reg$ of the leader token transitions to $yes$ by the behavior of lines 15--18, the behavior of lines 15--18 occurs a finite number of times and thus such an agent exists. 
From the pseudocode, when the agent performs the behavior, $reg$ of the leader token transitions to $yes$. After that, since $loc_a$ of each agent $a$ is $yes$, the behavior of lines 23--25 does not occur. Furthermore, from Lemma \ref{lem:reg:degk-1}, the behavior of lines 19--22 is not performed. 
Thus, $reg$ of the leader token does not transition to $no$ afterwards and thus $reg_a$ of each agent $a$ converges to $yes$. 
\end{proof}

From Lemma \ref{lem:reg:no}, if the given communication graph is not $k$-regular, $reg_a$ of each agent $a$ converges to $no$ in any execution of the protocol. 
From Lemma \ref{lem:reg:yes}, if the given communication graph is $k$-regular, $reg_a$ of each agent $a$ converges to $yes$ in any execution of the protocol. 
Thus, we can obtain the following theorem. 
\begin{theorem}
Algorithm~\ref{alg:reg} solves the $k$-regular identification problem. 
That is, if the upper bound $P$ of the number of agents is given, there exists a protocol with $O(k \log P)$ states and designated initial states that solves the $k$-regular identification problem under global fairness. 
\end{theorem}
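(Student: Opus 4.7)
The plan is to derive the theorem as a direct corollary of Lemmas \ref{lem:reg:no} and \ref{lem:reg:yes}, followed by a stability check and a state count. First I would split on whether the communication graph $G$ is $k$-regular. If it is, Lemma \ref{lem:reg:yes} yields that for every globally-fair execution and every agent $a$, $reg_a$ converges to $yes$, whence $\gamma(s(a)) = yes$ by the definition of the output function. If $G$ is not $k$-regular, Lemma \ref{lem:reg:no} symmetrically gives $reg_a \to no$ for every agent $a$, hence $\gamma(s(a)) = no$.

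Next I would upgrade convergence into the paper's notion of stability, which requires a configuration $C$ whose outputs agree with the target value $yn$ and continue to agree in every configuration reachable from $C$. I would take $C$ to be reached along an arbitrary globally-fair prefix in which (i) there is a unique leader token by Lemma \ref{lem:reg:le1:2}, (ii) every agent sits at the common level $lev\_last$ by Lemma \ref{lem:reg:level}, and (iii) $reg_a = yn$ for every agent $a$. I would then walk through the transitions that can alter $reg$, namely lines 6--7, 15--18, 19--22, and 23--25 of Algorithm \ref{alg:reg}, and verify that in the $k$-regular case none of them can reset $reg$ to $no$ --- there are no two leader tokens at the same level, no agent of degree $\ge k+1$ by Lemma \ref{lem:reg:degk-1}, and $loc_a = yes$ for every $a$ with no remaining transition that can flip $loc_a$ back to $no$ --- while in the non-$k$-regular case the only transition producing $yes$, namely lines 15--18, is ruled out once $C$ is taken beyond the finitely many occurrences allowed by Lemma \ref{lem:reg:finite}.

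Finally I would tally states: $LF_a$ ranges over $O(k)$ values, $level_a$ over $\lfloor \log P \rfloor + 1$ values, and $loc_a$, $reg_a$ over two values each, for a total of $O(k \log P)$ states. I expect the stability step to be the main obstacle: the lemmas deliver eventual agreement along a single globally-fair execution, whereas stability quantifies over \emph{every} configuration reachable from $C$, including via suffixes that need not be fair. So one must confirm that the enabling preconditions of every $reg$-altering transition are permanently unreachable from $C$. Fortunately, each such precondition is already excluded by Lemmas \ref{lem:reg:le1:2}, \ref{lem:reg:level}, \ref{lem:reg:degk-1}, and \ref{lem:reg:finite}, so the proof reduces to careful bookkeeping over the finite list of $reg$-altering transitions.
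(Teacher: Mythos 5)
Your proposal is correct and follows essentially the same route as the paper: the theorem is obtained directly from Lemmas \ref{lem:reg:no} and \ref{lem:reg:yes} together with the state count ($LF_a$ over $k+2$ values, $level_a$ over $\lfloor \log P\rfloor+1$ values, and constantly many values for $loc_a$ and $reg_a$). Your additional bookkeeping that upgrades ``convergence'' to the paper's reachability-based notion of stability is more explicit than what the paper writes, but it is already implicit in the proofs of those lemmas, so it does not constitute a different argument.
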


Clearly, when the number of agents $n$ is given, the protocol works even if the protocol uses variable $level_a = \{0$, $1$, $2$, $\ldots$, $\lfloor \log n \rfloor\}$ instead of $level_a = \{0$, $1$, $2$, $\ldots$, $\lfloor \log P \rfloor\}$. 
Therefore, we have the following theorem.
\begin{theorem}
If the number of agents $n$ is given, there exists a protocol with $O(k \log n)$ states and designated initial states that solves the $k$-regular identification problem under global fairness. 
\end{theorem}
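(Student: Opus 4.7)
The plan is to observe that the $k$-regular identification protocol of Algorithm~\ref{alg:reg} was designed with a single source of $P$-dependence, namely the domain $\{0, 1, 2, \ldots, \lfloor \log P \rfloor\}$ of the variable $level_a$. Everything else (the token states $L_0, \ldots, L_k, \F, \F'$, and the boolean flags $loc_a$ and $reg_a$) contributes only $O(k)$ states. So if we can justify shrinking the range of $level_a$ to $\{0, 1, 2, \ldots, \lfloor \log n \rfloor\}$ when $n$ is part of the initial knowledge, the resulting protocol uses $O(k \log n)$ states, which is exactly the bound claimed.

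To justify this substitution, I would invoke Lemma~\ref{lem:reg:maxlevel}, which already shows that no agent ever attempts to increase its level beyond $\lfloor \log n \rfloor$ in any execution, independent of whether $n$ was known to the designer. Consequently, in any run of the original $\proP{P}$-protocol, the values $\lfloor \log n \rfloor + 1, \lfloor \log n \rfloor + 2, \ldots, \lfloor \log P \rfloor$ of $level_a$ are simply unreachable and play no role in transitions. Restricting the domain of $level_a$ to $\{0, 1, \ldots, \lfloor \log n \rfloor\}$ and keeping every other part of Algorithm~\ref{alg:reg} unchanged therefore yields a protocol whose set of reachable configurations is identical to that of the original protocol on any graph with $n$ agents.

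With the state space restricted, I would then note that all correctness lemmas carry over verbatim. Specifically, Lemmas~\ref{lem:reg:maxle}, \ref{lem:reg:level}, \ref{lem:reg:le1}, \ref{lem:reg:le1:2}, \ref{lem:reg:degk-1}, \ref{lem:reg:degk}, and \ref{lem:reg:finite} were all proved by analyzing transitions that live entirely within the restricted range, so their arguments apply unchanged. Lemmas~\ref{lem:reg:no} and \ref{lem:reg:yes} then yield convergence of $reg_a$ to the correct answer on every globally-fair execution. Together with the state count $|LF_a| \cdot (\lfloor \log n \rfloor + 1) \cdot |loc_a| \cdot |reg_a| = (k+2) \cdot (\lfloor \log n \rfloor + 1) \cdot 2 \cdot 2 = O(k \log n)$, the theorem follows.

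The only mildly subtle point, and thus the main thing I would spell out carefully, is the appeal to Lemma~\ref{lem:reg:maxlevel}: one must emphasize that its proof depends only on the number $n$ of agents and on the fact that the behavior of lines~3--7 at least doubles the level-witness count, so it requires no knowledge of $P$ at all. Everything after that is routine.
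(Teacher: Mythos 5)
Your proposal is correct and follows essentially the same route as the paper: the paper justifies this theorem with exactly the observation that the only $P$-dependence is the domain of $level_a$, which can be shrunk to $\{0,1,\ldots,\lfloor \log n \rfloor\}$ because Lemma~\ref{lem:reg:maxlevel} guarantees no agent's level ever exceeds $\lfloor \log n \rfloor$. Your write-up is in fact more explicit than the paper's one-line remark about why the remaining lemmas carry over unchanged.
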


\subsection{Star Identification Protocol with initial knowledge of $n$ under weak fairness}
In this subsection, we give a star identification protocol (hereinafter referred to as ``SI protocol'') with $O(n)$ states and designated initial states under weak fairness.
In this protocol, the number of agents $n$ is given. 
Recall that, in this protocol under weak fairness, if a transition $(p,q) \rightarrow (p',q')$ exists for $p \neq q$, a transition $(q,p) \rightarrow (q',p')$ also exists.
Since a given graph is a star if $n \le 2$ holds, we consider the case where $n$ is at least 3. 

The basic strategy of the protocol is as follows. 
Initially, each agent thinks that the given graph is not a star.
First, agents elect an agent with degree two or more as a central agent (i.e., an agent that connects to all other agents in the star graph). 
Then, by counting the number of agents adjacent to the central agent, agents examine whether there is a star subgraph in the given graph such that the subgraph consists of $n$ agents. Concretely, if the central agent confirms by counting that there are $n-1$ adjacent agents, agents confirm that there is the subgraph. 
In this case, agents think that the given graph is a star. Then, if two agents other than the central agent interact, agents decide that the graph is not a star. If such an interaction does not occur, agents continue to think that the given graph is a star. 



To explain the details, first we introduce variables at an agent $a$. 
\begin{itemize}
\item $LF_a \in \{\F$, $\F'$, $l'$, $L_2$, $L_3$, $\ldots$, $L_{n-1}\}$: Variable $LF_a$, initialized to $\F$, represents a role of agent $a$. $LF_a=L_i$ means that a central agent $a$ has marked $i$ agents (i.e., agent $a$ has at least $i$ edges). $LF_a=l'$ means that $a$ is a candidate of a central agent and is a marked agent. $LF_a=F$ means that agent $a$ is a non-marked agent. $LF_a=F'$ means that agent $a$ is a marked agent. 
When $LF_a=x$ holds, we refer to $a$ as an $x$-agent.
\item $star_a \in \{yes$, $no$, $never \}$: Variable $star_a$, initialized to $no$, represents a decision of a star. 
If $star_a = yes$ holds, $\gamma(s_a) = yes$ holds (i.e., $a$ decides that a given graph is a star).
If $star_a = no$ or $star_a=never$ holds, $\gamma(s_a) = no$ holds (i.e., $a$ decides that a given graph is not a star). 
$star_a=never$ means the stronger decision of $no$. If agent $a$ with $star_a=never$ interacts with agent $b$, $star_b$ transitions to $never$ regardless of the value of $star_b$.  
\end{itemize}
The protocol is given in Algorithm \ref{alg:starn}. Algorithm~\ref{alg:starn} uses $3n+3$ states because the number of values taken by variable $LF_a$ is $n+1$ and the number of values taken by variable $star_a$ is 3.

\begin{algorithm}[t!]
	\caption{A SI protocol}         
	\label{alg:starn}
	\algblockdefx{when}{End}{\textbf{when}}{\textbf{end}}
	\begin{algorithmic}[1]
		\renewcommand{\algorithmicrequire}{\textbf{A variable at an agent $a$:}}
		\Require
		\Statex $LF_a \in \{\F$, $\F'$, $l'$, $L_2$, $L_3$, $\ldots$, $L_{n-1}\}$: States that represent roles of agents, initialized to $\F$.  $L_i$ represents a central agent, $l'$ represents a candidate of the central agent, $\F'$ represents a marked agent, and $\F$ represents a non-marked agent. 
       \Statex $star_a \in \{yes$, $no$, $never \}$: Decision of a star, initialized to $no$. 
		\when { agent $a$ interacts with agent $b$} \textbf{do}
		\Statex $\llangle$ The behavior when $star_a$ or $star_b$ is $never$ $\rrangle$
		\If{$star_a= never$ $\lor$ $star_b = never$}
		\State $star_a \leftarrow never$, $star_b \leftarrow never$
		\Statex $\llangle$ The behaviors when $star_a \neq never$ and $star_b \neq never$ holds $\rrangle$
		\Else
		\Statex \{ The election of a central agent \}
		\If{$LF_a= \F$ $\wedge$ $LF_b = \F$}
		\State $LF_a \leftarrow l'$, $LF_b \leftarrow l'$
		\ElsIf{$LF_a= l'$ $\wedge$ $LF_b = \F$}
		\State $LF_a \leftarrow L_2$, $LF_b \leftarrow \F'$
		\Statex \{ Counting the number of adjacent agents by the central agent \}
		\ElsIf{$LF_a = L_i $ $\wedge$ $LF_b = \F$ ($2 \le i \le n-2$)}
		\State $LF_a \leftarrow L_{i+1}$, $LF_b \leftarrow \F'$
		\EndIf
		\If{$LF_a= L_{n-1}$} 
		\State $star_a \leftarrow yes$, $star_b \leftarrow yes$
		\EndIf
		\Statex \{ Decision of $never$ \}
		\If{$LF_a= \F'$ $\wedge$ $LF_b = \F'$}
		\State $star_a \leftarrow never$, $star_b \leftarrow never$
		\ElsIf{$LF_a= \F'$ $\wedge$ $LF_b = l'$}
		\State $star_a \leftarrow never$, $star_b \leftarrow never$
		\EndIf
		\Statex \{ Conveyance of $yes$ \}
		\If{$star_a= yes$ $\lor$ $star_b = yes$}
		\State $star_a \leftarrow yes$, $star_b \leftarrow yes$
		\EndIf
		\EndIf
		\End
	\end{algorithmic}
\end{algorithm}

Now, we show the details of the protocol. 
\paragraph*{Examination of the subgraph (lines 5--14 and 20--22 of the pseudocode)}
First, agents elect a central agent. Initially, $LF_a$ of each agent $a$ is $\F$. 
When two $\F$-agents interact, both $\F$-agents transition to $l'$-agents (lines 5--6). When an $l'$-agent interacts with an $\F$-agent, the $l'$-agent transitions to an $L_2$-agent (lines 7--8). By these behaviors, we can observe that the $L_2$-agent is adjacent to at least two agents. 
Clearly, if the given graph is a star, agents correctly elect the central agent. 
If the given graph is not a star, agents may elect no central agent or multiple central agents. 
However, in both cases, $star$ of each agent keeps $no$ (we explain the details later). 

Then, to confirm that the central agent is adjacent to $n-1$ agents, the central agent marks adjacent agents one by one and counts how many times the agent interacts with a non-marked agent. 
Concretely, for $2 \le i \le n-3$, when the $L_i$-agent interacts with an $\F$-agent, the $L_i$-agent marks the $\F$-agent (i.e., the $\F$-agent transitions to an $\F'$-agent), and the $L_i$-agent transitions to the $L_{i+1}$-agent (lines 9--11). 
Note that, when an agent becomes the central agent, the agent has already interacted two different agents. Thus, in this protocol, the agent marked the two agents by the interactions, and the agent starts as an $L_2$-agent (lines 5--8). Recall that an $l'$-agent is a marked agent. 

When an agent becomes an $L_{n-1}$-agent, the $L_{n-1}$-agent notices that the agent is adjacent to $n-1$ agents. 
Thus, agents notice that there is a star subgraph in the given graph such that the subgraph consists of $n$ agents. Hence, when an agent becomes an $L_{n-1}$-agent, the interacting agents make their $star$ transition to $yes$ (lines 12--14). 
When an agent $a$ with $star_a=yes$ and an agent $b$ with $star_b=no$ interact, $star_b$ transitions to $yes$ (lines 20--22). By this behavior, eventually $yes$ is conveyed to all agents. 

When agents elect multiple central agents (resp., no central agent), $star$ of each agent cannot transition to $yes$ because the central agents cannot mark $n-1$ agents (resp., there is no agent that marks agents). Thus, since initially $star$ of each agent is $no$, agents make their $star$ keep $no$. 
In both cases, since clearly the given graph is not a star, agents make a correct decision. 

\paragraph*{Decision of $never$ (lines 2--3 and 15--19 of the pseudocode)}
When two $\F'$-agents interact, they make their $star$ transition to $never$ (lines 15--16). 
When an $\F'$-agent interacts with an $l'$-agent, they also make their $star$ transition to $never$ (lines 17--19). 

From now, we show that 1) agents do not perform the above behaviors if the given graph is a star, and 2) agents perform the above behaviors or agents make their $star$ keep $no$ if the given graph is not a star. 

If the given graph is a star, agents correctly elect the central agent. Thus, since agents other than the central agent can interact only with the $L_i$-agent, agents do not perform the above behaviors.

Next, let us consider the case where the given graph is not a star. 
In this case, if agents do not confirm the star subgraph that consists of $n$ agents, agents do not make their $star$ transition to $yes$. Thus, agents make a correct decision because, if $star_a = no$ or $star_a=never$ holds, $\gamma(s_a) = no$ holds. 
On the other hand, if agents confirm the star subgraph that consists of $n$ agents, there is an agent that is not elected as the central agent and is adjacent to two or more agents. 
Moreover, after agents confirm the subgraph, the agent is an $\F'$- or $l'$-agent and is adjacent to another $\F'$- or $l'$-agent. 
Hence, eventually two $\F'$-agents interact, or the $\F'$-agent interacts with the $l'$-agent.
By the interaction, the interacting agents make their $star$ transition to $never$. 
When an agent $a$ with $star_a=never$ and an agent $b$ interact, $star_b$ transitions to $never$ (lines 2--3). By this behavior, eventually $never$ is conveyed to all agents. Note that, when this behavior occurs, the behavior of lines 20--22 does not occur. Thus, $never$ has priority over $yes$. 

From these facts, we can observe that agents make a correct decision in any case. 

\subsection*{Correctness}
First, we define some notations. 
\begin{definition}
The number of $\F$-agents is denoted by $\# \F$. 
The number of $\F'$-agents is denoted by $\# \F'$. 
The number of $L_i$-agents for $2 \le i \le n-1$ is denoted by $\# L_i$. Let $\# L = \# L_2 + \# L_3 + \cdots + \# L_{n-1}$. 
\end{definition}

\begin{definition}
A function $L(a, C)$ represents the number of agents counted by an agent $a$ in configuration $C$. Concretely, if $LF_a=L_i$ holds in $C$, $L(a, C)=i$ holds. If $LF_a \in \{\F$, $\F'$, $l' \}$ holds in $C$, $L(a,C)=0$ holds. 
\end{definition}

From now, we show an equation that holds in any execution of Algorithm \ref{alg:starn}. 
\begin{lemma}
\label{lem:FL}
Let us consider an execution $\Xi$ with some graph $G=(V, E)$, where $V=\{v_1$, $v_2$, $\ldots$, $v_n\}$. In any configuration $C$ of $\Xi$, $L(v_1, C)+ L(v_2, C)+ \cdots + L(v_n, C)= \# \F' + \# L$ holds. 
\end{lemma}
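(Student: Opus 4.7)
\medskip

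\noindent\textbf{Proof plan.} The plan is to prove the equation by induction on the length of the prefix of $\Xi$. Let $S(C) := L(v_1,C)+L(v_2,C)+\cdots+L(v_n,C)$ and $T(C) := \#\F'(C) + \#L(C)$; we show $S(C)=T(C)$ for every configuration $C$ occurring in $\Xi$.

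\medskip

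\noindent\textbf{Base case.} In the initial configuration $C_0$, every agent $v_i$ satisfies $LF_{v_i}=\F$, hence $L(v_i,C_0)=0$, $\#\F'(C_0)=0$, and $\#L(C_0)=0$. Thus $S(C_0)=0=T(C_0)$.

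\medskip

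\noindent\textbf{Inductive step.} Assume $S(C_k)=T(C_k)$ and consider the transition $C_k \to C_{k+1}$ induced by some interaction between agents $a$ and $b$. I would first observe from the pseudocode that the only lines that modify the variable $LF$ at any agent are lines~5--6, 7--8, and 9--11; all other branches (lines~2--3, 12--14, 15--19, 20--22) modify only $star$, so they leave $S$ and $T$ unchanged. Hence it suffices to inspect the three $LF$-updating cases (and their symmetric versions obtained by swapping $a$ and $b$):
\begin{itemize}
\item Lines~5--6 (two $\F$-agents become two $l'$-agents): neither endpoint is an $L_i$-agent nor an $\F'$-agent before or after, so $\Delta S = 0$, $\Delta \#\F' = 0$, $\Delta \#L = 0$.
\item Lines~7--8 ($l'$-agent and $\F$-agent become $L_2$-agent and $\F'$-agent): a new $L_2$-agent is created and a new $\F'$-agent is created. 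Hence $\Delta S = +2$, $\Delta \#\F' = +1$, $\Delta \#L = +1$, so $\Delta T = +2$.
\item Lines~9--11 ($L_i$-agent and $\F$-agent become $L_{i+1}$-agent and $\F'$-agent, for $2 \le i \le n-2$): the $L_i$-agent becomes an $L_{i+1}$-agent and a new $\F'$-agent is created. Hence $\Delta S = +1$, $\Delta \#\F' = +1$, $\Delta \#L = 0$, so $\Delta T = +1$.
\end{itemize}
In every case $\Delta S = \Delta T$, so the inductive hypothesis together with the transition yields $S(C_{k+1}) = T(C_{k+1})$.

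\medskip

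\noindent\textbf{Main obstacle.} There is no deep obstacle here; the proof is a bookkeeping argument. The only thing to be careful about is being exhaustive: one must check that no other branch in Algorithms' body silently modifies $LF$ (in particular, interactions such as $l'$ with $l'$, $L_i$ with $L_j$, $L_i$ with $l'$, $L_i$ with $\F'$, $\F'$ with $\F$, $\F'$ with $l'$, and $\F'$ with $\F'$ all fall outside the three $LF$-updating branches), and to account for the symmetric transition $(q,p)\to(q',p')$ induced by the remark on the weak-fairness definition in the footnote, which gives an identical case analysis with $a$ and $b$ swapped.
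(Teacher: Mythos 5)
Your proposal is correct and follows essentially the same route as the paper: an induction on the configuration index, checking that the only branches affecting the invariant are lines 7--8 (which add $2$ to both sides) and lines 9--11 (which add $1$ to both sides), with all other interactions leaving both sides unchanged. The extra care you take in separating out lines 5--6 and listing the non-$LF$-modifying interactions is just a finer-grained version of the paper's catch-all first case.
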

\begin{proof}
Let us consider an execution $\Xi$ with a graph $G=(V, E)$, where $V=\{v_1$, $v_2$, $\ldots$, $v_n\}$.
We prove the lemma by induction on the index of a configuration.
In the base case ($C_0$), clearly $L(v_1, C_0)+ L(v_2, C_0)+ \cdots + L(v_n, C_0)= \# \F' + \# L = 0$ holds.
For the induction step, we assume that the equation holds in $C_k$. 
Let us consider an interaction at $C_k \rightarrow C_{k+1}$ for three cases. 
\begin{itemize}
\item Case where the behavior of lines 7--8 or 9--11 does not occur at $C_k \rightarrow C_{k+1}$: In this case, clearly $L(v_i, C_{k+1})=L(v_i, C_k)$ holds for $1 \le i \le n$. Moreover, by the interaction, $\# \F'$ and $\# L$ do not change. Thus, $L(v_1, C_{k+1})+ L(v_2, C_{k+1})+ \cdots + L(v_n, C_{k+1})= \# \F' + \# L$ holds in this case. 
\item Case where the behavior of lines 7--8 occurs at $C_k \rightarrow C_{k+1}$: From the pseudocode, an $l'$-agent (resp., an $\F$-agent) transitions to an $L_2$-agent (resp., an $\F'$-agent) by the interaction. Hence, by the interaction, $L(v_1, C_{k+1})+ L(v_2, C_{k+1})+ \cdots + L(v_n, C_{k+1})=L(v_1, C_k)+ L(v_2, C_k)+ \cdots + L(v_n, C_k)+2$ holds. In addition, $\# \F'$ increases by one and $\# L$ increases by one. 
Thus, $L(v_1, C_{k+1})+ L(v_2, C_{k+1})+ \cdots + L(v_n, C_{k+1})= \# \F' + \# L$ holds in this case. 
\item Case where the behavior of lines 9--11 occurs at $C_k \rightarrow C_{k+1}$: From the pseudocode, an $L_i$-agent (resp., an $\F$-agent) transitions to an $L_{i+1}$-agent (resp., an $\F'$-agent) by the interaction. Hence, by the interaction, $L(v_1, C_{k+1})+ L(v_2, C_{k+1})+ \cdots + L(v_n, C_{k+1})=L(v_1, C_k)+ L(v_2, C_k)+ \cdots + L(v_n, C_k)+1$ holds, and $\# \F'$ increases by one. 
Thus, $L(v_1, C_{k+1})+ L(v_2, C_{k+1})+ \cdots + L(v_n, C_{k+1})= \# \F' + \# L$ holds in this case. 
\end{itemize}
In each case, the equation holds after the interaction. 
Therefore, the lemma holds. 
\end{proof}

Next, by using Lemma \ref{lem:FL}, we show that, if the given communication graph is a star, agents make a correct decision. 
\begin{lemma}
\label{lem:stary}
If the given communication graph is a star, $star_a$ of each agent $a$ converges to $yes$ for any weakly-fair execution $\Xi$.
\end{lemma}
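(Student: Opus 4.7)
The plan is to establish three facts: (1) $star_a$ of no agent ever becomes $never$; (2) the center eventually enters state $L_{n-1}$ and stays there; (3) once the center is $L_{n-1}$, weak fairness propagates $star=yes$ to every agent, and this value cannot be overwritten.

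For (1), since the graph is a star, two leaves are never adjacent and therefore never interact. I would then show by a direct case analysis of the $LF$-updating rules that the center's $LF$-value is never $\F'$: starting from $\F$, it can only progress through $l'$ and then through $L_2, L_3, \ldots$. Consequently, in any interaction between the center and a leaf, at most one of the two $LF$-values lies in $\{\F', l'\}$ (namely that of the leaf), so lines 15--16 and 17--19 cannot fire. A separate small check is that no leaf becomes $\F'$ while the center is still $l'$, because leaves become $\F'$ only in the very interaction that promotes the center from $l'$ to $L_2$; hence the pair $(l',\F')$ also never arises.

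For (2), I would describe a concrete pattern of transitions: the first interaction between the center $c$ and some leaf $v_1$ elevates both to $l'$; subsequent interactions between $c$ and a still-$\F$ leaf $v_j$ advance $c$ through $L_2, L_3, \ldots$ while marking each such $v_j$ as $\F'$. Because $v_1$'s only neighbor is $c$, and interactions of the forms $(l',l')$ or $(L_i,l')$ are null, $v_1$ stays $l'$ forever, so only the $n-2$ leaves other than $v_1$ need to be visited as $\F$ to drive $c$ up to $L_{n-1}$. Weak fairness guarantees each such visit eventually occurs. Finally, $L_{n-1}$ is a dead-end for the rules in lines 5--11 (none of them has $L_{n-1}$ as initiator), so once reached the center remains there.

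For (3), once $c$ is $L_{n-1}$, every future center--leaf interaction triggers the unconditional write in lines 12--14, setting $star_c=star_{\text{leaf}}=yes$. Weak fairness then ensures $c$ meets every leaf, so each eventually obtains $star=yes$, and lines 20--22 reinforce this during any further center--leaf interaction. Combined with (1), $star$ can never be overwritten to $never$, and no rule ever writes $no$ over $yes$, so convergence is locked in. The main obstacle is the bookkeeping in (1): one must verify not only that two leaves never interact but also that the temporal order of transitions precludes a center--leaf interaction with state pair $(\F',\cdot)$ or $(l',\F')$. Once that invariant is pinned down, the remainder of the argument is a direct trace through the protocol under weak fairness.
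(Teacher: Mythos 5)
Your proof is correct and follows essentially the same route as the paper's: the center participates in every interaction, its $LF$-value climbs monotonically $\F \to l' \to L_2 \to \cdots \to L_{n-1}$ under weak fairness, the $never$-producing rules (lines 15--19) can never fire because the center is never an $\F'$-agent and no leaf is an $\F'$-agent while the center is still an $l'$-agent, and $yes$ then propagates via lines 12--14 and 20--22. The only cosmetic difference is that you count the $n-2$ remaining $\F$-leaves directly, whereas the paper invokes its invariant Lemma~\ref{lem:FL} to conclude that $\# \F = n-i-1$ holds when the center is an $L_i$-agent.
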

\begin{proof}
First of all, by the property of the star graph, an agent $a$ with degree $n-1$ joins any interaction. 
In the first interaction, agent $a$ and another agent $b$ interact. From the pseudocode, agents $a$ and $b$ transition to $l'$-agents. 
After that, since agents other than $a$ and $b$ are $\F$-agents and agent $a$ with degree $n-1$ is the $l'$-agent, agents can transition to other states only by the behavior of lines 7--8. 
From weak fairness, eventually the behavior of lines 7--8 occurs, and agent $a$ transitions to an $L_2$-agent. 

After that, agent $a$ does not become an $\F$-, $\F'$-, or $l'$-agent from the pseudocode.
Thus, since agent $a$ always joins interactions, agent $b$ is always the $l'$-agent, and any agent other than $a$ and $b$ is $\F$- or $\F'$-agent. 
Hence, from Lemma \ref{lem:FL}, if agent $a$ is the $L_i$-agent, $\# \F'$ is $i-1$ and $\# \F$ is $n-2-(i-1)=n-i-1$. 
Then, from the pseudocode, agent $a$ and $\F$-agents can update their $LF$ only by the behavior of line 9--11. 
Thus, since agent $a$ is adjacent to each agent, the behavior of line 9--11 occurs repeatedly from weak fairness until agent $a$ transitions to an $L_{n-1}$-agent. Note that, since $\# \F$ is $n-i-1$ when $LF_a=L_i$ holds, eventually agent $a$ transitions to the $L_{n-1}$-agent. 
When agent $a$ transitions to the $L_{n-1}$-agent, $star_a$ transitions to $yes$ from the pseudocode. 
Then, agent $a$ is always the $L_{n-1}$-agent. Hence, since agents other than agent $a$ are adjacent only to agent $a$, $star$ of any agent does not transition to $never$ because the behaviors of lines 15--19 do not occur. 
Thus, from the pseudocode, $star$ of each agent transitions to $yes$ by the behavior of lines 20--22, and $star$ of each agent does not update afterwards. 
Therefore, the lemma holds. 
\end{proof}

From now, we show that, even if the given communication graph is not a star, agents make a correct decision. 
To show this, we first show the property of a configuration such that there is an $L_{n-1}$-agent in the configuration. 

\begin{lemma}
\label{lem:stacon}
If there is an $L_{n-1}$-agent in a configuration, there is an $l'$-agent and other $n-2$ agents are $\F'$-agents in the configuration. 
Furthermore, after the configuration, each agent $a$ does not update its $LF_a$. 
\end{lemma}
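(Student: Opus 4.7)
The plan is to combine the numerical invariant of Lemma~\ref{lem:FL} with a bookkeeping argument over the only three rules that modify $LF$, first pinning down the exact composition of the agents' $LF$-values in the configuration $C$, and then observing that no subsequent interaction can alter any $LF$-value.

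First, I would apply Lemma~\ref{lem:FL} in $C$. Let $m$ denote the number of $L$-agents, i.e., agents whose $LF$ has the form $L_j$. The $L_{n-1}$-agent contributes $n-1$ to $\sum_i L(v_i,C)$, and every other $L$-agent contributes at least $2$, so $\sum_i L(v_i,C) \ge (n-1) + 2(m-1) = n + 2m - 3$. On the other hand, Lemma~\ref{lem:FL} gives $\sum_i L(v_i,C) = \#\F' + m$, and clearly $\#\F' + m \le n$ since $\F'$-agents and $L$-agents are disjoint. Combining yields $n + 2m - 3 \le n$, hence $m \le 1$; since $m \ge 1$, we obtain $m = 1$ and then $\#\F' = n - 2$.

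Next I would show that the remaining agent is an $l'$-agent (so $\#\F = 0$). The only rules modifying $LF$ are lines 5--6, 7--8, and 9--11. Among these, only line 7--8 increases $\#L$, so it has been executed exactly once in order to reach $\#L = 1$. The resulting single $L$-agent must have aged from $L_2$ to $L_{n-1}$ by exactly $n-3$ applications of line 9--11, each consuming one $\F$-agent and producing one $\F'$-agent. Writing $k$ for the total number of executions of line 5--6, a direct tally of the rules' net effect on the four populations gives $\#l' = 2k - 1$ and $\#\F = n - 2k - 1 - (n-3) = 2 - 2k$. Both counts must be nonnegative, which forces $k = 1$, whence $\#l' = 1$ and $\#\F = 0$. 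Together with $\#L = 1$ and $\#\F' = n - 2$, this accounts for all $n$ agents and yields precisely the claimed composition.

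Finally, because every $LF$-changing rule (lines 5--6, 7--8, 9--11) requires one of the two interacting agents to have $LF = \F$, and no rule in the protocol ever produces an $\F$-agent, the invariant $\#\F = 0$ is preserved in every configuration reachable from $C$. All other rules (lines 2--3 and 12--22) touch only $star$. Consequently, $LF_a$ is constant from $C$ onward for every agent $a$. The main obstacle is the second paragraph's exclusion of the spurious alternative $\#\F = 1$, $\#l' = 0$: Lemma~\ref{lem:FL} alone does not distinguish these two cases, and one has to perform the transition-count bookkeeping carefully to rule it out.
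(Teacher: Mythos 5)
Your proof is correct, and it follows the paper's overall skeleton --- use Lemma~\ref{lem:FL} to pin down $\# L = 1$ and $\# \F' = n-2$, then rule out the one remaining agent being an $\F$-agent, then observe that with no $\F$-agents left every $LF$-modifying rule is permanently disabled --- but the middle step is argued by a genuinely different route. The paper identifies the $l'$-agent concretely: the $L_{n-1}$-agent must at some point have executed lines 5--6 with a partner; that partner can never subsequently become an $\F$- or $\F'$-agent, and (since $\# L = 1$) is not an $L_i$-agent, so it is still an $l'$-agent in $C$ and must be the single unaccounted-for agent. You instead run a conservation argument over the three $LF$-modifying rules, counting executions ($k$ of lines 5--6, exactly one of lines 7--8, exactly $n-3$ of lines 9--11) and deriving $\# l' = 2k-1$ and $\# \F = 2-2k$, so integrality and nonnegativity force $k=1$. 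Both are sound. The paper's version buys a concrete identification of \emph{which} agent is the $l'$-agent (the partner from the unique lines 5--6 interaction involving the eventual center), which is mildly illuminating; your tally is more mechanical, rules out the spurious case $\# \F = 1$, $\# l' = 0$ by pure arithmetic rather than by the monotonicity observations about individual agents' histories, and delivers $\# \F = 0$ directly, which is exactly the invariant needed for the final freezing claim.
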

\begin{proof}
Let $C$ be a configuration in which there is an agent $a$ that is an $L_{n-1}$-agent. 

First of all, clearly $\# \F' + \# L > n$ does not hold in any configuration. 
Hence, from Lemma \ref{lem:FL}, any agent other than $a$ is not an $L_i$-agent for $2 \le i \le n-1$ in $C$ (otherwise $\# \F' + \# L = n-1+i > n$ holds). 
Thus, since $\# \F' + \# L =n-1$ holds in $C$, $n-2$ agents are $\F'$-agents and one agent $b$ is an $\F$- or $l'$-agent. 

Now, we show that agent $b$ is an $l'$-agent in $C$. 
From the pseudocode, if an agent is an $\F'$-agent in $C$, the agent is not an $l'$-agent before and after $C$. Since an $l'$-agent occurs only by the behavior of lines 5--6 and an $L_{n-1}$-agent exists in $C$, the behavior of lines 5--6 occurs before $C$. 
From these facts, agents $a$ and $b$ performed the behavior of lines 5--6 before $C$. 
From the pseudocode, if an agent is an $l'$-agent, the agent is not an $\F$-agent afterwards. 
Hence, agent $b$ is an $l'$-agent in $C$ and thus agent $a$ is the $L_{n-1}$-agent, agent $b$ is the $l'$-agent, and other $n-2$ agents are $\F'$-agents in $C$. 

From the pseudocode, clearly each agent $a$ does not update its $LF_a$ after $C$. 
Therefore, the lemma holds. 
\end{proof}

Now, by using Lemma \ref{lem:stacon}, we show that agents make a correct decision even if the given communication graph is not a star. 

\begin{lemma}
\label{lem:starn}
If the given communication graph is not a star, $star_a$ of each agent $a$ converges to $no$ or $never$ for any weakly-fair execution $\Xi$.
\end{lemma}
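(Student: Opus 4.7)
My plan is to case-split on whether any agent ever attains state $L_{n-1}$ during $\Xi$. I first observe, from the pseudocode, that every $star_a$ can only evolve along the monotone chain $no \to yes \to never$ with possible skips: $never$ is absorbing, and from $yes$ the value can only go to $yes$ or $never$. Hence $star_a$ necessarily reaches a limit in $\{no,yes,never\}$, so the task reduces to ruling out the limit $yes$.

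\textbf{Case 1: no $L_{n-1}$-agent ever occurs in $\Xi$.} The only rules that can set a $star$-variable to $yes$ are line 13 (requiring $LF_a=L_{n-1}$) and the conveyance block on lines 20--22 (which requires some endpoint already to hold $yes$). Since every $star_a$ is initialized to $no$, an induction on the interaction index shows that $star_a$ never leaves $\{no,never\}$, and so it converges to $no$ or $never$.

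\textbf{Case 2: some $L_{n-1}$-agent appears.} Let $C$ be the first such configuration. Lemma \ref{lem:stacon} gives that, in $C$ and forever after, there is a unique $L_{n-1}$-agent $a^*$, a unique $l'$-agent $b^*$, and $n-2$ $\F'$-agents, with no further $LF$-changes. Tracing the history of $a^*$, each transition on lines 5--6, 7--8 and 9--11 consumes a fresh $\F$-neighbor of $a^*$ (these being the only rules that can modify $a^*$'s $LF$), so $a^*$ has interacted with $n-1$ distinct agents and is therefore adjacent in $G$ to every other vertex. Those $n-1$ edges already form a spanning star centered at $a^*$; since $G$ is not a star, there must exist an additional edge $(u,v)\in E$ with $u,v\in V\setminus\{a^*\}$. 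The endpoints $u,v$ lie in $\{b^*\}\cup\{\F'\text{-agents}\}$, so their $LF$-pair matches one of the patterns handled by lines 15--19 (up to the symmetric direction allowed by the weak-fairness footnote).

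By weak fairness $u$ and $v$ interact infinitely often after $C$. At the first such interaction either some $star$-variable is already $never$ (so lines 2--3 fire), or both are in $\{no,yes\}$ and the appropriate rule on lines 15--19 fires; either way both $star_u$ and $star_v$ leave that interaction equal to $never$, and the conveyance block on lines 20--22 cannot revert them because neither endpoint then holds $yes$. Since $never$ is absorbing and is copied across every interaction via lines 2--3, and since $G$ is connected with every edge interacting infinitely often under weak fairness, $star_c$ converges to $never$ for every agent $c$. The main obstacle I anticipate is the combinatorial step pinning $a^*$'s degree to exactly $n-1$ and forcing a non-central edge; the remaining propagation arguments are routine bookkeeping.
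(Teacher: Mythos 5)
Your proposal is correct and follows essentially the same route as the paper's proof: a case split on whether a $yes$/$L_{n-1}$ state ever arises (the two are equivalent since lines 12--14 are the only source of $yes$), an appeal to Lemma~\ref{lem:stacon} for the frozen token configuration, the existence of an edge between two non-central agents, and weak fairness to trigger lines 15--19 and propagate $never$. The only cosmetic difference is that you derive the non-central edge by showing the $L_{n-1}$-agent has degree $n-1$, whereas the paper argues that a connected non-star graph has two agents of degree at least two; both yield the same conclusion.
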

\begin{proof}
Let $\Xi$ be a weakly-fair execution with a non-star graph. 
Let us consider two cases: (1) $star_a$ of some agent $a$ transitions to $yes$ in $\Xi$, and (2) $star$ of any agent does not transition to $yes$ in $\Xi$. 

In case (1), since the behavior of lines 12--14 occurs, an agent $a$ becomes an $L_{n-1}$-agent. Moreover, from the pseudocode, agent $a$ is always the $L_{n-1}$-agent afterwards. 
From now on, let us consider configurations after agent $a$ becomes the $L_{n-1}$-agent. 
Since the given graph is not a star, there are two agents whose degree is two or more. 
Hence, there is an agent $b$ that is adjacent to an agent other than $a$. 
From weak fairness, eventually agent $b$ interacts with the agent other than $a$. 
By the interaction, from Lemma \ref{lem:stacon}, the behavior of lines 15--16 or 17--19 occurs and $star_b$ transitions to $never$. After that, by the behavior of lines 2--3, $star$ of each agent transitions to $never$. Furthermore, after agents update their $star$ to $never$, agents do not update their $star$. Thus, $star$ of each agent converges to $never$ in this case. 

In case (2), if $star$ of an agent transitions to $never$, $star$ of each agent converges to $never$ similarly to case (1); otherwise, since initially $star$ of each agent is $no$, $star$ of each agent converges to $no$.
Therefore, the lemma holds. 
\end{proof}

From Lemma \ref{lem:stary}, if the given communication graph is a star, $star_a$ of each agent $a$ converges to $yes$ in any execution of the protocol. 
From Lemma \ref{lem:starn}, if the given communication graph is not a star, $star_a$ of each agent $a$ converges to $no$ or $never$ in any execution of the protocol. 
Thus, we can obtain the following theorem. 
\begin{theorem}
Algorithm \ref{alg:starn} solves the star identification problem. 
That is, there exists a protocol with $O(n)$ states and designated initial states that solves the star identification problem under weak fairness if the number of agents $n$ is given. 
 \end{theorem}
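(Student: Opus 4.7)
The plan is to combine Lemmas \ref{lem:stary} and \ref{lem:starn} and verify that the converged configuration satisfies the stability requirement in the definition of the $gp$ identification problem. For any weakly-fair execution $\Xi$, the lemmas already show that $star_a$ of every agent $a$ converges to the correct value: to $yes$ when the graph is a star (Lemma \ref{lem:stary}) and to $no$ or $never$ otherwise (Lemma \ref{lem:starn}). Since $\gamma$ maps both $no$ and $never$ to the output symbol $no$, this gives the correct unanimous output in a tail of $\Xi$.

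The remaining thing to check is stability, i.e., that once all agents output the correct symbol, every reachable configuration still has that same unanimous output. For the star case, the proof of Lemma \ref{lem:stary} already observes that once the unique $L_{n-1}$-agent appears, its $LF$ never changes and no agent ever reaches an $\F'$/$l'$ pair, so the behaviors in lines 2--3 and 15--19 are permanently disabled; hence $star$ can only be set by line 14 or line 22, both assigning $yes$. For the non-star case, once some agent has $star=never$, line 3 ensures $never$ spreads along every interaction and is never overwritten (line 20 is guarded by the else-branch of line 2); once all agents hold $never$, no transition can leave this set, so the configuration is stable. In the sub-case of Lemma \ref{lem:starn} where $never$ never appears, $star$ of every agent stays $no$ forever simply because no rule ever writes anything other than $no$ in that execution.

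Combining these two items, every weakly-fair execution contains a stable configuration whose common output matches whether the graph is a star, so Algorithm \ref{alg:starn} solves the star identification problem under weak fairness. Finally, the state count is immediate from the variable domains: $|LF_a|=n+1$ and $|star_a|=3$, giving $3(n+1)=O(n)$ states. The only mildly subtle point, and the one I would spend the most care on, is the stability argument in the non-star case, since a naive reading of ``converges to $no$ or $never$'' allows different agents to converge to different symbols; the priority built into lines 2--3 (whereby $never$ always overrides $no$) is what rules out oscillation and guarantees a single unanimous fixed output.
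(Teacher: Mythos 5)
Your proof is correct and follows essentially the same route as the paper: the theorem is obtained directly by combining Lemma \ref{lem:stary} (star case) with Lemma \ref{lem:starn} (non-star case) and noting the $3(n+1)=O(n)$ state count. The extra attention you give to stability (the permanent disabling of lines 15--19 once the $L_{n-1}$-agent exists, and the priority of $never$ over $yes$ via the guard at line 2) is already embedded in the proofs of those two lemmas rather than restated in the paper's theorem proof, so it is a welcome but not divergent elaboration.
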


\section{Impossibility Results}
\subsection{A Common Property of Graph Class Identification Protocols for Impossibility Results}
In this subsection, we present a common property that holds for protocols with designated initial states under weak fairness. 

With designated initial states under weak fairness, we assume that a protocol ${\cal P}$ solves some of the graph class identification problems.
From now, we show that, with ${\cal P}$, there exists a case where agents cannot distinguish between some different connected graphs. Note that ${\cal P}$ has no constraints for an initial knowledge (i.e., for some integer $x$, ${\cal P}$ is $\pron{x}$, $\proP{x}$, or a protocol with no initial knowledge). 

\begin{lemma}
\label{lem:impwea}
Let us consider a communication graph $G=(V, E)$, where $V=\{v_1$, $v_2$, $v_3$, $\ldots$, $v_n\}$. 
Let $G'=(V', E')$ be a communication graph that satisfies the following, where $V'=\{v'_1$, $v'_2$, $v'_3$, $\ldots$, $v'_{2n}\}$. 
\begin{itemize}
\item $E'=\{(v'_x,v'_y),(v'_{x+n},v'_{y+n}) \in V' \times V' \mid (v_x, v_y) \in E \} \cup \{(v'_1,v'_{z+n}), (v'_{1+n}, v'_z) \in V' \times V' \mid (v_1, v_{z}) \in E\}$ (Figure \ref{fig:impg} shows an example of the graphs).
\end{itemize}
Let $\Xi$ be a weakly-fair execution of ${\cal P}$ with $G$.  
If there exists a configuration $C$ of $\Xi$ after which $\forall v \in V:\gamma(s(v))=yn \in \{yes$, $no\}$ holds, there exists an execution $\Xi'$ of ${\cal P}$ with $G'$ such that there exists a configuration $C'$ of $\Xi'$ after which $\forall v' \in V':\gamma(s({v'}))=yn$ holds. 
\begin{figure}[t]
\begin{center}
\includegraphics[scale=0.35]{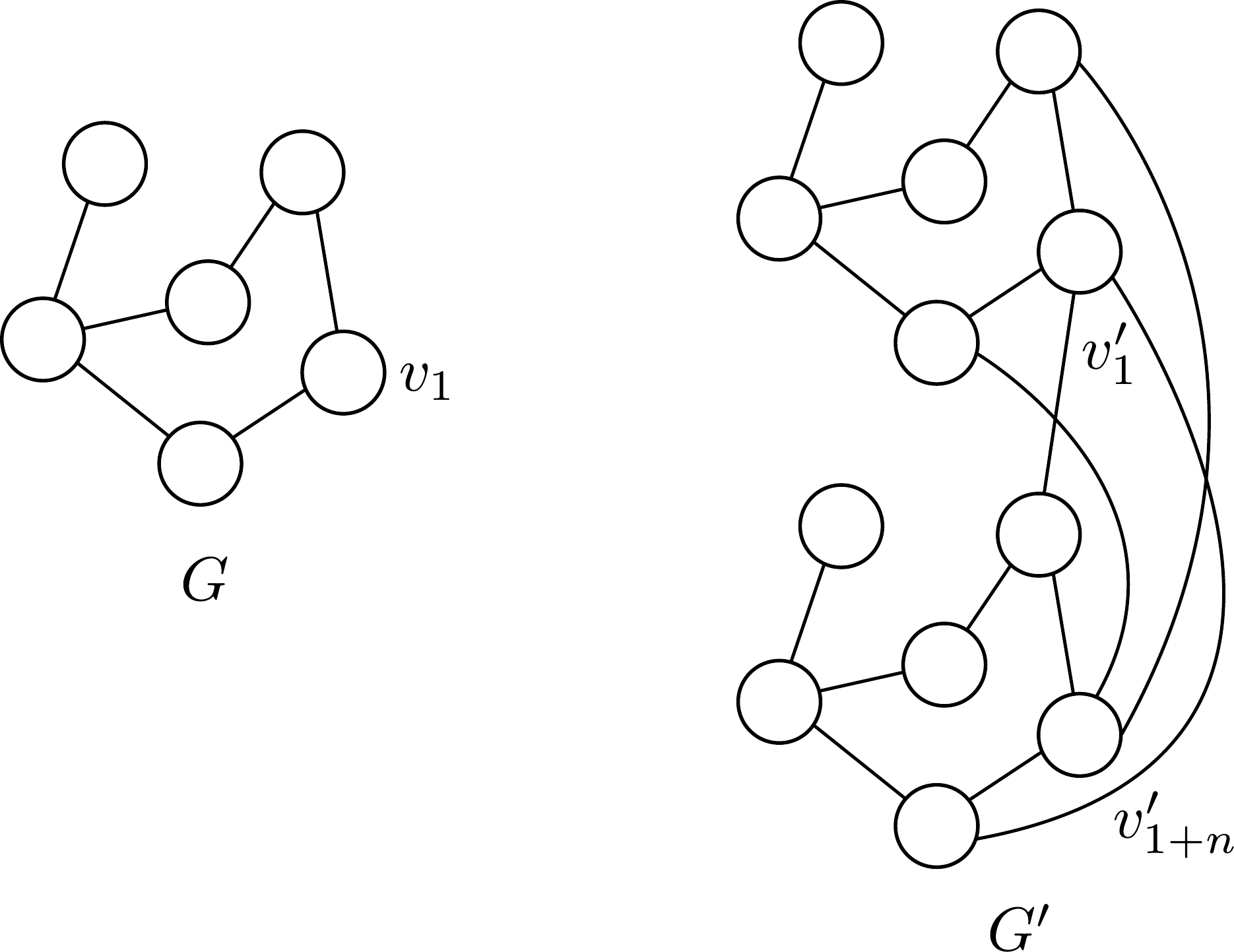}
\caption{An example of graphs $G$ and $G'$}
\label{fig:impg}
\end{center}
\end{figure}
\end{lemma}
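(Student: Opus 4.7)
The plan is to construct $\Xi'$ on $G'$ by simulating $\Xi$ in lockstep across the two copies of $V$ that sit inside $V'$, namely Copy A $\{v'_1, \ldots, v'_n\}$ and Copy B $\{v'_{1+n}, \ldots, v'_{2n}\}$. I will pair each agent $v_i$ of $G$ with both $v'_i$ and $v'_{i+n}$ of $G'$, and maintain the invariant that at the end of every round of the simulation $v'_i$ and $v'_{i+n}$ share the same state as $v_i$ in the corresponding configuration of $\Xi$. Because the protocol has designated initial states and every agent starts in the same state, the invariant holds vacuously at round $0$.

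Each interaction of $\Xi$ triggers one round of two interactions in $\Xi'$. If the interaction in $\Xi$ is $(v_x, v_y)$ with $v_x$ as initiator and $1 \notin \{x,y\}$, then in $\Xi'$ I schedule first $(v'_x, v'_y)$ and then $(v'_{x+n}, v'_{y+n})$, using the Copy A and Copy B edges provided by the first part of $E'$ and preserving the initiator/responder roles. If the interaction of $\Xi$ involves $v_1$, say $(v_1, v_z)$ with $v_1$ as initiator, I have two admissible schedules in $\Xi'$: either the direct pair $(v'_1, v'_z), (v'_{1+n}, v'_{z+n})$, or the cross pair $(v'_1, v'_{z+n}), (v'_{1+n}, v'_z)$ drawn from the second part of $E'$. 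Both schedules consist of legal interactions on $G'$, and both preserve the invariant, because by the inductive hypothesis the responder in each scheduled interaction starts in the same state as $v_z$, and, since $\delta$ is deterministic, each of the two paired interactions then transitions its participants to exactly the states that $v_1$ and $v_z$ take on in $\Xi$.

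A routine induction on the round number yields the invariant at every round boundary. Let $C'$ be the configuration of $\Xi'$ at the end of the round that corresponds to $C$ in $\Xi$. At every subsequent round boundary, each pair $v'_i, v'_{i+n}$ holds the same state as $v_i$ does in some configuration of $\Xi$ occurring at or after $C$, so both output $yn$. At a configuration sitting between the two interactions of a single round, exactly one member of each relevant pair has already advanced to its new state while its twin still holds the old one; however, both the old and the new state are states of $v_i$ appearing in $\Xi$ after $C$, so by hypothesis both also map under $\gamma$ to $yn$. Consequently every configuration of $\Xi'$ after $C'$ satisfies $\forall v' \in V' : \gamma(s(v')) = yn$, as required.

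The one delicate verification is that the cross-edge schedule really preserves lockstep, since the edges at $v'_1$ and $v'_{1+n}$ are precisely what make $G'$ fail to be a disjoint union of two copies of $G$. The observation that carries the argument is that, under the pairing invariant, $v'_{z+n}$ is indistinguishable from $v'_z$ as far as the transition $\delta$ at the moment of interaction is concerned, so routing each $v_1$-interaction through the direct edges of the two copies or through the cross edges is completely neutral for the simulation. This freedom can additionally be exploited, if the overall impossibility argument later demands it, by alternating between the two schedules so as to activate each cross edge infinitely often in both directions and thereby promote $\Xi'$ to a weakly-fair execution.
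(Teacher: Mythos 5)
Your proof is correct and follows essentially the same route as the paper: a lockstep simulation of $\Xi$ on the two copies of $G$ sitting inside $G'$, with the key observation that each interaction involving $v_1$ can be routed either through the intra-copy edges or through the cross edges without disturbing the pairing invariant. The only point to tighten is that the weak fairness of $\Xi'$, which you present as an optional afterthought, is in fact essential for the downstream impossibility theorems (the protocol is only required to be correct on weakly-fair executions); the paper secures it by decomposing the tail of $\Xi$ into blocks in which every edge of $E$ is used in both directions and swapping the roles of $v'_1$ and $v'_{1+n}$ on alternate blocks, which is precisely the alternation you sketch and should be made part of the proof proper.
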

\begin{proof}

First, we define a term that represents a relation between configurations. 
Let $C_i$ and $C'_j$ be configurations with $G$ and $G'$, respectively.
If $s(v_x, C_i)=s(v'_x, C'_j)=s(v'_{x+n}, C'_j)$ holds for $1 \le x \le n$, we say that $C_i$ and $C'_j$ are equivalent.

Let $\Xi$ be a weakly-fair execution of ${\cal P}$ with $G$ such that there exists a configuration $C_t$ of $\Xi$ after which $\forall v \in V:\gamma(s(v))=yn \in \{yes$, $no\}$ holds. Without loss of generality, we assume $yn=yes$. 

Since $\Xi$ is weakly fair, $\Xi$ can be represented by the following $\Xi = C_0$, $C_1$, $C_2$, $\ldots$, $C_t$, $\ldots$, $C_{t'_0}$, $\xi_1$, $C_{t'_1}$, $\xi_2$, $C_{t'_2}$, $\xi_3$, $\ldots$. 
\begin{itemize}
\item For $u \ge 0$, $C_{t'_u}$ is a stable configuration such that $C_{t'_0}=C_{t'_1}=C_{t'_2}=\cdots$ holds and $C_{t'_u}$ occurs infinitely often. 
\item For $j \ge 1$, $\xi_j$ is a sub-execution such that, in $C_{t'_{j-1}},\xi_{j},C_{t'_j}$, for each pair $(a, b)$ in $E$, agent $a$ interacts with agent $b$ and agent $b$ interacts with agent $a$, at least once. 
\end{itemize}

\begin{figure}[t]
\begin{center}
\includegraphics[scale=0.24]{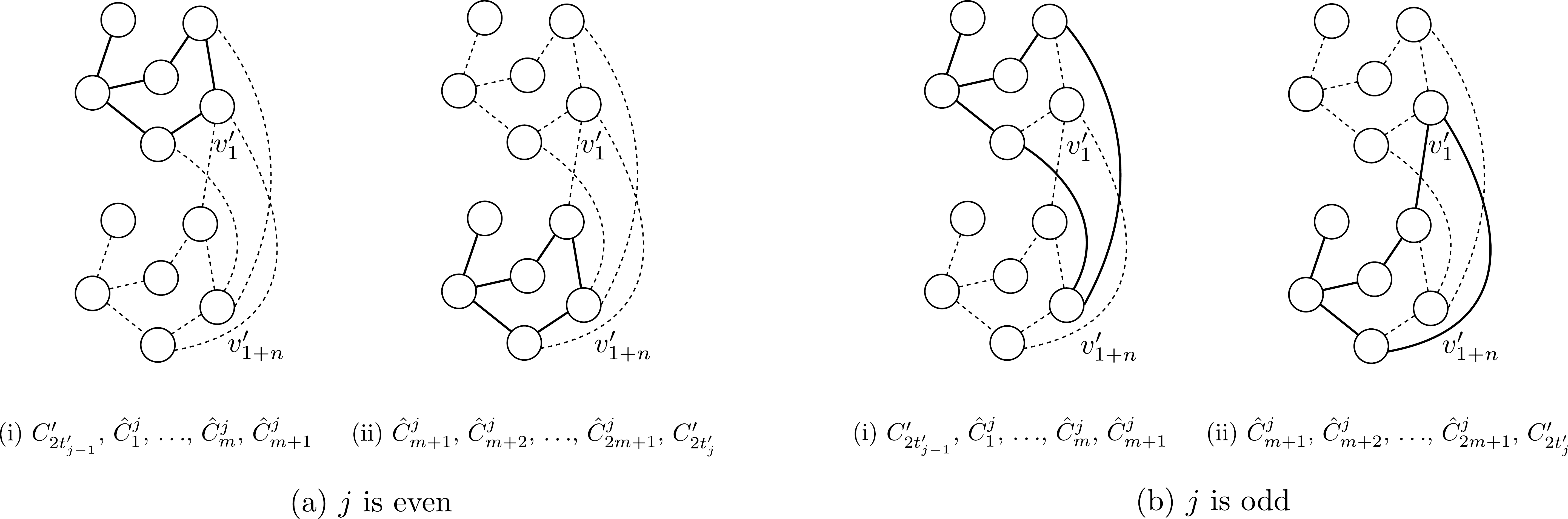}
\caption{Images of $C'_{2t'_{j-1}}$, $\xi'_j$, $C'_{2t'_j}$. A solid line represents an edge on which interactions occur in $C'_{2t'_{j-1}}$, $\xi'_j$, $C'_{2t'_j}$, and a dashed line represents an edge on which interactions do not occur in $C'_{2t'_{j-1}}$, $\xi'_j$, $C'_{2t'_j}$}
\label{fig:impe}
\end{center}
\end{figure}

Next, let us consider the following execution $\Xi' = C'_0$, $C'_1$, $C'_2$, $\ldots$, $C'_{2t'_0}$, $\xi'_1$, $C'_{2t'_1}$, $\xi'_2$, $C'_{2t'_2}$, $\xi'_3$, $\ldots$ of ${\cal P}$ with $G'$. 
\begin{itemize}
\item For $0 \le i < t'_0$, when $v_x$ interacts with $v_y$ at $C_i \rightarrow C_{i+1}$, $v'_x$ interacts with $v'_y$ at $C'_{i} \rightarrow C'_{i+1}$, and $v'_{x+n}$ interacts with $v'_{y+n}$ at $C'_{i+t'_0} \rightarrow C'_{i+t'_0+1}$. 
Clearly, $v'_1$, $\ldots$, $v'_n$ and $v'_{1+n}$, $\ldots$, $v'_{2n}$ behave similarly to $v_1$, $\ldots$, $v_n$ in $\Xi$ and thus $C_{t'_0}$ and $C'_{2t'_0}$ are equivalent. 
\item For $j \ge 1$, by using $\xi_{j} = \C{j}{1}$, $\C{j}{2}$, $\C{j}{3}$, $\ldots$, $\C{j}{m}$, we define $\xi'_{j} = \Ch{j}{1}$, $\Ch{j}{2}$, $\Ch{j}{3}$, $\ldots$, $\Ch{j}{2m+1}$ as follows (Figure \ref{fig:impe} shows images of $C'_{2t'_{j-1}}$, $\xi'_j$, $C'_{2t'_j}$). 
\begin{itemize}
\item Case where $j$ is an even number: $v'_1$, $\ldots$, $v'_n$ and $v'_{1+n}$, $\ldots$, $v'_{2n}$ behave similarly to $v_1$, $\ldots$, $v_n$ in $\Xi$. 
Concretely, for $1 \le i < m$, when $v_x$ interacts with $v_y$ at $\C{j}{i} \rightarrow \C{j}{i+1}$, $v'_x$ interacts with $v'_y$ at $\Ch{j}{i} \rightarrow \Ch{j}{i+1}$, and $v'_{x+n}$ interacts with $v'_{y+n}$ at $\Ch{j}{i+m+1} \rightarrow \Ch{j}{i+m+2}$. 
When $v_x$ interacts with $v_y$ at $C_{t'_{j-1}} \rightarrow \C{j}{1}$ (resp., $\C{j}{m} \rightarrow C_{t'_{j}}$), $v'_x$ interacts with $v'_y$ at $C'_{2t'_{j-1}} \rightarrow \Ch{j}{1}$ (resp., $\Ch{j}{m} \rightarrow \Ch{j}{m+1}$), and $v'_{x+n}$ interacts with $v'_{y+n}$ at $\Ch{j}{m+1} \rightarrow \Ch{j}{m+2}$ (resp., $\Ch{j}{2m+1} \rightarrow C'_{2t'_{j}}$). 

In this case, clearly $v'_1$, $\ldots$, $v'_n$ and $v'_{1+n}$, $\ldots$, $v'_{2n}$ make transitions similarly to $v_1$, $\ldots$, $v_n$ in $\Xi$. 
Hence, if $C_{t'_{j-1}}$ and $C'_{2t'_{j-1}}$ are equivalent, $C_{t'_{j}}$ and $C'_{2t'_{j}}$ are equivalent. 
\item Case where $j$ is an odd number: $\{v'_{1+n}\} \cup \{v'_2$, $\ldots$, $v'_n\}$ and $\{v'_1\} \cup \{v'_{2+n}$,  $\ldots$, $v'_{2n}\}$ behave similarly to $v_1$, $\ldots$, $v_n$ in $\Xi$. 
Concretely, for $1 \le i < m$ and $x, y \neq 1$, when $v_x$ interacts with $v_y$ at $\C{j}{i} \rightarrow \C{j}{i+1}$, $v'_x$ interacts with $v'_y$ at $\Ch{j}{i} \rightarrow \Ch{j}{i+1}$, and $v'_{x+n}$ interacts with $v'_{y+n}$ at $\Ch{j}{i+m+1} \rightarrow \Ch{j}{i+m+2}$. 
When $v_x$ interacts with $v_y$ at $C_{t'_{j-1}} \rightarrow \C{j}{1}$ (resp., $\C{j}{m} \rightarrow C_{t'_{j}}$), $v'_x$ interacts with $v'_y$ at $C'_{2t'_{j-1}} \rightarrow \Ch{j}{1}$ (resp., $\Ch{j}{m} \rightarrow \Ch{j}{m+1}$), and $v'_{x+n}$ interacts with $v'_{y+n}$ at $\Ch{j}{m+1} \rightarrow \Ch{j}{m+2}$ (resp., $\Ch{j}{2m+1} \rightarrow C'_{2t'_{j}}$). 

For $1 \le i < m$, when $v_1$ interacts with $v_x$ (resp., $v_x$ interacts with $v_1$) at $\C{j}{i} \rightarrow \C{j}{i+1}$, $v'_{1+n}$ interacts with $v'_x$ (resp., $v'_x$ interacts with $v'_{1+n}$) at $\Ch{j}{i} \rightarrow \Ch{j}{i+1}$, and $v'_{1}$ interacts with $v'_{x+n}$ (resp., $v'_{x+n}$ interacts with $v'_{1}$) at $\Ch{j}{i+m+1} \rightarrow \Ch{j}{i+m+2}$. 
When $v_1$ interacts with $v_x$ (resp., $v_x$ interacts with $v_1$) at $C_{t'_{j-1}} \rightarrow \C{j}{i}$, $v'_{1+n}$ interacts with $v'_x$ (resp., $v'_x$ interacts with $v'_{1+n}$) at $C'_{2t'_{j-1}} \rightarrow \Ch{j}{i}$, and $v'_{1}$ interacts with $v'_{x+n}$ (resp., $v'_{x+n}$ interacts with $v'_{1}$) at $\Ch{j}{m+1} \rightarrow \Ch{j}{m+2}$. 
When $v_1$ interacts with $v_x$ (resp., $v_x$ interacts with $v_1$) at $\C{j}{m} \rightarrow C_{t'_{j}}$, $v'_{1+n}$ interacts with $v'_x$ (resp., $v'_x$ interacts with $v'_{1+n}$) at $\Ch{j}{m} \rightarrow \Ch{j}{m+1}$, and $v'_{1}$ interacts with $v'_{x+n}$ (resp., $v'_{x+n}$ interacts with $v'_{1}$) at $\Ch{j}{2m+1} \rightarrow C'_{2t'_{j}}$. 

In this case, $\{v'_{1+n}\} \cup \{v'_2$, $\ldots$, $v'_n\}$ and $\{v'_1\} \cup \{v'_{2+n}$,  $\ldots$, $v'_{2n}\}$ make transitions similarly to $v_1$, $\ldots$, $v_n$ in $\Xi$. 
Hence, if $C_{t'_{j-1}}$ and $C'_{2t'_{j-1}}$ are equivalent, $C_{t'_{j}}$ and $C'_{2t'_{j}}$ are equivalent because $s(v_1, C_{t'_{j-1}})=s(v'_1, C'_{2t'_{j-1}})=s(v'_{1+n},C'_{2t'_{j-1}})$ holds. 
\end{itemize}
\end{itemize}

Since $\Xi$ is weakly fair, clearly each pair of agents in $E'$ interacts infinitely often in $\Xi'$ and thus $\Xi'$ satisfies weak fairness. 
By behaviors of $\Xi'$, since $C'_{2t}$ and $C_t$ are equivalent and $\forall v \in V:\gamma(s(v))=yes$ holds after $C_t$, $\forall v' \in V':\gamma(s(v'))= yes$ holds after $C'_{2t}$. 
From these facts, the lemma holds. 
\end{proof}

\subsection{Impossibility with the Known Upper Bound of the Number of Agents under Weak Fairness}

For the purpose of the contradiction, we assume that, for an integer $x$, there exists a protocol $\proP{x}$ that solves some of the graph class identification problems with designated initial states under weak fairness.
We can apply Lemma \ref{lem:impwea} to $\proP{x}$ because we can apply the same protocol $\proP{x}$ to both $G$ and $G'$ in Lemma \ref{lem:impwea}. 
Clearly, we can construct $G$ and $G'$ in Lemma \ref{lem:impwea} such that, for any of properties \emph{line}, \emph{ring}, \emph{tree}, \emph{$k$-regular}, and \emph{star}, $G$ is a graph that satisfies the property, and $G'$ is a graph that does not satisfy the property. 
Therefore, we have the following theorem. 
\begin{theorem}
\label{the:impP}
Even if the upper bound of the number of agents is given, there exists no protocol that solves the line, ring, $k$-regular, star, or tree identification problem with the designated initial states under weak fairness.
\end{theorem}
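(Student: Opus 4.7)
I would derive the impossibility directly from Lemma \ref{lem:impwea}. Suppose, for contradiction, that for some integer $x$ there is a protocol $\proP{x}$ solving one of the listed identification problems under weak fairness with designated initial states. The plan is to construct two connected graphs $G$ and $G'$, each with at most $x$ agents, with $G$ satisfying the target property and $G'$ failing it, and then use Lemma \ref{lem:impwea} to force $\proP{x}$ to stabilize to the same output on both, contradicting correctness.

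The first step is to pick $n$ small enough that $2n \le x$ and large enough to admit a connected graph on $n$ vertices with the target property (a path, cycle, star, tree, or connected $k$-regular graph, respectively). Then define $G'$ by the doubling-plus-crossover construction of Lemma \ref{lem:impwea}, using $v_1$ as the split vertex. The key verification is that $G'$ fails the property of $G$. For \emph{tree} (and hence \emph{line}, since a line is a tree), $G'$ contains the $4$-cycle $v'_1, v'_y, v'_{1+n}, v'_{y+n}$ for any neighbor $v_y$ of $v_1$. For \emph{$k$-regular} (including \emph{ring}, which is $2$-regular), $v'_1$ and $v'_{1+n}$ each have degree $2\deg_G(v_1) = 2k$ while all other vertices retain degree $k$. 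For \emph{star}, both $v'_1$ and $v'_{1+n}$ have large degree, yielding two internal vertices rather than one, so $G'$ is not a star.

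To finish, fix any weakly-fair execution $\Xi$ of $\proP{x}$ on $G$; by correctness of $\proP{x}$ on $G$, there is a configuration of $\Xi$ after which every agent outputs $yes$. Lemma \ref{lem:impwea} then produces a weakly-fair execution $\Xi'$ of $\proP{x}$ on $G'$ whose agents all stabilize to $yes$, contradicting the required $no$ on $G'$. The main obstacle lies in the case-by-case check that the doubling construction destroys each property; this is routine provided $\deg_G(v_1) \ge 1$, which holds automatically whenever $G$ is connected with $n \ge 2$, and choosing $n \le x/2$ appropriately for each property ensures both $G$ and $G'$ are legal inputs to $\proP{x}$.
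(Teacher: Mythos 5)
Your proposal is correct and follows the paper's own proof essentially verbatim: the paper likewise assumes $\proP{x}$ exists, applies Lemma~\ref{lem:impwea} to a graph $G$ satisfying the property and its doubled counterpart $G'$ that does not, and derives a contradiction, so you are merely making explicit the case-by-case check that the paper dismisses with ``clearly.'' One small inaccuracy in that check: in the $k$-regular case the neighbours of $v_1$ do not retain degree $k$ but acquire degree $k+1$ (each gains a cross edge to $v'_{1+n}$ or $v'_1$), though this is harmless since your argument only needs $\deg_{G'}(v'_1)=2k\neq k$.
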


Note that, in Theorem \ref{the:impP}, the bipartite identification problem is not included. 
However, we show later that there is no protocol that solves the bipartite identification problem even if the number of agents is given. 

\subsection{Impossibility with the Known Number of Agents under Weak Fairness}
In this subsection, we show that, even if the number of agents $n$ is given, there exists no protocol that solves the \emph{line}, \emph{ring}, \emph{$k$-regular}, \emph{tree}, or \emph{bipartite} identification problem with designated initial states under weak fairness. 

\paragraph*{Case of Line, Ring, $k$-regular, and Tree}
First, we show that there exists no protocol that solves the \emph{line}, \emph{ring}, \emph{$k$-regular}, or \emph{tree} identification problem. 
Concretely, we show that there is a case where a line graph and a ring graph are not distinguishable. 
To show this, first we give some notations. 
Let $G=(V, E)$ be a line graph with four agents, where $V=\{v_1$, $v_2$, $v_3$, $v_4\}$ and $E=\{(v_1, v_2)$, $(v_2, v_3)$, $(v_3, v_4)\}$. 
Let $G'=(V', E')$ be a ring graph with four agents, where $V'=\{v'_1$, $v'_2$, $v'_3$, $v'_4 \}$ and $E'=\{(v'_1, v'_2)$, $(v'_2, v'_3)$, $(v'_3, v'_4)$, $(v'_4, v'_1)\}$.
Let $s_0$ be an initial state of agents. 
Let us consider a transition sequence $T = (s_0$, $s_0) \rightarrow (s_{a_1}$, $s_{b_1})$, $(s_{b_1}$, $s_{a_1}) \rightarrow (s_{b_2}$, $s_{a_2})$, $(s_{a_2}$, $s_{b_2}) \rightarrow (s_{a_3}$, $s_{b_3})$, $(s_{b_3}$, $s_{a_3}) \rightarrow (s_{b_4}$, $s_{a_4})$, $\ldots$.
Since the number of states is finite, there are $i$ and $j$ such that $s_{a_i}=s_{a_j}$, $s_{b_i}=s_{b_j}$, and $i<j$ hold. 
Let $sa$ and $sb$ be states such that $sa=s_{a_i}=s_{a_j}$ and $sb=s_{b_i}=s_{b_j}$ hold.

From now, we show the intuition of the proof as follows: 
We construct an execution $\Xi$ with $G$. In $\Xi$, agents repeat the following three sub-execution: 1) agents $v_1$ and $v_2$ interact repeatedly until $v_1$ and $v_2$ obtain $sa$ and $sb$, respectively, 2) agents $v_3$ and $v_4$ interact repeatedly until $v_3$ and $v_4$ obtain $sa$ and $sb$, respectively, and 3) $v_3$ and $v_2$ interact repeatedly until $v_3$ and $v_2$ obtain $sa$ and $sb$, respectively. 
Next, we construct an execution $\Xi'$ with $G'$. In $\Xi'$, agents repeat the following four sub-execution: 1) agents $v'_1$ and $v'_2$ interact repeatedly until $v'_1$ and $v'_2$ obtain $sa$ and $sb$, respectively, 2) agents $v'_3$ and $v'_4$ interact repeatedly until $v'_3$ and $v'_4$ obtain $sa$ and $sb$, respectively, 3) $v'_3$ and $v'_2$ interact repeatedly until $v'_3$ and $v'_2$ obtain $sa$ and $sb$, respectively, and 4) $v'_1$ and $v'_4$ interact repeatedly until $v'_1$ and $v'_4$ obtain $sa$ and $sb$, respectively. 
From the definition of $sa$ and $sb$, we can construct those executions, and clearly those executions satisfy weak fairness and agents make the same decision. 

Now, we show the details. First, we define a particular execution $\Xi$ with the line graph $G$. 
Concretely, we define an execution $\Xi = C_0$, $\xi_1$, $C_{u_1}$, $\xi_2$, $C_{u_2}$, $\xi_3$, $C_{u_3}$, $\xi_4$, $C_{u_4}$, $\xi_5$, $\ldots$ of a protocol ${\cal P}$ with $G$ as follows, where $\xi_m$ is a sub-execution ($m \ge 1$).
\begin{itemize}
\item In $C_{0}$, $\xi_1$, $C_{u_1}$, until $s(v_1)=sa$ and $s(v_2)=sb$ hold, agents repeat the following interactions: $v_1$ interacts with $v_2$, $v_2$ interacts with $v_1$, $v_1$ interacts with $v_2$ $\ldots$.  
From the definition of the transition sequence $T$, this is possible. 
\item In $C_{u_1}$, $\xi_2$, $C_{u_2}$, until $s(v_3)=sa$ and $s(v_4)=sb$ hold, agents repeat the following interactions: $v_3$ interacts with $v_4$, $v_4$ interacts with $v_3$, $v_3$ interacts with $v_4$ $\ldots$.
Thus, $s(v_1, C_{u_2})=s(v_3, C_{u_2})= sa$ and $s(v_2, C_{u_2})=s(v_4, C_{u_2})= sb$ hold. 
\item For $i \ge 2$, we construct the execution as follows:
\begin{itemize}
\item Case where $i \bmod 3 = 0$ holds: 
In $C_{u_i}$, $\xi_{i+1}$, $C_{u_{i+1}}$, until $s(v_1)=sa$ and $s(v_2)=sb$ hold, agents repeat the following interactions: $v_1$ interacts with $v_2$, $v_2$ interacts with $v_1$, $v_1$ interacts with $v_2$ $\ldots$.
To satisfy weak fairness, we construct the interactions so that $v_1$ and $v_2$ interact at least twice. 
\item Case where $i \bmod 3 = 1$ holds: 
In $C_{u_i}$, $\xi_{i+1}$, $C_{u_{i+1}}$, until $s(v_2)=sb$ and $s(v_3)=sa$ hold, agents repeat the following interactions: $v_3$ interacts with $v_2$, $v_2$ interacts with $v_3$, $v_3$ interacts with $v_2$ $\ldots$.
To satisfy weak fairness, we construct the interactions so that $v_2$ and $v_3$ interact at least twice. 
\item Case where $i \bmod 3 = 2$ holds: 
In $C_{u_i}$, $\xi_{i+1}$, $C_{u_{i+1}}$, until $s(v_3)=sa$ and $s(v_4)=sb$ hold, agents repeat the following interactions: $v_3$ interacts with $v_4$, $v_4$ interacts with $v_3$, $v_3$ interacts with $v_4$ $\ldots$.
To satisfy weak fairness, we construct the interactions so that $v_3$ and $v_4$ interact at least twice. 
\end{itemize}
For $i \ge 2$, if $s(v_1, C_{u_i})=s(v_3, C_{u_i})= sa$ and $s(v_2, C_{u_i})=s(v_4, C_{u_i})= sb$ hold, we can construct such interactions from the definition of the transition sequence $T$. 
Thus, since $s(v_1, C_{u_2})=s(v_3, C_{u_2})= sa$ and $s(v_2, C_{u_2})=s(v_4, C_{u_2})= sb$ hold, $C_{u_i}=C_{u_{i+1}}$ holds for $i \ge 2$. 
\end{itemize}
Since each pair of agents interact infinitely often in $\Xi$, $\Xi$ is weakly-fair.
Since $\Xi$ is weakly-fair, $\gamma(sa)=\gamma(sb)=yn \in \{yes, no\}$ holds in a stable configuration of $\Xi$. 

Now, we show that there is a case where a line graph and a ring graph are not distinguishable. 
\begin{lemma}
\label{lem:impweaN1}
There exists a weakly-fair execution $\Xi'$ of ${\cal P}$ with $G'$ such that $\forall v' \in V':\gamma(s({v'}))= yn$ holds in a stable configuration of $\Xi'$.
\end{lemma}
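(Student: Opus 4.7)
The plan is to explicitly construct a weakly-fair execution $\Xi'$ of ${\cal P}$ on the ring $G'$ that mimics $\Xi$ on the line $G$, adding one extra sub-execution to cover the additional ring edge $(v'_1, v'_4)$, and then invoke the contradictory assumption that ${\cal P}$ solves the identification problem to extract the stable configuration demanded by the statement.

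First I would construct $\Xi'$ by starting from the all-$s_0$ configuration and indefinitely repeating four sub-executions, guaranteeing at least two interactions per pair in each repetition. Sub-execution (i) has $v'_1$ and $v'_2$ interact, alternating initiator and responder as prescribed by $T$, until $s(v'_1)=sa$ and $s(v'_2)=sb$. Sub-execution (ii) has $v'_3$ and $v'_4$ interact similarly until $s(v'_3)=sa$ and $s(v'_4)=sb$. Sub-execution (iii) has $v'_3$ and $v'_2$ follow the cycle $T_{i \to j}$ of $T$ and return to $(sa,sb)$. Sub-execution (iv) has $v'_1$ and $v'_4$ do the same. The very first round of (i) and (ii) uses the prefix of $T$ up through index $j$, which is feasible because both pairs begin in $(s_0,s_0)$; from then on every invocation of (i)-(iv) begins with its pair in $(sa,sb)$, and by the recurrence $s_{a_i}=s_{a_j}=sa$, $s_{b_i}=s_{b_j}=sb$, the segment $T_{i \to j}$ can be replayed to bring the pair back to $(sa,sb)$.

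Next I would verify weak fairness of $\Xi'$: each of the four ring edges is the site of infinitely many sub-executions, and within each sub-execution the two agents play initiator and responder alternately (since $T$ alternates), so every pair of adjacent agents in $G'$ interacts in both directions infinitely often.

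Finally I would extract the stable configuration. After every full round of (i)-(iv), the resulting configuration $C^{\star}$ satisfies $s(v'_1)=s(v'_3)=sa$ and $s(v'_2)=s(v'_4)=sb$, so $\gamma(s(v'))=\gamma(sa)=\gamma(sb)=yn$ for every $v' \in V'$, using that $\gamma(sa)=\gamma(sb)=yn$ inherited from the stable configuration of $\Xi$. Under the contradictory assumption that ${\cal P}$ solves the problem, the weakly-fair execution $\Xi'$ must itself include some stable configuration $C^{\dagger}$ whose common output is some $yn' \in \{yes,no\}$. Since $C^{\star}$ recurs infinitely often past $C^{\dagger}$, each such recurrence is reachable from $C^{\dagger}$ and by stability has output $yn'$; but it also has output $yn$, forcing $yn'=yn$, so $C^{\dagger}$ is a stable configuration of $\Xi'$ with all outputs equal to $yn$. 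The main obstacle is the bookkeeping of initiator-responder roles so that every sub-execution can genuinely replay the prescribed segment of $T$; once that is secured, stability comes for free from the assumed correctness of ${\cal P}$ on $G'$.
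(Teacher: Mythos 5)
Your proposal is correct and follows essentially the same route as the paper: a round-robin of sub-executions over the four ring edges, each replaying the recurrent segment of $T$ so that every pair returns to $(sa,sb)$, which yields weak fairness and a recurring configuration whose outputs are all $\gamma(sa)=\gamma(sb)=yn$. Your closing step, which uses the assumed correctness of ${\cal P}$ and reachability of the recurring configuration from the stable one to pin its output to $yn$, is a slightly more explicit version of the paper's concluding sentence, and the parity bookkeeping for replaying $T_{i\to j}$ is glossed over to the same degree in both arguments.
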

\begin{proof}
Let us consider the following execution $\Xi' = C'_0$, $\xi'_1$, $C'_{u'_1}$, $\xi'_2$, $C'_{u'_2}$, $\xi'_3$, $C'_{u'_3}$, $\xi'_4$, $C'_{u'_4}$, $\xi'_5$, $\ldots$ of ${\cal P}$ with $G'$, where $\xi'_m$ is a sub-execution ($m \ge 1$).
\begin{itemize}
\item In $C'_0$, $\xi'_1$, $C'_{u'_1}$, until $s(v'_1)=sa$ and $s(v'_2)=sb$ hold, agents repeat the following interactions: $v'_1$ interacts with $v'_2$, $v'_2$ interacts with $v'_1$, $v'_1$ interacts with $v'_2$ $\ldots$.
From the definition of the transition sequence $T$, this is possible. 
\item In $C'_{u'_1}$, $\xi'_2$, $C'_{u'_2}$, until $s(v'_3)=sa$ and $s(v'_4)=sb$ hold, agents repeat the following interactions: $v'_3$ interacts with $v'_4$, $v'_4$ interacts with $v'_3$, $v'_3$ interacts with $v'_4$ $\ldots$.
Thus, $s(v'_1, C'_{u'_2})=s(v'_3, C'_{u'_2})= sa$ and $s(v'_2, C'_{u'_2})=s(v'_4, C'_{u'_2})= sb$ hold. 
\item For $i \ge 2$, we construct the execution as follows: 
\begin{itemize}
\item Case where $i \bmod 4 = 0$ holds: 
In $C'_{u'_i}$, $\xi'_{i+1}$, $C'_{u'_{i+1}}$, until $s(v'_1)=sa$ and $s(v'_2)=sb$ hold, agents repeat the following interactions: $v'_1$ interacts with $v'_2$, $v'_2$ interacts with $v'_1$, $v'_1$ interacts with $v'_2$ $\ldots$.
To satisfy weak fairness, we construct the interactions so that $v'_1$ and $v'_2$ interact at least twice. 
\item Case where $i \bmod 4 = 1$ holds: 
In $C'_{u'_i}$, $\xi'_{i+1}$, $C'_{u'_{i+1}}$, until $s(v'_2)=sb$ and $s(v'_3)=sa$ hold, agents repeat the following interactions: $v'_3$ interacts with $v'_2$, $v'_2$ interacts with $v'_3$, $v'_3$ interacts with $v'_2$ $\ldots$.
To satisfy weak fairness, we construct the interactions so that $v'_2$ and $v'_3$ interact at least twice. 
\item Case where $i \bmod 4 = 2$ holds: 
In $C'_{u'_i}$, $\xi'_{i+1}$, $C'_{u'_{i+1}}$, until $s(v'_3)=sa$ and $s(v'_4)=sb$ hold, agents repeat the following interactions: $v'_3$ interacts with $v'_4$, $v'_4$ interacts with $v'_3$, $v'_3$ interacts with $v'_4$ $\ldots$.
To satisfy weak fairness, we construct the interactions so that $v'_3$ and $v'_4$ interact at least twice. 
\item Case where $i \bmod 4 = 3$ holds: 
In $C'_{u'_i}$, $\xi'_{i+1}$, $C'_{u'_{i+1}}$, until $s(v'_4)=sb$ and $s(v'_1)=sa$ hold, agents repeat the following interactions: $v'_1$ interacts with $v'_4$, $v'_4$ interacts with $v'_1$, $v'_1$ interacts with $v'_4$ $\ldots$.
To satisfy weak fairness, we construct the interactions so that $v'_1$ and $v'_4$ interact at least twice. 
\end{itemize}
For $i \ge 2$, if $s(v'_1, C'_{u'_i})=s(v'_3, C'_{u'_i})= sa$ and $s(v'_2, C'_{u'_i})=s(v'_4, C'_{u'_i})= sb$ hold, we can construct such interactions from the definition of the transition sequence $T$. 
Thus, since $s(v'_1, C'_{u'_2})=s(v'_3, C'_{u'_2})= sa$ and $s(v'_2, C'_{u'_2})=s(v'_4, C'_{u'_2})= sb$ hold, $C'_{u'_i}=C'_{u'_{i+1}}$ holds for $i \ge 2$. 
\end{itemize}

Since each pair of agents interact infinitely often in $\Xi'$, $\Xi'$ is weakly-fair.
From these facts, since $\gamma(sa)=\gamma(sb)=yn$ holds, $\forall v' \in V':\gamma(s({v'}))=yn$ holds in a stable configuration in $\Xi'$. 
Thus, the lemma holds.
\end{proof}

Note that, even if the number of agents is given, Lemma \ref{lem:impweaN1} holds because $|V|=|V'|=4$ holds in the lemma. 
In Lemma \ref{lem:impweaN1}, $G$ is a line graph and a tree graph whereas $G'$ is neither a line graph nor a tree graph. 
Furthermore, $G'$ is a ring graph and a $2$-regular graph whereas $G$ is neither a ring graph nor a $2$-regular graph. 
Hence, by Lemma \ref{lem:impweaN1}, there is no protocol that solves the \emph{line}, \emph{ring}, \emph{tree}, or \emph{$k$-regular} identification problem, and thus we have the following theorem. 
\begin{theorem}
Even if the number of agents $n$ is given, there exists no protocol that solves the line, ring, $k$-regular, or tree identification problem with designated initial states under weak fairness.
\end{theorem}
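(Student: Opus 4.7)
The plan is to derive the theorem as an immediate corollary of Lemma~\ref{lem:impweaN1}. Fix $n=4$ and, for contradiction, suppose that some protocol $\pron{4}$ with designated initial states solves the line (resp.\ ring, $k$-regular, or tree) identification problem under weak fairness on all 4-agent communication graphs. The 4-agent line $G$ and the 4-agent ring $G'$ defined immediately before Lemma~\ref{lem:impweaN1} both satisfy $|V|=|V'|=4$, so running the \emph{same} protocol $\pron{4}$ on either of them is compatible with the assumption that the number of agents is known.

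Next, I would apply Lemma~\ref{lem:impweaN1} to $\pron{4}$. The weakly-fair execution $\Xi$ constructed on $G$ right before the lemma reaches a stable output value $yn\in\{yes,no\}$ by correctness of $\pron{4}$; the lemma then produces a weakly-fair execution $\Xi'$ of $\pron{4}$ on $G'$ that eventually stabilises with every agent outputting the same value $yn$. Since $\Xi'$ is weakly fair, correctness of $\pron{4}$ forces $yn$ to be the correct verdict on $G'$ as well.

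The contradiction follows from the mismatch between $G$ and $G'$. The graph $G$ is a line and hence a tree, while $G'$ is neither; conversely, $G'$ is a ring and a $2$-regular graph, while $G$ is neither. Hence, under each of the four properties, the required outputs on $G$ and $G'$ are opposite, contradicting the common value $yn$ derived in the previous paragraph. Running the argument separately for each property yields the impossibility for all four.

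I do not anticipate a genuine obstacle here, because essentially all of the work has been absorbed into Lemma~\ref{lem:impweaN1}. The only subtleties are (i) verifying that the initial-knowledge assumption ``$n$ is given'' is compatible with the lemma, which it is because both $G$ and $G'$ have exactly four agents, and (ii) handling the four target properties uniformly in a short case analysis on which pair of opposite verdicts is forced.
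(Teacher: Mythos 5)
Your proposal is correct and follows essentially the same route as the paper: the paper likewise derives the theorem directly from Lemma~\ref{lem:impweaN1}, noting that $|V|=|V'|=4$ makes the knowledge of $n$ harmless and that $G$ is a line/tree while $G'$ is a ring/$2$-regular graph, so the common stable verdict $yn$ contradicts correctness for each of the four properties. The only (shared) informality is that the $k$-regular case is witnessed concretely only for $k=2$, exactly as in the paper.
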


\paragraph*{Case of Bipartite}
Next, we show that there exists no protocol that solves the bipartite identification problem. 
For the purpose of the contradiction, we assume that there exists a protocol $\pron{6}$ that solves the bipartite identification problem with designated initial states under weak fairness if the number of agents $6$ is given.

We define a ring graph $G=(V, E)$ with three agents, a ring graph $G'=(V', E')$ with 6 agents, and a graph $G''=(V'', E'')$ with 6 agents as follows: 
\begin{itemize}
\item $V=\{v_1$, $v_2$, $v_3\}$ and $E=\{(v_1, v_2)$, $(v_2, v_3)$, $(v_3, v_1)\}$.
\item $V'=\{v'_1$, $v'_2$, $v'_3$, $v'_4$, $v'_5$, $v'_6\}$ and $E'=\{(v'_1, v'_2)$, $(v'_2, v'_6)$, $(v'_6, v'_4)$, $(v'_4, v'_5)$, $(v'_5, v'_3)$, $(v'_3, v'_1)\}$.
\item $V''=\{v''_1$, $v''_2$, $v''_3$, $v''_4$, $v''_5$, $v''_6\}$ and $E''=\{(v''_x,v''_y),(v''_{x+n},v''_{y+n}) \in V'' \times V'' \mid (v_x, v_y) \in E \} \cup \{(v''_1,v''_5)$, $(v''_1,v''_6)$, $(v''_4,v''_2)$, $(v''_4,v''_3)\}$.
\end{itemize}
Figure \ref{fig:impgbi} shows graphs $G$, $G'$, and $G''$.


From now, we show that there exists an execution $\Xi''$ of $\pron{6}$ with $G''$ such that all agents converge to $yes$ whereas $G''$ does not satisfy \emph{bipartite}.
To show this, we first show that, in any execution $\Xi$ of $\pron{6}$ with $G$ (i.e., the protocol for 6 agents is applied to a population consisting of 3 agents), all agents converge to $yes$. 
To prove this, we borrow the proof technique in~\cite{fischer2006self}. In~\cite{fischer2006self}, Fischer and Jiang proved the impossibility of leader election for a ring graph. 
\begin{lemma}
\label{lem:impweaN2}
In any weakly-fair execution $\Xi$ of $\pron{6}$ with $G$, all agents converge to $yes$. 
That is, in $\Xi$, there exists $C_{t}$ such that $\forall v \in V:\gamma(s(v,C_i))=yes$ holds for $i \ge t$. 
\begin{figure}[t]
\begin{center}
\includegraphics[scale=0.4]{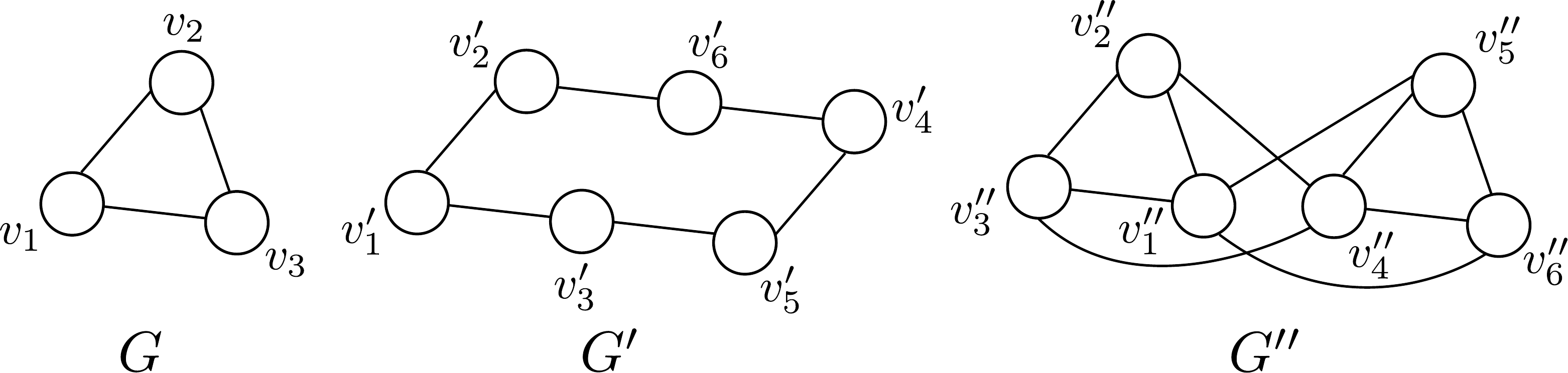}
\caption{Graphs $G$, $G'$, and $G''$}
\label{fig:impgbi}
\end{center}
\end{figure}
\end{lemma}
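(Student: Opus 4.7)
The plan is to use a covering (lifting) argument in the spirit of Fischer and Jiang~\cite{fischer2006self}. Define a $2$-to-$1$ map $\pi : V' \to V$ by $\pi(v'_1)=\pi(v'_4)=v_1$, $\pi(v'_2)=\pi(v'_5)=v_2$, $\pi(v'_3)=\pi(v'_6)=v_3$. A direct inspection of $E' = \{(v'_1,v'_2),(v'_2,v'_6),(v'_6,v'_4),(v'_4,v'_5),(v'_5,v'_3),(v'_3,v'_1)\}$ shows that $\pi$ maps each edge of $G'$ to an edge of $G$ and that every edge of $G$ has exactly two preimages, whose endpoint sets are disjoint: $(v_1,v_2)$ lifts to $(v'_1,v'_2)$ and $(v'_4,v'_5)$; $(v_2,v_3)$ lifts to $(v'_2,v'_6)$ and $(v'_5,v'_3)$; $(v_1,v_3)$ lifts to $(v'_3,v'_1)$ and $(v'_6,v'_4)$. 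So $\pi$ is a double cover, and critically $G'$ is a $6$-cycle, hence bipartite.

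Given an arbitrary weakly-fair execution $\Xi = C_0, C_1, \ldots$ of $\pron{6}$ on $G$, I would lift it to an execution $\Xi' = C'_0, C'_1, \ldots$ of $\pron{6}$ on $G'$ by replacing each step $C_i \to C_{i+1}$, in which initiator $v_x$ interacts with responder $v_y$, by two consecutive steps on the two $\pi$-preimages of $(v_x,v_y)$, using the corresponding initiator/responder assignment on each. Because initial states are designated, $s(v'_j, C'_0) = s(v_{\pi(v'_j)}, C_0)$ trivially. I then prove by induction on $i$ the invariant $s(v'_j, C'_{2i}) = s(v_{\pi(v'_j)}, C_i)$ for every $j$: in particular $s(v'_j) = s(v'_{j'})$ whenever $\pi(v'_j) = \pi(v'_{j'})$. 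The inductive step uses determinism of the transition function and the disjointness of the two lifted edges; the two lifted interactions therefore reproduce the same local state change independently on the two ``copies,'' preserving both the pair-symmetry inside $\Xi'$ and the correspondence with $\Xi$. Weak fairness of $\Xi'$ follows from weak fairness of $\Xi$: every $\pi$-preimage edge in $E'$ is exercised exactly once, in each direction, per exercise of its image edge in $E$.

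Now I invoke the correctness of $\pron{6}$: since $G'$ is bipartite on $6$ agents and $\pron{6}$ solves the bipartite identification problem for $n=6$, the weakly-fair execution $\Xi'$ contains a stable configuration after which every agent outputs $yes$ in every subsequent configuration. Pick $T$ with this property and any $t \geq \lceil T/2 \rceil$; the symmetry invariant at even indices then gives $\gamma(s(v_k, C_i)) = \gamma(s(v'_k, C'_{2i})) = yes$ for every $k$ and every $i \geq t$, which is the desired conclusion. The one delicate point, and the step most likely to require care, is the bookkeeping of initiator/responder assignments in the lift so that the transition function (which acts on \emph{ordered} pairs) yields matching local states on both copies; once that is handled, determinism makes the symmetry automatic and the bipartite correctness of $\pron{6}$ on $G'$ closes the argument.
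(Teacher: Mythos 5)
Your proposal is correct and takes essentially the same route as the paper: the paper's proof constructs exactly this double-cover lift of $\Xi$ onto the $6$-cycle $G'$ (interleaving the two preimage interactions of each step), proves the same symmetry invariant $s(v_i,C_r)=s(v'_i,D_r)=s(v'_{i+3},D_r)$ by induction on the configuration index, and then invokes the correctness of $\pron{6}$ on the bipartite graph $G'$. Your explicit attention to the initiator/responder bookkeeping and to weak fairness of the lifted execution only makes explicit what the paper handles implicitly.
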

\begin{proof}

Let $\Xi = C_0$, $C_1$, $C_2$ be a weakly-fair execution of $\pron{6}$ with $G$. 
Let us consider the following execution $\Xi' =D_0$, $D'_0$, $D_1$, $D'_1$ $\ldots$ of $\pron{6}$ with $G'$. 

\begin{itemize}
\item For $x$ and $y$ such that either $x = 1$ or $y = 1$ holds, when $v_x$ interacts with $v_y$ at $C_i \rightarrow C_{i+1}$, $v'_x$ interacts with $v'_y$ at $D_i \rightarrow D'_{i}$, and $v'_{x+3}$ interacts with $v'_{y+3}$ at $D'_i \rightarrow D_{i+1}$.
\item For $x \in \{2, 3\}$ and $y \in \{2, 3\}$ such that $x \neq y$ holds, when $v_x$ interacts with $v_y$ at $C_i \rightarrow C_{i+1}$, $v'_x$ interacts with $v'_{y+3}$ at $D_i \rightarrow D'_{i}$, and $v'_{x+3}$ interacts with $v'_y$ at $D'_i \rightarrow D_{i+1}$.

\end{itemize}

For a configuration $C$ of $G$ and a configuration $D$ of $G'$, if $s(v_i, C) = s(v'_i, D) = s(v'_{i+3}, D)$ holds for $i$ $(1 \le i \le 3)$, we say that $C$ and $D$ are $equivalent$.

From now, we show, by induction on the index of configuration, that $C_r$ and $D_r$ are equivalent for any $r \ge 0$.
Since clearly $C_0$ and $D_0$ are equivalent, the base case holds.
For the induction step, we assume that $C_l$ and $D_l$ are equivalent, and then consider two cases of interaction at $C_l \rightarrow C_{l+1}$.

First we consider the case where, for $x$ and $y$ such that either $x = 1$ or $y = 1$ holds, agents $v_x$ interacts with $v_y$ at $C_l \rightarrow C_{l+1}$.
In this case, $v'_x$ interacts with $v'_y$ at $D_l \rightarrow D'_{l}$, and $v'_{x+3}$ interacts with $v'_{y+3}$ at $D'_l \rightarrow D_{l+1}$.
By the induction assumption, $s(v_x, C_l) = s(v'_x, D_l) = s(v'_{x+3}, D_l)$ and $s(v_y, C_l) = s(v'_y, D_l) = s(v'_{y+3}, D_l)$ hold.
Hence, agents $v'_{x}$ and $v'_{x+3}$ (resp., $v'_y$ and $v'_{y+3}$) update their states similarly to $v_x$ (resp., $v_y$), and thus $C_{l+1}$ and $D_{l+1}$ are equivalent in this case.

Next, we consider the case where, for $x \in \{2, 3\}$ and $y \in \{2, 3\}$ such that $x \neq y$ holds, agents $v_x$ interacts with $v_y$ at $C_l \rightarrow C_{l+1}$.
In this case, $v'_x$ interacts with $v'_{y+3}$ at $D_l \rightarrow D'_{l}$, and $v'_{x+3}$ interacts with $v'_y$ at $D'_l \rightarrow D_{l+1}$. 
By the induction assumption, $s(v_x, C_l) = s(v'_x, D_l) = s(v'_{x+3}, D_l)$ and $s(v_y, C_l) = s(v'_y, D_l) = s(v'_{y+3}, D_l)$ hold.
Hence, agents $v'_x$ and $v'_{x+3}$ (resp., $v'_y$ and $v'_{y+3}$) update their states similarly to $v_x$ (resp., $v_y$), and thus $C_{l+1}$ and $D_{l+1}$ are equivalent in this case.
Thus, $C_r$ and $D_r$ are equivalent for any $r \ge 0$.

In $\Xi'$, since the number of agents is given correctly, a stable configuration exists, and $\forall v' \in V':\gamma(s({v'}))=yes$ holds in the configuration because $G'$ satisfies \emph{bipartite}. 
Since $C_r$ and $D_r$ are equivalent for any $r \ge 0$, $\forall v \in V:\gamma(s(v))=yes$ holds after some configuration in $\Xi$. 
Thus, the lemma holds.
\end{proof}

Now, we show that there exists execution $\Xi''$ of $\pron{6}$ with $G''$ such that all agents converge to $yes$.

\begin{lemma}
\label{lem:impweaN2-2}
With the designated initial states, there exists a weakly-fair execution $\Xi''$ of $\pron{6}$ with $G''$ such that $\forall v'' \in V'':\gamma(s({v''}))=yes$ in a stable configuration. 
\end{lemma}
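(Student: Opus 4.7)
The plan is to combine Lemma \ref{lem:impweaN2} with Lemma \ref{lem:impwea}. First I would verify that $G''$ is exactly the graph $G'$ produced by the doubling construction of Lemma \ref{lem:impwea} when applied to $G$. Since $G$ is the triangle on $\{v_1,v_2,v_3\}$ (so $n=3$), the doubling yields two disjoint copies of the triangle together with, for each edge $(v_1,v_z)\in E$, the two cross edges $(v'_1,v'_{z+3})$ and $(v'_{4},v'_z)$. Because $v_1$ is adjacent to $v_2$ and $v_3$, these cross edges are precisely $(v'_1,v'_5)$, $(v'_4,v'_2)$, $(v'_1,v'_6)$, and $(v'_4,v'_3)$, matching the definition of $E''$ exactly. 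Hence Lemma \ref{lem:impwea} is directly applicable to $\pron{6}$ on the pair $(G,G'')$.

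Next I would invoke Lemma \ref{lem:impweaN2} to pick any weakly-fair execution $\Xi$ of $\pron{6}$ with $G$; it contains a configuration $C$ after which every agent outputs $yes$. Feeding $\Xi$ and $yn=yes$ into Lemma \ref{lem:impwea} immediately provides a weakly-fair execution $\Xi''$ of $\pron{6}$ with $G''$ together with a configuration $C''$ after which $\forall v''\in V'':\gamma(s(v''))=yes$.

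Finally I would upgrade ``eventually $yes$'' to ``$yes$ in a stable configuration''. Since we are assuming for contradiction that $\pron{6}$ solves the bipartite identification problem and $|V''|=6$, every weakly-fair execution of $\pron{6}$ with $G''$ contains a stable configuration. Let $C_s$ be such a configuration of $\Xi''$, with common output $yn^{*}$. If $C_s$ appears at or after $C''$, then $yn^{*}=yes$ directly. Otherwise $C''$ is reachable from $C_s$ along $\Xi''$, and the stability of $C_s$ forces every agent to output $yn^{*}$ in $C''$, yielding $yn^{*}=yes$ again. Either way, $\Xi''$ is a weakly-fair execution of $\pron{6}$ with $G''$ containing a stable configuration in which every agent outputs $yes$, which is what the lemma asserts.

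The argument is almost entirely bookkeeping on top of the previous lemmas; the only non-routine step is the recognition that $G''$ coincides with the doubled graph of Lemma \ref{lem:impwea} for the triangle $G$, after which the two lemmas chain together without complication.
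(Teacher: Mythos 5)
Your proposal is correct and follows essentially the same route as the paper: invoke Lemma \ref{lem:impweaN2} to obtain a weakly-fair execution on the triangle $G$ converging to $yes$, observe that $G''$ is exactly the doubled graph of Lemma \ref{lem:impwea} applied to $G$, and chain the two lemmas. Your explicit verification of the cross edges and your final step upgrading ``all agents output $yes$ from some point on'' to ``$yes$ in a stable configuration'' are details the paper leaves implicit, and they are handled correctly.
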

\begin{proof}
By Lemma \ref{lem:impweaN2}, there exists a weakly-fair execution of $\pron{6}$ with $G$ such that all agents converge to $yes$ in the execution even if $2n$ is given as the number of agents whereas the number of agents is $n$. 
This implies that we can apply Lemma \ref{lem:impwea} to protocol $\pron{6}$ and graphs $G$ and $G''$. This is because $G$ and $G''$ satisfy the condition of $G$ and $G'$ in Lemma \ref{lem:impwea}, and the protocol $\pron{6}$ satisfies the condition of protocol ${\cal P}$ in Lemma \ref{lem:impwea}. 
Hence, there exists a weakly-fair execution $\Xi''$ of $\pron{6}$ with $G''$ such that $\forall v'' \in V':f(s(v''))=yes$ holds in a stable configuration of $\Xi''$. 
Thus, the lemma holds. 
\end{proof}

Graph $G''$ does not satisfy \emph{bipartite}. 
Thus, from Lemma \ref{lem:impweaN2-2}, $\pron{6}$ is incorrect. 
Therefore, we have the following theorem. 
\begin{theorem}
Even if the number of agents $n$ is given, there exists no protocol that solves the bipartite identification problem with the designated initial states under weak fairness.
\end{theorem}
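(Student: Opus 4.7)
My plan is to argue by contradiction: assume that for $n=6$ there is a protocol $\pron{6}$ that solves the bipartite identification problem with designated initial states under weak fairness, and exhibit the graph $G''$ defined just before the theorem as a concrete witness of failure. Since the theorem claims impossibility for every $n$, it is enough to refute the case $n=6$; the six vertices are exactly what is needed so that $G$, $G'$, and $G''$ already constructed for Lemmas~\ref{lem:impweaN2} and~\ref{lem:impweaN2-2} are directly usable.

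First I would invoke Lemma~\ref{lem:impweaN2-2}, which (under the assumption that $\pron{6}$ is the assumed correct protocol on six-agent populations) already produces a weakly-fair execution $\Xi''$ of $\pron{6}$ on $G''$ whose stable configuration satisfies $\forall v''\in V'':\gamma(s(v''))=yes$. Second, I would verify that $G''$ is not bipartite: from the definition of $E''$, the three edges $(v''_1,v''_2)$, $(v''_2,v''_3)$, $(v''_3,v''_1)$ inherited from the triangle $G$ already form an odd cycle of length three, so $G''$ fails property \emph{bipartite}. A correct bipartite identification protocol must converge to every agent outputting $no$ on $G''$, which contradicts the $yes$-converging execution $\Xi''$ produced in the first step. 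The assumed $\pron{6}$ therefore cannot exist, proving the theorem.

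The substantive work has already been done in the preceding lemmas: the doubling simulation of Lemma~\ref{lem:impwea}, the argument of Lemma~\ref{lem:impweaN2} showing that three agents on the triangle $G$ running $\pron{6}$ are indistinguishable (via a paired simulation) from the bipartite six-cycle $G'$ and hence must also converge to $yes$, and the composition in Lemma~\ref{lem:impweaN2-2} that transfers this $yes$-converging behavior to $G''$ through Lemma~\ref{lem:impwea}. The only new ingredient the theorem itself requires is the short combinatorial verification that $G''$ inherits a triangle from $G$; that is the easiest step, and I expect no real obstacle beyond spelling out the contradiction cleanly.
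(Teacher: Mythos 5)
Your proposal is correct and matches the paper's own argument: the paper likewise derives the theorem directly from Lemma~\ref{lem:impweaN2-2} (built on Lemmas~\ref{lem:impwea} and~\ref{lem:impweaN2}) together with the observation that $G''$ retains the triangle $v''_1v''_2v''_3$ and hence is not bipartite. No gaps.
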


\subsection{Impossibility with Arbitrary Initial States}
In this subsection, we show that, even if the number of agents $n$ is given, there exists no protocol that solves the \emph{line}, \emph{ring}, \emph{$k$-regular}, \emph{star}, \emph{tree}, or \emph{bipartite} identification problem with arbitrary initial states under global fairness. 

For the purpose of the contradiction, we assume that there exists a protocol ${\cal P}$ that solves some of the above graph class identification problems with arbitrary initial states under global fairness if the number of agents $n$ is given.
From now, we show that there are two executions $\Xi$ and $\Xi'$ of ${\cal P}$ such that the decision of all agents in the executions converges to the same value whereas $\Xi$ and $\Xi'$ are for graphs $G$ and $G' (\neq G)$, respectively.
\begin{lemma}
\label{lem:imparb}
Let $G=(V, E)$ and $G'=(V', E')$ be connected graphs that satisfy the following condition, where $V=\{v_1$, $v_2$, $v_3$, $\ldots$, $v_n\}$ and $V'=\{v'_1$, $v'_2$, $v'_3$, $\ldots$, $v'_n\}$.
\begin{itemize}
\item For some edge $(v_{\alpha},v_{\beta})$ in $E$, $E'= \{(v'_x,v'_y) \in V' \times V' \mid (v_x, v_y) \in E \} \backslash \{(v'_{\alpha},v'_{\beta})\}$.
\end{itemize}
If there exists a globally-fair execution $\Xi$ of ${\cal P}$ with $G$ such that $\forall v \in V:\gamma(s(v))= yn \in \{yes$, $no\}$ holds in a stable configuration of $\Xi$, there exists a globally-fair execution $\Xi'$ of ${\cal P}$ with $G'$ such that $\forall v' \in V':\gamma(s({v'}))= yn$ holds in a stable configuration of $\Xi'$.
\end{lemma}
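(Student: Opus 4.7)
The plan is to exploit the hypothesis of arbitrary initial states, together with the monotonicity of reachability under edge deletion. Since $\Xi$ contains a stable configuration $C_t$ in which every agent outputs $yn$, I would define the initial configuration $C'_0$ of $\Xi'$ on $G'$ to agree with $C_t$ under the natural bijection $v'_i \leftrightarrow v_i$; that is, put $s(v'_i, C'_0) := s(v_i, C_t)$ for each $i$. This choice is permitted precisely because initial states are arbitrary, so no constraint is violated by choosing $C'_0$ to coincide with a non-initial configuration of another population.

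Next I would extend $C'_0$ to a globally-fair execution $\Xi'$ of ${\cal P}$ on $G'$. The existence of such an execution is a standard fact about the model: the configuration space is finite, so one can diagonalize over all pairs $(C, C^*)$ with $C \rightarrow C^*$ in $G'$ to guarantee that each required transition is eventually scheduled whenever its source configuration reappears.

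The crucial observation is that $E' \subseteq E$: every single-step transition of ${\cal P}$ on $G'$ is also a legal single-step transition of ${\cal P}$ on $G$ once we identify each $v'_i$ with $v_i$. By induction on the length of a prefix of $\Xi'$, every configuration $C^{\dagger}$ reachable from $C'_0$ in $\Xi'$ corresponds, under this identification, to a configuration reachable from $C_t$ in $G$. The stability of $C_t$ in $\Xi$ then forces $\gamma(s(v, C^{\dagger})) = yn$ for every agent $v$. In particular, $C'_0$ itself is already a stable configuration of $\Xi'$ at which every agent outputs $yn$, which gives the desired conclusion.

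I do not anticipate a serious obstacle here: once the initial configuration of $\Xi'$ is chosen to mirror $C_t$, the argument reduces to the simple monotonicity of the reachability relation under edge deletion combined with the routine construction of a globally-fair schedule. The only point that deserves care is noting that $C'_0$ qualifies as a stable configuration in the sense of the definition, which follows immediately from the inductive correspondence of reachable configurations between the two populations.
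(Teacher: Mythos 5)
Your proof is correct and rests on exactly the two ingredients the paper uses: (i) exploiting arbitrary initial states to start the $G'$-population in the configuration that mirrors the stable configuration $C_t$ of $\Xi$, and (ii) the observation that $E'\subset E$ lets every interaction on $G'$ be replayed on $G$, so every configuration reachable from $C'_0$ on $G'$ is the image of a configuration reachable from $C_t$ on $G$. The difference is in the packaging: the paper argues by contradiction, assuming no $G'$-execution converges to $yn$, letting $\Xi'$ run to a stable configuration with output $yn'\neq yn$, and then replaying that suffix on $G$ after $C_t$ to contradict the stability of $C_t$. You instead conclude directly that $C'_0$ is already stable with output $yn$. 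Your route is slightly cleaner and, notably, does not need the standing assumption that ${\cal P}$ actually solves the problem on $G'$ (the paper's proof implicitly uses that assumption when it posits that $\Xi'$ reaches a stable configuration $C'_{t'}$). One small point of precision: stability of $C'_0$ is a statement about \emph{all} configurations reachable from $C'_0$ on $G'$, not only those occurring along the particular execution $\Xi'$; your induction should therefore be phrased over arbitrary finite transition sequences from $C'_0$ rather than over prefixes of $\Xi'$. The same inductive step applies verbatim, so this is a wording issue rather than a gap.
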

\begin{proof}
Let $\Xi = C_0$, $C_1$, $C_2$, $\ldots$ be a globally-fair execution of ${\cal P}$ with $G$ such that $\forall v \in V:\gamma(s(v))= yn \in \{yes$, $no\}$ holds in a stable configuration. Let $C_t$ be a stable configuration in $\Xi$.
For the purpose of the contradiction, we assume that there exists no execution of ${\cal P}$ with $G'$ such that $\forall v' \in V':\gamma(s({v'}))=yn$ holds in a stable configuration. 

Let us consider an execution $\Xi'=C'_0$, $C'_1$, $C'_2$, $\ldots$, $C'_{t'}$, $\ldots$ of ${\cal P}$ with $G'$ as follows:
\begin{itemize}
\item For $1 \le i \le n$, $s(v'_i, C'_0)=s(v_i, C_t)$ holds.
\item $C'_{t'}$ is a stable configuration. 
\end{itemize}
By the assumption, $\exists v'_z \in V':\gamma(s(v'_z, C'_{t'}))= yn' (\neq yn)$ holds. 

Next, let us consider an execution $\Xi''=C''_0$, $C''_1$, $C''_2$, $\ldots$, $C''_{t}$, $\ldots$ of ${\cal P}$ with $G$ as follows:
\begin{itemize}
\item For $0 \le i \le t$, $C''_i=C_i$ holds (i.e., agents behave similarly to $\Xi$).
\item For $t < i \le t+t'$, when $v'_x$ interacts with $v'_y$ at $C'_{i-t-1} \rightarrow C'_{i-t}$, $v_x$ interacts with $v_y$ at $C''_{i-1} \rightarrow C''_{i}$.
This is possible because $E' \subset E$ holds. 
\end{itemize}
Since $C_t$ is a stable configuration, $C''_{t}$ is also a stable configuration and $\forall v \in V:\gamma(s(v, C''_t))= yn$ holds.
Since agents behave similarly to $\Xi'$ after $C''_{t}$,  $\gamma(s(v_z, C''_{t+t'}))= yn'$ holds. 
This contradicts the fact that $C''_{t}$ is a stable configuration.
\end{proof}

We can construct a non-line graph, a non-ring graph, a non-star graph, and a non-tree graph by adding an edge to a line graph, a ring graph, a star graph, and a tree graph, respectively. 
Moreover, we can construct a bipartite graph, and a $k$-regular graph by adding an edge to some non-bipartite graph, and some non-$k$-regular graph, respectively. 
From Lemma \ref{lem:imparb}, there is a case where the decision of all agents converges to the same value for each pair of graphs. 
Therefore, we have the following theorem. 
\begin{theorem}
There exists no protocol that solves the line, ring, $k$-regular, star, tree, or bipartite identification problem with arbitrary initial states under global fairness. 
\end{theorem}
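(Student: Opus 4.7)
The plan is to apply Lemma~\ref{lem:imparb} once per property. For each property $gp$ it suffices to exhibit two connected $n$-vertex graphs $G$ and $G'$, with $G'$ obtained from $G$ by deleting a single edge, such that exactly one of them satisfies $gp$. Given a hypothetical protocol ${\cal P}$ that solves the $gp$ identification problem under global fairness with arbitrary initial states, every globally-fair execution of ${\cal P}$ must stabilize on the correct verdict. Lemma~\ref{lem:imparb} then transports a converging-to-$yn$ execution on $G$ into a converging-to-$yn$ execution on $G'$; choosing the pair so that the correct verdicts on $G$ and $G'$ disagree yields an immediate contradiction.

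The concrete pairs $(G,G')$ I would use are as follows. Take $G$ to be the cycle $C_n$ and $G'$ to be the path $P_n$ obtained by deleting any one edge of $C_n$; since the deleted edge lies on the cycle, $G'$ is connected. This single pair handles \emph{line}, \emph{ring}, and \emph{tree}: $G$ is a ring with a cycle and is not a line, while $G'$ is a line and a tree and is not a ring. The same pair with $n$ odd also handles \emph{bipartite}, since $C_n$ is non-bipartite for odd $n$ whereas $P_n$ is always bipartite. For \emph{star}, take $G'=K_{1,n-1}$ with $n\ge 4$ and let $G$ add one edge between two leaves of $G'$; then $G'$ is a star, whereas in $G$ the two chosen leaves have degree two and so $G$ is not a star. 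For \emph{$k$-regular}, assume $2\le k\le n-2$ and fix any connected $k$-regular graph $G'$ on $n$ vertices containing two non-adjacent vertices $u,v$ (for instance $C_n$ when $k=2$; a suitable circulant for $k\ge 3$ with $n$ large enough), and let $G$ add the edge $uv$; then $G'$ is $k$-regular while $u$ and $v$ have degree $k+1$ in $G$, so $G$ is not $k$-regular.

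With each pair in hand the argument is mechanical. Fix a pair in which, say, $G$ violates $gp$ and $G'$ satisfies $gp$. Correctness of ${\cal P}$ on $G$ guarantees a globally-fair execution that stabilizes with every agent outputting $no$. Lemma~\ref{lem:imparb} then supplies a globally-fair execution on $G'$ stabilizing with every agent outputting $no$, contradicting the requirement that the stable output on $G'$ be $yes$.

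The only point requiring attention is combinatorial: verifying that each of the six constructions really meets the hypothesis of Lemma~\ref{lem:imparb}, namely, that both graphs are connected and differ by exactly one edge. This is immediate once one observes that in every construction above the deleted edge lies on a cycle of $G$ (the cycle itself for $C_n\to P_n$, the triangle formed by the center and the two chosen leaves for the star case, and the chord together with any $u$--$v$ path in $G'$ for the $k$-regular case), so deletion preserves connectivity. All of the heavy analytic lifting—patching a stable prefix on $G$ with a continued execution on $G'$—has already been absorbed into the proof of Lemma~\ref{lem:imparb}, so no further technical obstacle remains.
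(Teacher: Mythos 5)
Your proposal is correct and follows essentially the same route as the paper: both reduce the theorem to Lemma~\ref{lem:imparb} by exhibiting, for each property, a pair of connected graphs differing in a single edge on which the correct verdicts disagree. Your write-up is merely more explicit about the concrete pairs (and, for \emph{$k$-regular}, orients the pair with $G'$ regular rather than $G$, which is an immaterial variation since the lemma applies either way).
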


\section{Concluding Remarks}
In this paper, we consider the graph class identification problems on various assumptions.
With designated initial states, we propose graph class identification protocols for trees, $k$-regular graphs, and stars. In particular, the star identification protocol works even under weak fairness. 
On the other hand, for lines, rings, $k$-regular graphs, trees, or bipartite graphs, we prove that there is no protocol under weak fairness. 
With arbitrary initial states, we prove that there is no graph class identification protocol for lines, rings, $k$-regular graphs, stars, trees, or bipartite graphs. 
We have interesting open problems for future researches as follows:
\begin{itemize}
 \item What is the time complexity of graph class identification problems?
 \item Are there some graph class identification protocols for other graph properties such as lattice, torus, and cactus?
\end{itemize}

\bibliographystyle{plain}
\bibliography{ref}

\end{document}